\newcommand{\blind}{1}
\newtheorem{theorem}{Theorem}[section]
\newtheorem{lemma}{Lemma}[section]
\newtheorem{corollary}{Corollary}[section]
\newtheorem{assumption}{Assumption}[section]
\newtheorem{example}{Example}[section]
\newtheorem{proposition}{Proposition}[section]
\theoremstyle{remark}
\newtheorem*{remark}{Remark}
\newtheorem*{remark*}{Remark}
\newcommand{\comment}[1]{}
\newcommand{\zerodisplayskips}{%
  \setlength{\abovedisplayskip}{0pt}%
  \setlength{\belowdisplayskip}{0pt}%
  \setlength{\abovedisplayshortskip}{0pt}%
  \setlength{\belowdisplayshortskip}{0pt}}
\appto{\normalsize}{\zerodisplayskips}
\appto{\small}{\zerodisplayskips}
\appto{\footnotesize}{\zerodisplayskips}
\begin{document}

\def\spacingset#1{\renewcommand{\baselinestretch}%
{#1}\small\normalsize} \spacingset{1}


\if1\blind
{
  \title{\bf A Distributionally Robust Optimization Framework for Extreme Event Estimation}
  \author{Yuanlu Bai, Henry Lam, Xinyu Zhang\thanks{
    The authors gratefully acknowledge support from the National Science Foundation under grants CAREER CMMI-1834710 and IIS-1849280.}
    \\
  Department of Industrial Engineering and Operations Research\\
  Columbia University\\
  New York, NY 10025, USA\\}
  \date{}
  \maketitle

} \fi

\if0\blind
{
  \bigskip
  \bigskip
  \bigskip
  \begin{center}
    {\LARGE\bf A Distributionally Robust Optimization Framework for Extreme Event Estimation}
\end{center}
  \medskip
} \fi

\bigskip
\begin{abstract}
Conventional methods for extreme event estimation rely on well-chosen parametric models asymptotically justified from extreme value theory (EVT). These methods, while powerful and theoretically grounded, could however encounter a difficult bias-variance tradeoff that exacerbates especially when data size is too small, deteriorating the reliability of the tail estimation. In this paper, we study a framework based on the recently surging literature of distributionally robust optimization. This approach can be viewed as a nonparametric alternative to conventional EVT, by imposing general shape belief on the tail instead of parametric assumption and using worst-case optimization as a resolution to handle the nonparametric uncertainty. We explain how this approach bypasses the bias-variance tradeoff in EVT. On the other hand, we face a conservativeness-variance tradeoff which we describe how to tackle. We also demonstrate computational tools for the involved optimization problems and compare our performance with conventional EVT across a range of numerical examples.

\end{abstract}

\noindent%
{\it Keywords:} extreme event estimation, distributionally robust optimization, Choquet theory, shape constraint, semi-definite programming, conservativeness

\spacingset{1.9} 
\section{Introduction}
\label{sec:intro}
Extreme event estimation is ubiquitous in risk assessment and planning for a wide range of problems, from the century-old coastal flooding, earthquake, financial crisis, to most recently pandemic outbreak. 
In environmental sciences for instance, river flow and wave height data are analyzed to model and predict floods \citep{davisonModelsExceedancesHigh1990}. In engineering, system reliability is assessed via estimating failure or crash probabilities \citep{heidelbergerFastSimulationRare1995,nicolaFastSimulationHighly1993,jonasson2014internal,zhao2016accelerated}. In insurance and finance, pricing and risk management is informed by the prediction of major losses \citep{beirlantModelingLargeClaims1992, embrechtsModellingExtremalEvents1997,glassermanLargeDeviationsMultifactor2007, glassermanFastSimulationMultifactor2008, alexanderj.mcneilQuantitativeRiskManagement2015}. 


A beginning challenge in extreme event estimation is that, by the very definition, there are typically few observations in the data that provide direct information on the tail of the distribution. To this end, the prominent approach is to use extreme value theory (EVT). This approach suggests parametric models for extrapolating tails based on asymptotic theory, and consists of two main line of methods. The first is the block-maxima method \citep{emiljuliusgumbelStatisticsExtremes1958}, which is based on the celebrated Fisher-Tippet-Gnedenko theorem  \citep{fisherLimitingFormsFrequency1928, gnedenkoDistributionLimiteTerme1943} which stipulates that the maximum of i.i.d. random variables $X_i$, under maximum-domain-of-attraction assumptions and suitable normalization, converges in distribution to the generalized extreme value distribution (GEV) as the sample size grows. GEV includes conveniently only three possible more specific distributions: Gumbel, Frechet and  Weibull. Once we fit the maximum of $n$ observations, say $M_n$, we can infer the tail of the data distribution from the relation $\mathbb{P}(M_n\leq x)=\prod_{i=1}^n\mathbb{P}(X_i\leq x)$. This approach is justified as there are natural scenarios where data are collected as the maxima over certain time periods \citep{davisonModelsExceedancesHigh1990}, otherwise one could group the data into blocks and obtain the maximum within each block (e.g., \citet{embrechtsModellingExtremalEvents1997} Section 8.1.2, \citet{alexanderj.mcneilQuantitativeRiskManagement2015} Section 7.1.4).
The second line of methods is the peak-over-threshold (POT) \citep{richardl.smithThresholdMethodsSample1984}. This is based on the Pickands-Balkema-de Haan theorem \citep{balkemaResidualLifeTime1974,jamespickandsiiiStatisticalInferenceUsing1975} which states, under the same domain-of-attraction assumption as the Fisher-Tippet-Gnedenko theorem, that the distribution function of the so-called excess loss above a threshold converges to the generalized Pareto distribution (GPD) as this threshold increases. Thus, GPD serves as a justified choice to fit the tail portion of data. For both lines of method, the involved parametric estimation in either the GEV or the GPD, as well as other implementation details, have been substantially studied, including for example maximum likelihood \citep{smithMaximumLikelihoodEstimation1985}, probability-weighted moments \citep{hoskingParameterQuantileEstimation1987} and the Hill estimator \citep{davisTailEstimatesMotivated1984, hillSimpleGeneralApproach1975}. Lastly, besides EVT, there are also a range of exploratory ``goodness-of-fit'' tools available for tail modeling such as the quantile plot, mean excess plot and max-sum-ratio plots \citep{embrechtsModellingExtremalEvents1997}. 






Despite the theoretical justification and practical usefulness of EVT, it could face difficulties in balancing the intrinsic bias-variance tradeoff. More concretely, the block-maxima method generally requires choosing the block size of data to obtain the maxima. The larger is the number of blocks, the larger number of maxima can be used to estimate the parameters in the GEV and hence the smaller the estimation variance, but this would unavoidably reduce the sample size in each block and lead to larger bias. Similarly, POT requires selecting the threshold level for defining the excess loss. If the threshold is chosen to be low, then more data are above the threshold which can be used to estimate the parameters in the GPD and leads to a smaller estimation variance, but this would cause a larger bias due to the inadequacy of the asymptotic approximation. Choosing the block size or threshold level could intricately depend on the higher-order behaviors of the asymptotic theory \citep{smithEstimatingTailsProbability1987,bladtThresholdSelectionTrimming2020}. Moreover, when the data size is small, it could happen that no choice of block size or threshold level could make both bias and variance small enough simultaneously, thus leading to a significant overall estimation error. 

Motivated by the above challenges in conventional EVT, in this paper we propose an alternative approach for extreme event estimation via the recently surging tool of \emph{distributionally robust optimization (DRO)} \citep{delageDistributionallyRobustOptimization2010,gohDistributionallyRobustOptimization2010,danielkuhnWassersteinDistributionallyRobust2019}. DRO originates as a method for optimization under uncertainty and can be viewed as a generalization of classical robust optimization (RO) \citep{ben-talRobustConvexOptimization1998,bertsimasPriceRobustness2004, ben-talRobustOptimization2009}. When making decision in problems containing uncertain or ambiguous parameters, RO advocates the optimization of decision under the worst-case scenario, where the worst-case is over a feasible region called the uncertainty set or ambiguity set that postulates the likely value of the parameters. It thus often involves a minimax problem where the inner maximization is to compute the worst-case parameter value. When the parameter in the problem is the underlying probability distribution in a stochastic problem, then one would compute the worst-case distribution, in which case it becomes DRO. Here, in this work, we will take a more liberal view of DRO to refer to it as the computation of the worst-case distribution or the resulting worst-case value, not necessarily involving a minimax problem.

Our DRO operates as a nonparametric alternative to EVT as follows. Instead of extrapolating tail using asymptotically justified parametric assumptions such as GEV or GPD, we make stylized geometric shape assumptions on the tail. Examples of these geometric assumptions include monotonicity and convexity, which are intentionally mild and cover not only the tails of GEV and GPD but essentially all common parametric tail distributions. The challenge, however, is that these mild geometric assumptions do not locate specific tail models, or in other words there could be many ways to extrapolate tails under these premises. This is where the worst-case notion in DRO kicks in -- Suppose we are interested in estimating a target extremal quantity such as tail probability, we could compute, among all the extrapolated tail that satisfies the geometric assumptions, the one that give rise to the worst-case value of the target extremal quantity. This therefore comes down to solving an optimization over the space of tail distributions under shape constraints and other auxiliary conditions to ensure the consistency of extrapolation, where these constraints comprise precisely the uncertainty set in the DRO framework. We will demonstrate that, when these constraints are correctly calibrated, the resulting worst-case value would give rise to statistically correct bounds on the extremal quantity.

The above DRO framework bypasses the bias-variance tradeoff in EVT in the sense that we no longer require asymptotic distributional approximation, thus free of model misspecifications due to the use of parametric models. However, instead of getting a consistent estimator, the DRO approach gives bounds on the target extremal quantity, which poses a challenge of conservativeness. That is, the generated bounds, while correct, could be loose. More concretely, as in the POT approach, our DRO approach would need to select a threshold that defines the tail region. In POT, this threshold choice faces a bias-variance tradeoff where the bias comes from the approximation error using GPD while the variance comes from the parameter estimation therein. In DRO, the threshold choice faces a conservativeness-variance tradeoff, where the conservativeness arises if our threshold is chosen too low, in which case there are more tail distributions satisfying our geometric shape conditions and hence loosening the worst-case value, while the variance enlarges if our threshold is chosen too high, in which case few observations are available to calibrate our auxiliary constraints. To this end, we will study the conservativeness of DRO by quantifying the looseness of the resulting bounds in relation to the maximum domain of attraction of the underlying tail distribution. This also provides a mathematical link between DRO and EVT. Furthermore, we will study approaches to alleviate the conservativeness in DRO. These approaches include the addition of auxiliary  moment constraints that inject maximal information about the tail to reduce the uncertainty set size, and also procedures to select the threshold that defines the tail region. 

In terms of computation, we will also present reformulation approaches and procedures to solve our proposed DRO. We note that DRO by nature are infinite-dimensional optimization problems as its decision variable is a probability distribution. To this end, we leverage results in the optimization literature to reformulate our DRO problems, which involve geometric shape constraints, into moment problems that can be dualized into semidefinite programs. Finally, with our solvable and statistically calibratable DRO formulations, we compare our approach with conventional EVT across a range of numerical examples. In particular, we show how DRO provides more reliable estimation on target extremal quantities than EVT, in the sense that the generated confidence bounds are correct more often than EVT. On the other hand, DRO also pays a price of conservativeness, which is also consistent with our theoretical understanding.

The rest of this paper is as follows. Section \ref{sec:framework} first presents our DRO framework for extreme event analysis, including the selection of constraints and thresholds in the involved optimization problem, and discusses related literature. Section \ref{sec:stat} quantifies the conservativeness of our approach by connecting to EVT. Section \ref{sec:computation} discusses the solution approach to the proposed DRO. Section \ref{sec:numerics} shows numerical performances of our approach and compares with EVT.

\section{Framework and Basic Statistical Guarantees}\label{sec:framework}
We are interested in estimating a target quantity $\psi(P)$ that depends on the  distribution $P$ that is unknown but observed from data. The quantity $\psi(P)$ is assumed to be an extremal quantity, i.e., $\psi(P)$ depends on the tail of $P$, e.g., the tail probability $P(X\geq L)$ for some large $L$, or $P(L\leq X\leq R)$ for some interval $[L,R]$, where $X$ is distributed according to $P$.

We consider estimating $\psi(P)$ by setting up an optimization problem that, on a high level, can be written as follows:
\begin{equation}
\begin{array}{ll}
\max_P&\psi(P)\\
\text{subject to}&\text{geometric shape condition on $P$ holds for $x\geq a$}\\
&\text{auxiliary constraints on $P$}
\end{array}\label{opt basic}
\end{equation}
where the decision variable is the unknown true distribution $P$. Let us first explain formulation \eqref{opt basic} and how to use it to estimate $\psi(P)$ intuitively before drilling into details. First, instead of using EVT to fit $P$ as a parametric distribution such as GEV or GPD, we impose a geometric shape condition on the tail of $P$, where the tail region is defined by the condition $x\geq a$ for some large threshold level $a$. When imposing the shape condition, we also have to ensure consistent extrapolation from the non-tail region, and inject any additional worthwhile information which comprises the auxiliary constraints. With all these, \eqref{opt basic} is designed to provide a confidence upper bound for $\psi(P)$. The following is an immediate statistical guarantee:
\begin{proposition}
\label{proposition:abstrac_statistical_guarantee}
Suppose that the constraints in \eqref{opt basic} are correct with statistical confidence $(1-\alpha)$, namely
$$\mathbb P\left(\begin{array}{l}\text{geometric shape condition holds on $P_{true}$ for $x\geq a$}\\
\text{auxiliary constraints holds on $P_{true}$ for $x\geq a$}\end{array}\right)\geq1-\alpha$$
where $P_{true}$ denotes the true $P$ distribution. Then the optimal value of the optimization problem \eqref{opt basic}, denoted $Z^*$, is an upper confidence bound for the true value of $\psi(P_{true})$, denoted $Z_{true}$, with at least the same level of confidence, namely
$$\mathbb P(Z^*\geq Z_{true})\geq1-\alpha.$$
Moreover, the same assertion holds for the asymptotic counterpart. That is, if 
$$\liminf \mathbb P\left(\begin{array}{l}\text{geometric shape condition holds on $P_{true}$ for $x\geq a$}\\
\text{auxiliary constraints holds on $P_{true}$ for $x\geq a$}\end{array}\right)\geq1-\alpha$$
then
$$\liminf \mathbb P(Z^*\geq Z_{true})\geq1-\alpha.$$
In the above, $\mathbb P$ refers to the probability with respect to the data and $\liminf$ is taken as the sample size grows to infinity.\label{basic guarantee DRO}
\end{proposition}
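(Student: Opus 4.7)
The plan is to show that the claim follows almost immediately from a feasibility observation: whenever the random event ``the stated constraints are correct on $P_{true}$'' happens, $P_{true}$ itself becomes a feasible point of the maximization in \eqref{opt basic}, so the optimal value $Z^*$ must dominate $\psi(P_{true})=Z_{true}$. The argument is essentially a single event-inclusion plus monotonicity of $\mathbb{P}$; no analytic or concentration machinery is needed.

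Concretely, I would first name the random event
\[E := \left\{\text{geometric shape condition holds on } P_{true} \text{ for } x\geq a,\text{ and auxiliary constraints hold on } P_{true}\right\}.\]
On the event $E$, the true distribution $P_{true}$ satisfies every constraint of \eqref{opt basic} and is therefore a feasible point of that maximization. Since $Z^*$ is defined as the maximum of $\psi(P)$ over feasible $P$, we obtain the pointwise inequality $Z^*\geq \psi(P_{true})=Z_{true}$ on $E$. This gives the deterministic event inclusion $E\subseteq\{Z^*\geq Z_{true}\}$, and monotonicity of $\mathbb{P}$ combined with the hypothesis $\mathbb{P}(E)\geq 1-\alpha$ yields the finite-sample statement $\mathbb{P}(Z^*\geq Z_{true})\geq 1-\alpha$.

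For the asymptotic assertion I would apply exactly the same inclusion $E\subseteq\{Z^*\geq Z_{true}\}$, which holds for every sample size. Taking $\liminf$ on both sides and using the asymptotic hypothesis gives
\[\liminf \mathbb{P}(Z^*\geq Z_{true})\;\geq\;\liminf \mathbb{P}(E)\;\geq\;1-\alpha,\]
with no interchange of limits required.

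The only real obstacle is ensuring that $Z^*$ is well-defined as a random variable and that the event $E$ is measurable with respect to the data. Well-definedness of $Z^*$ only requires the maximum to be attained (or, failing that, taking $Z^*$ as the supremum, in which case the inclusion $E\subseteq\{Z^*\geq Z_{true}\}$ is unchanged). Measurability of $E$ is immediate once one specifies that the auxiliary constraints are data-determined functions of moments or empirical quantities of $P_{true}$, as is the case throughout the paper. Given these mild regularity points, the proof reduces to the event-inclusion step above.
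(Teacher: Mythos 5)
Your proof is correct and follows exactly the same route as the paper's: observe that on the event where the constraints hold, $P_{true}$ is feasible for \eqref{opt basic}, hence $Z^*\geq\psi(P_{true})$, and conclude by the event inclusion and monotonicity of $\mathbb{P}$ (with the identical argument under $\liminf$ for the asymptotic case). The additional remarks on measurability and on taking $Z^*$ as a supremum are fine but not needed beyond what the paper states.
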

\begin{proof}Suppose $P_{true}$ satisfies all the constraints in \eqref{opt basic}. Then clearly it is a feasible solution to the optimization problem \eqref{opt basic}, and thus $Z^*\geq\psi(P_{true})$. Therefore $$\mathbb P(Z^*\geq\psi(P_{true}))\geq\mathbb P\left(\begin{array}{l}\text{geometric shape condition holds on $P_{true}$ for $x\geq a$}\\
\text{auxiliary constraints holds on $P_{true}$ for $x\geq a$}\end{array}\right)\geq1-\alpha.$$
The asymptotic counterpart also holds in the same manner.
\end{proof}

Problem \eqref{opt basic} can also be phrased as a minimization problem, for which a lower confidence bound guarantee analogous to Proposition \ref{proposition:abstrac_statistical_guarantee} would hold but to avoid repetition we have skipped this. In light of Proposition \ref{basic guarantee DRO}, the question becomes what constraints we put in \eqref{opt basic} to attain good confidence bounds for $\psi(P)$. We discuss this in the next subsection.




\subsection{Optimization Objective and Constraints}
\label{subsec:optimization_objective_constraints}
To facilitate discussion, we focus mainly on $\psi(P)$ that is an expectation or quantile of $X$. In the former case, $\psi(P)$ is $\mathbb E[h(X)]$ for some function $h$, where we assume that $h(x)$ is non-zero only on the region $x\geq a$ (i.e., a tail-related quantity; if not, we can always split $\mathbb E[h(X)]$ into two portions  $\mathbb{E}[h(X);X<a]$ and $\mathbb{E}[h(X);X\geq a]$, where the former is a non-tail estimation problem that can be handled by other standard statistical tools). In the latter case, $\psi(P)=\min \{q:\mathbb{P}(X\leq q)\geq p\}$ for some target probability level $p$. 

The constraints in \eqref{opt basic} are imposed to provide information on $P_{true}$, in that the smaller the resulting feasible region, the tighter is the bound. In the tail region, however, typically little is known about $P$. To this end, we impose geometric conditions that are shared by essentially all common parametric distributions and generally believed to hold. Two such conditions are monotonicity and convexity, which we place in the ``geometric shape condition" in \eqref{opt basic}. 

More precisely, let $f(x)$ and $F(x)$ be the density and distribution function respectively for the probability distribution $P$. By monotonicity we mean to restrict $f(x)$ to be right-continuous and non-increasing for $x\geq a$, and by convexity we mean to restrict $f(x)$ to be convex for $x\geq a$ (note that in this case the existence of one-sided derivatives is guaranteed; e.g., \citet{rockafellarConvexAnalysis1970} Theorem 24.1). For coherence, we can leverage the notion of $D$-th order monotonicity (\citet{pestanaHigherorderMonotoneFunctions2001,vanparysDistributionallyRobustExpectation2019}) to represent these two cases. Let $\mathscr{P}^D(a)$ denote the set of all distribution functions that are $D-1$ times differentiable, and the $D$-th order right derivative exists and is finite and monotone on $[a,\infty)$. Then for $D=0$, $\mathscr{P}^D(a)$ reduces to all probability distributions on $\mathbb{R}$ without any monotonicity assumptions, 
$\mathscr{P}^1(a)$ corresponds to monotone tail distributions, and standard convex analysis shows $\mathscr{P}^2(a)$ is precisely the class of convex tail distributions. We utilize the class $\mathscr{P}^D(a),\ D=1,2$.



The geometric shapes described above need to be accompanied by certain distributional information at the threshold $a$. This information at $a$ serves as ``boundary" conditions" for two purposes. One is to ensure the extrapolation from the non-tail to the tail region at $a$ is natural, in the sense that the density at $a$ still follows the imposed shape condition in a small neighborhood before $x$ hits $a$. Second is to avoid trivializing the optimization and the resulting confidence bounds since without these boundary conditions the worst-case density can be unrealistically pessimistic. More precisely, in the monotone case, this information refers to the density value at $a$, while in the convex case this refers to the density and also its right derivative at $a$. These quantities can be estimated in the form of confidence intervals using standard statistical tools (which we describe later this section). With these, the ``geometric shape condition" in \eqref{opt basic}, for the monotone and convex case, become
\begin{align}
\mathscr{P}^{1}_{a,\eta}:=\{P\in \mathscr{P}^{1}(a)|F_+'(a)\leq \eta\}, \ \mathscr{P}^{2}_{a,\underline{\eta},\bar{\eta},\nu}:=\{P\in \mathscr{P}^{2}(a)|\underline{\eta}\leq f(a)\leq \bar{\eta},f_+'(a)\geq -\nu\}.\label{monotone_convex_assumption}
\end{align}
 respectively where $F_+'(\cdot)$ and $f_+'(\cdot)$ denote the right-derivatives of $F(\cdot)$ and $f(\cdot)$, and $\eta,\underline\eta,\overline\eta,\nu$ are parameters calibratable from data. Note that in \eqref{monotone_convex_assumption} we have used only the upper bound for $F_+'(a)$ in the monotone case and the lower bound for $f_+'(a)$ in the convex case. This is because the lower bound for $F_+'(a)$ or the upper bound for $f_+'(a)$, when coupled with the set $\mathscr P^1(a)$ or $\mathscr P^2(a)$, turns out to be redundant in the optimization problem.


The auxiliary constraints in \eqref{opt basic} are in the form of moment conditions that are calibratable using data above $a$. These constraints serve to reduce the conservativeness in only using the shape belief to extrapolate tails. In general, we consider conditions in the form
\begin{equation}
\mathbb{E}[\bm{g}(X)]\in\mathbb{S}\label{moment constraint general}
\end{equation}
where $\bm g(\cdot)=(g_j(\cdot))_{j=1,\ldots,d}$ and $g_j(\cdot)$ can take the form of an indicator function, e.g., $g_j(x)=\mathbb{I}(a\leq x\leq b)$, or power moment function, e.g., $g_j(x)=x\mathbb{I}(x\geq a)$ (the latter implicitly impose finiteness of power moments which is not always appropriate). Moreover, we always set one of $g_j(\cdot)$ to be $\mathbb I(\cdot\geq a)$, which encodes the probability mass on $x\geq a$ and can be regarded as a boundary condition described earlier. 

For $\mathbb S$, we consider two types, namely ellipsoid $\mathbb{S}_{E}(\bm{\mu},\bm{\Sigma},r)$ and rectangle $\mathbb{S}_{R}(\underline{\bm{\mu}},\bar{\bm{\mu}})$, each parameterized by $(\bm{\mu},\bm{\Sigma},r)\in\mathbb R^d\times\mathbb R^{d\times d}\times\mathbb R_+$ and $(\underline{\bm{\mu}},\bar{\bm{\mu}})\in\mathbb R^d\times\mathbb R^d$ respectively, as
\begin{align}
\mathbb{S}_{E}(\bm{\mu},\bm{\Sigma},r)&:=\big\{\bm{y}:[\bm{y}-\bm{\mu}]^\top \bm{\Sigma}^{-1}[\bm{y}-\bm{\mu}]\leq r\big\}\label{ellipsoid},\\
\mathbb{S}_{R}(\underline{\bm{\mu}},\bar{\bm{\mu}})&:=\big\{\bm{y}: \underline{\bm{\mu}} \leq \bm{y}\leq \bar{\bm{\mu}}\big\}\label{rectangle}.
\end{align}

Before we proceed further, we discuss a bit more on the role of these auxiliary constraints. One may question whether they can be used as the sole constraints with the shape information discarded. A main issue with this approach is that this would lead to a highly conservative result. This is because of a fact in infinite-dimensional programming that solving a distributional optimization problem with only moment constraints typically lead to a finite-supported optimal solution, where the number of support points is the number of independent constraints \citep{winklerExtremePointsMoment1988}. Such a discrete distribution is unrealistic and thus give a very pessimistic bound. 


With the geometric shape conditions in \eqref{monotone_convex_assumption} and the auxiliary moment constraints \eqref{moment constraint general}, \eqref{opt basic} can now be written as
\begin{align}\label{general_framework}
\begin{split}
\mathfrak{P}(\mathscr{P},\bm{g},\mathbb{S}): \quad \max_{P}\ \psi(P) \text{\ \ subject to\ }\ P\in\mathscr{P},\ \mathbb{E}_P[ \bm{g}(X)]\in\mathbb{S}
\end{split}
\end{align}
where $\mathscr P$ can be $\mathscr{P}^1_{a,\eta}$ or $\mathscr{P}^2_{a,\underline{\eta},\bar{\eta},\nu}$ and $\mathbb S$ can be $\mathbb{S}_{E}(\bm{\mu},\bm{\Sigma},r)$ or $\mathbb{S}_{R}(\underline{\bm{\mu}},\bar{\bm{\mu}})$ described above, and $\mathbb E_P$ denotes the expectation under $P$. For convenience, we also denote $\mathbb F(\mathfrak{P}(\mathscr{P},\bm{g},\mathbb{S}))=\{P:P\in\mathscr{P},\ \mathbb{E}_P[ \bm{g}(X)]\in\mathbb{S}\}$ as the feasible region in problem \eqref{general_framework}. We have the following immediate corollary from Proposition \ref{proposition:abstrac_statistical_guarantee}: 



\begin{theorem}\label{proposition:sg1}
We have the following, given threshold $a$.

\begin{enumerate}
\item[(i)] Suppose the true distribution $P_{true}$ lies in $\mathscr P^1(a)$. Suppose the value $\eta$ and the set $\mathbb{S}$ are selected such that the true distribution function satisfies $F_+'(a)\leq \eta$ and $\mathbb{E}_{P_{true}}[ \bm{g}(X)]\in\mathbb{S}$ jointly with confidence $1-\alpha$. Then the optimal value of \eqref{general_framework}, with $\mathscr P=\mathscr{P}^1_{a,\eta}$, is an upper bound for $\psi(P_{true})$ also with confidence $1-\alpha$. Analogous assertion holds when the confidences are satisfied asymptotically.


\item[(ii)] Suppose the true distribution $P_{true}$ lies in $\mathscr P^2(a)$. Suppose the values $\underline\eta,\overline\eta,\nu$ and the set $\mathbb{S}$ are selected such that the true density function satisfies $\underline\eta\leq f(a)\leq\overline\eta$ and $f_+'(a)\geq-\nu$ and $\mathbb{E}_{P_{true}}[ \bm{g}(X)]\in\mathbb{S}$ jointly with confidence $1-\alpha$. Then the optimal value of \eqref{general_framework}, with $\mathscr P=\mathscr{P}^2_{a,\underline{\eta},\bar{\eta},\nu}$, is an upper bound for $\psi(P_{true})$ also with confidence $1-\alpha$. Analogous assertion holds when the confidences are satisfied asymptotically.
\end{enumerate}
\end{theorem}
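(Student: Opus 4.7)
The plan is to derive Theorem \ref{proposition:sg1} as an immediate corollary of Proposition \ref{proposition:abstrac_statistical_guarantee}, by recognizing that the constraints in problem \eqref{general_framework} are precisely the concrete specializations of the abstract ``geometric shape condition'' and ``auxiliary constraints'' featured there. The only real task is to verify that, under the hypotheses of each part, the true distribution $P_{true}$ lies in the feasible region $\mathbb F(\mathfrak{P}(\mathscr P,\bm g,\mathbb S))$ on an event of probability at least $1-\alpha$, after which feasibility forces $Z^*\geq\psi(P_{true})$ deterministically on that event.

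For part (i), I would argue that on the event
\[
\{P_{true}\in\mathscr P^1(a)\}\cap\{F_+'(a)\leq\eta\}\cap\{\mathbb E_{P_{true}}[\bm g(X)]\in\mathbb S\},
\]
the definition of $\mathscr P^1_{a,\eta}$ in \eqref{monotone_convex_assumption} yields $P_{true}\in\mathscr P^1_{a,\eta}$ and hence $P_{true}\in\mathbb F(\mathfrak{P}(\mathscr P^1_{a,\eta},\bm g,\mathbb S))$. On this event $P_{true}$ is feasible for \eqref{general_framework}, so $Z^*\geq\psi(P_{true})$. By hypothesis the probability of this event is at least $1-\alpha$, so the same lower bound on confidence transfers to $\{Z^*\geq\psi(P_{true})\}$, exactly as in the proof of Proposition \ref{basic guarantee DRO}.

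For part (ii), the argument is structurally identical, with the feasibility set now $\mathscr P^2_{a,\underline\eta,\overline\eta,\nu}$, whose membership by \eqref{monotone_convex_assumption} requires both the density bounds $\underline\eta\leq f(a)\leq\overline\eta$ and the one-sided derivative bound $f_+'(a)\geq-\nu$, in addition to $P_{true}\in\mathscr P^2(a)$. The joint hypothesis of the theorem packages exactly these requirements together with $\mathbb E_{P_{true}}[\bm g(X)]\in\mathbb S$, so once again $P_{true}$ is feasible for \eqref{general_framework} on the stated confidence event, giving $Z^*\geq\psi(P_{true})$.

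For the asymptotic claims in both parts, I would note that the event-inclusion
\[
\bigl\{P_{true}\in\mathbb F(\mathfrak{P}(\mathscr P,\bm g,\mathbb S))\bigr\}\subseteq\bigl\{Z^*\geq\psi(P_{true})\bigr\}
\]
is a deterministic consequence of feasibility and does not depend on the sample size, so monotonicity of probability and taking $\liminf$ on both sides preserves the $1-\alpha$ bound. Overall there is no substantive obstacle; the proof is essentially a bookkeeping exercise checking that the constraints in \eqref{monotone_convex_assumption} and \eqref{moment constraint general} precisely match the abstract ``geometric shape condition'' and ``auxiliary constraints'' in Proposition \ref{proposition:abstrac_statistical_guarantee}, and the potentially finicky point is simply to list the boundary conditions (especially the two separate bounds on $f(a)$ and the derivative bound on $f_+'(a)$ in the convex case) carefully enough to cover every requirement of feasibility.
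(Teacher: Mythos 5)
Your proposal is correct and follows essentially the same argument as the paper: on the event that $P_{true}$ satisfies all the calibrated constraints it is feasible for \eqref{general_framework}, hence $Z^*\geq\psi(P_{true})$ deterministically on that event, and the $1-\alpha$ probability bound (or its $\liminf$ version) transfers by monotonicity. No gaps.
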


In view of Theorem \ref{proposition:sg1}, optimization problem \eqref{general_framework} provides a statistically valid confidence upper bound as long as our geometric belief on the tail is correct and the parameters in the constraints are calibrated via the correct confidence bounds. In the monotone case, $\eta$ can be plugged in as the upper confidence bound for the density at $a$. In the convex case, $\underline\eta,\overline\eta$ can be plugged in as the limits of confidence interval for the density at $a$, and $\nu$ as the lower confidence bound for the density derivative at $a$.

For the moment functions $\bm g$ and set $\mathbb S$, there are two main approaches to choose and calibrate them, corresponding to the ellipsoid and rectangle respectively. In the first approach, we could consider several functions, for example the indicator function of lying in an interval or power function, and calibrate $\bm{\mu}$, $\bm{\Sigma}$ and $r$ as the point estimates of these (generalized) moments, their covariance matrix estimate (scaled by the sample size), and a $\chi^2$-quantile respectively, based on the elementary multivariate central limit theorem.  In the second approach, we could use a Kolmogorov-Smirnov statistic to calibrate $\underline{\bm{\mu}},\bar{\bm{\mu}}$, in which case each $g_j(\cdot)=\mathbb I(\cdot\leq x_j)$ for a data point $x_j$.

Note that the conditions for $\eta$ or $\underline\eta,\overline\eta,\nu$, and $\mathbb{E}_{P_{true}}[ \bm{g}(X)]\in\mathbb{S}$, need to be held jointly. Thus one would need to use the Bonferroni correction to calibrate all these parameter values to ensure a family-wise confidence level $1-\alpha$. These procedures are routine and we show the details in Appendix \ref{sec:calibration} for completeness purpose.



Lastly, we discuss how to choose the threshold $a$. A simple guideline is to leave out a small amount of data above $a$, enough so that elementary central limit theorem can be applied, so that we can calibrate the moment constraints and estimate the distributional information at $a$ at an adequate accuracy. Alternately, we can consider solving the optimization problem \eqref{general_framework} at several values of $a$, say $a_i$ for $i=1,\ldots,m$. Then, suppose the feasible region $\mathbb F(\mathfrak{P}(\mathscr{P}_i,\bm{g}_i,\mathbb{S}_i))$ for each problem at $a_i$, configured by $\mathscr{P}_i,\bm{g}_i,\mathbb{S}_i$, is calibrated such that $P_{true}\in\bigcap_{i=1,\ldots,m}\mathbb F(\mathfrak{P}(\mathscr{P}_i,\bm{g}_i,\mathbb{S}_i))$ with confidence $1-\alpha$ (e.g., one way is to make sure $P_{true}\in\mathbb F(\mathfrak{P}(\mathscr{P}_i,\bm{g}_i,\mathbb{S}_i))$ with confidence $1-\alpha/m$ for each $i$, but this is not the only way). 
Then we can set $\min_{i=1,\ldots,m}Z_i^*$ as the $1-\alpha$ upper confidence bound, where $Z_i^*$ denotes the optimal value of $\mathfrak{P}(\mathscr{P}_i,\bm{g}_i,\mathbb{S}_i)$. This ultimate bound is the minimum of all the individual bounds and thus could be tighter, though it needs to be balanced with the impact from the additional Bonferroni correction to guarantee $\mathbb P(P_{true}\in\bigcap_{i=1,\ldots,m}\mathbb F(\mathfrak{P}(\mathscr{P}_i,\bm{g}_i,\mathbb{S}_i)))\geq1-\alpha$. We summarize the guarantee for the above procedure as follows.

\begin{theorem}\label{proposition:sg2}
Given a range of threshold values $a_i,i=1,\ldots,m$, suppose $\psi(P)$ depends on $P$ only on the region $x\geq\max_ia_i$. Suppose either of the following holds:

\begin{enumerate}
\item[(i)] The true distribution $P_{true}\in\mathscr P^1(\min_ia_i)$. We set $\eta=\eta_i$ and $\mathbb S=\mathbb{S}_i$ in \eqref{general_framework} for each $a_i$ such that $P_{true}\in\bigcap_{i=1,\ldots,m}\mathbb F(\mathfrak{P}(\mathscr{P}_i,\bm{g}_i,\mathbb{S}_i))$ with confidence $1-\alpha$.



\item[(ii)] The true distribution $P_{true}\in\mathscr P^2(\min_ia_i)$. We set $\underline\eta=\underline\eta_i,\overline\eta=\overline\eta_i,\nu=\nu_i$ and $\mathbb S=\mathbb{S}_i$ in \eqref{general_framework} for each $a_i$ such that $P_{true}\in\bigcap_{i=1,\ldots,m}\mathbb F(\mathfrak{P}(\mathscr{P}_i,\bm{g}_i,\mathbb{S}_i))$ with confidence $1-\alpha$.
\end{enumerate}

Then $\min_iZ_i^*$ is an upper bound of $\psi(P_{true})$ with confidence $1-\alpha$, where $Z_i^*,i=1,\ldots,m$ are the optimal values of $\mathfrak{P}(\mathscr{P}_i,\bm{g}_i,\mathbb{S}_i)$. Analogous assertion holds when the confidences are satisfied asymptotically.

\end{theorem}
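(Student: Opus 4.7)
The plan is to reduce Theorem \ref{proposition:sg2} directly to the feasibility argument already used to prove Proposition \ref{proposition:abstrac_statistical_guarantee} and Theorem \ref{proposition:sg1}, applied simultaneously at each threshold $a_i$. First I would fix a sample realization on the event
\[
E = \Big\{P_{true}\in\bigcap_{i=1,\ldots,m}\mathbb F(\mathfrak{P}(\mathscr{P}_i,\bm g_i,\mathbb S_i))\Big\},
\]
which by hypothesis occurs with probability at least $1-\alpha$. On this event, $P_{true}$ is a feasible solution to every individual DRO problem $\mathfrak{P}(\mathscr P_i,\bm g_i,\mathbb S_i)$.

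The next step is to verify that the geometric shape portion of feasibility is consistent with the stated hypothesis at the single base level $\min_i a_i$. In case (i), $P_{true}\in\mathscr P^1(\min_i a_i)$ means the density is right-continuous and non-increasing on $[\min_i a_i,\infty)$; restricted to $[a_i,\infty)\subseteq[\min_i a_i,\infty)$ this property is inherited, so $P_{true}\in\mathscr P^1(a_i)$ for every $i$. Combined with the calibration choice $F'_+(a_i)\leq\eta_i$ encoded in $\mathscr P^1_{a_i,\eta_i}$ (which is part of the feasibility event $E$), the shape constraint at each threshold is met. Case (ii) is identical, with $\mathscr P^2(\min_i a_i)\subseteq\mathscr P^2(a_i)$ and with the density/derivative boundary values at $a_i$ sandwiched by $\underline\eta_i,\overline\eta_i,\nu_i$ on $E$. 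The auxiliary moment conditions $\mathbb E_{P_{true}}[\bm g_i(X)]\in\mathbb S_i$ are also satisfied by the very definition of the intersection event $E$.

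Consequently, on $E$ the distribution $P_{true}$ is feasible for each $\mathfrak{P}(\mathscr P_i,\bm g_i,\mathbb S_i)$, and hence
\[
Z_i^*\ \geq\ \psi(P_{true})\quad\text{for every }i=1,\ldots,m.
\]
Taking the minimum over $i$ preserves the inequality, so $\min_i Z_i^*\geq\psi(P_{true})$ on $E$. This yields
\[
\mathbb P\bigl(\min_i Z_i^*\geq\psi(P_{true})\bigr)\ \geq\ \mathbb P(E)\ \geq\ 1-\alpha,
\]
which is exactly the desired upper-confidence-bound property. The asymptotic version is obtained by the same argument with $\mathbb P(E)$ replaced by $\liminf\mathbb P(E)\geq 1-\alpha$.

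I do not expect a substantive obstacle: the hypothesis that $\psi$ depends on $P$ only through the region $x\geq\max_i a_i$ is used implicitly to ensure $\psi$ is well-defined for every feasible $P$ in each subproblem (since different $a_i$ impose shape constraints on different tail regions), and the joint calibration hypothesis has been written precisely so that a single Bonferroni-style union bound is unnecessary at this stage. The only mildly delicate point is the monotonicity of the shape-class inclusions $\mathscr P^D(\min_i a_i)\subseteq\mathscr P^D(a_i)$, which is immediate from the definition of $D$-th order monotonicity on nested intervals.
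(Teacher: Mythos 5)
Your argument is correct and is essentially the paper's own proof: on the event that $P_{true}$ lies in the intersection of all feasible regions, it is feasible for each subproblem, so every $Z_i^*$ (and hence their minimum) dominates $\psi(P_{true})$, and the confidence statement follows by monotonicity of probability. The extra remarks on the nested inclusions $\mathscr P^D(\min_i a_i)\subseteq\mathscr P^D(a_i)$ are a harmless elaboration of the same feasibility argument.
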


\subsection{Related Literature}

Problem \eqref{opt basic} can be viewed as a worst-case optimization problem in the DRO framework. As described in the introduction, DRO advocates a worst-case perspective for decision-making under ambiguous stochastic uncertainty. The latter means that the decision-maker faces a stochastic optimization problem where the underlying probability distribution that controls the stochasticity is unknown or ambiguous. In this case, DRO solves a minimax problem in which the inner maximization is over the worst-case distribution, among a set that is believed to contain the true distribution, often known as the uncertainty set or ambiguity set. The idea dates back to \cite{scarfMinMaxSolutionInventory1957}, and has found growing applications across various disciplines including economics \citep{hansen2008robustness}, stochastic control \citep{pjd00,doi:10.1287/moor.1120.0540,iyengar2005robust}, finance \citep{glasserman2014robust}, revenue management \citep{lim2006model} and most recently machine learning \citep{rahimian2019distributionally,danielkuhnWassersteinDistributionallyRobust2019,blanchet2021statistical}. Problem \eqref{opt basic} takes a more general view of DRO that refers to the worst-case optimization over distributions but does not necessarily involve a decision. The assertion in Proposition \ref{proposition:abstrac_statistical_guarantee} is an immediate guarantee from data-driven DRO, namely a DRO problem where the uncertainty set is calibrated using data. In particular, when the set is constructed via confidence region, the confidence guarantee for the uncertainty translates into the confidence guarantee of the resulting optimal value of the DRO.

The key of DRO lies in the construction of the uncertainty set. To this end, there are two main approaches. First is to use a neighborhood ball surrounding a baseline distribution, where the ball size is measured by a statistical distance including the $\phi$-divergence \citep{gupta2019near,bayraksan2015data,hu2013kullback,gotoh2018robust,ghosh2019robust,atar2015robust,dey2010entropy,jiang2018risk} and the Wasserstein distance \citep{esfahani2018data,blanchet2021sample,gao2016distributionally,xie2019tractable,shafieezadeh2019regularization,chen2018robust}. This approach has been used as a nonparametric approach to sensitivity analysis \citep{Lam2016robust,Lam2018sensitivity}. It also bears statistical consistency properties in that the resulting optimal value converges to the truth when the ball size is suitably calibrated from data \citep{jiang2016data,bertsimas2018robust}, and has a close relation with the empirical likelihood and its generalizations \citep{LAM2017301,duchi2016statistics,lam2019recovering,Blanchet2019,blanchet2019confidence}. The second approach to construct uncertainty set is to use summary distributional information including moments and support \citep{delageDistributionallyRobustOptimization2010,bertsimasOptimalInequalitiesProbability2005,Wiesemann2014,gohDistributionallyRobustOptimization2010,ghaoui2003worst}, marginal information \citep{doan2015robustness,dhara2021worst}, and geometric shape  \citep{van2016generalized,li2019ambiguous,chen2021discrete}. Though statistical consistency is not guaranteed, this approach is useful to obtain bounds for problems with significant distributional ambiguity. Among these choices, shape condition requires minimal input from data and thus fits into extreme event analysis where data are by definition scarce in the tail. 

There are several recent works that consider DRO in extreme event analysis. The most relevant is \cite{lamTailAnalysisParametric2017} that proposes tail extrapolation using convexity information. \cite{lamTailAnalysisParametric2017} focuses on the light versus heavy-tailed behavior of the extrapolated tail, and proposes nonlinear optimization procedure to distinguish the two cases as well as compute the worst-case distribution. \cite{blanchet2020distributionally} studies the robustification of GEV by considering DRO with Renyi divergence neighborhood ball as the uncertainty set. They focus on the preservation of the maximum domain of attraction in the GEV for the worst-case distribution, and also suggest approaches to calibrate the ball size in practice. Our work differs from these works in that we consider an alternative approach to estimate extremal quantity without GEV, thus different from \cite{blanchet2020distributionally}, and our uncertainty set construction, including the choice of constraints and selection of threshold, and relatedly the computational approach, are more general than \cite{lamTailAnalysisParametric2017}. 

Besides \cite{lamTailAnalysisParametric2017} and \cite{blanchet2020distributionally}, other works that use DRO in extremes include \cite{engelke2017robust}  which studies robust bounds on extremal quantities subject to neighborhood balls measured by $\chi^2$ and the first moment, and \cite{birghila2021distributionally} which studies bounds using the Wasserstein distance and $f$-divergence neighborhood on a heavy-tailed distribution. Moreover, in the multivariate context, a line of works has investigated worst-case bounds when marginal distributions are assumed known but dependence structure is open. These include \cite{embrechts2006bounds1,embrechts2006bounds2,puccetti2013sharp} on tail probabilities related to financial risks, \cite{wang2011complete} on expectations of convex functions of sums, and \cite{dhara2021worst} on conditional value-at-risk. Moreover, \cite{yuen2020distributionally} studies worst-case value-at-risk subject to constraints on the extremal coefficients. These works have different focuses from ours as they primarily focus on dependence structure, and less on the statistical issue in extrapolating tails.

\section{Conservativeness}\label{sec:stat}
In this section, we develop theoretical guarantees for conservativeness, where we leverage EVT to derive asymptotic results. More specifically, to quantify the conservativeness of our DRO framework, we investigate the limiting behavior of the relative error ($:=\text{[Estimated Value]}/\text{[True Value]}-1$) as the threshold value as well as the target quantity become more extreme. First, we introduce an abstract formulation which can be considered as a special case of our general framework, and we focus on two types of problem settings: estimating tail probabilities and tail quantiles. Next, for both types of problems, we present asymptotic results for the relative error under heavy-tailed and light-tailed distributions. Lastly, we provide examples of commonly-used distributions to facilitate understanding.

We derive that heavy/light-tailed distributions engender different performance in terms of the looseness of the resulting bounds, and we summarize the conclusions in Table \ref{conclusion}. Intuitively, the upper bound for the tail probability given by our DRO approach turns out to be less conservative under a heavy-tailed distribution compared to a light-tailed one as the true heavy tail tends to have a slow decay. Inversely, estimating tail quantiles under a light-tailed distribution is less conservative than under a heavy-tailed one.

\spacingset{1.0}
\begin{table}[htbp]
    \centering
    \begin{tabular}{c|cc}
    \toprule
    \hline
    &Heavier Tail & Lighter Tail\\
    \hline
    Estimating Tail Probabilities & Less Conservative & More Conservative\\
    \hline
    Estimating Tail Quantiles & More Conservative & Less Conservative \\
    \hline
    \bottomrule    
    \end{tabular}
    \caption{Conservativeness Performance\strut}
    \label{conclusion}
\end{table}
\spacingset{1.9}
\subsection{Abstract Formulation} 
Since it is hard to develop theoretical results in general, we focus on a specific abstract formulation where the geometric assumption is chosen as convexity, the moment region is chosen as a singleton set, and the information up to the threshold is known explicitly instead of calibrated with data. While we only consider a special case of the entire framework discussed in this paper, we note that these asymptotic results build a connection with EVT and also provide us with useful insights on conservativeness.

Given a continuous random variable $X$ with distribution function $F$ and density function $f$, we denote the right endpoint as $x_F:=\sup\{x\in\mathbb{R}:F(x)<1\}$. We assume that $F$ and $f$ are known up to a large threshold $a<x_F$. Our goal is to estimate quantities that are related to the tail region above $a$. 

We first make some assumptions on the true probability distribution. Under $1+\xi x>0$, $$H_{\xi}(x):=\left\{\begin{array}{ll}{\exp \left\{-(1+\xi x)^{-1 / \xi}\right\}} & {\text { if } \xi \neq 0} \\ {\exp \{-\exp \{-x\}\}} & {\text { if } \xi=0}\end{array}\right.$$
denotes the GEV distribution \citep{embrechtsModellingExtremalEvents1997}. 
\begin{assumption}
There exists $z<x_F$ such that on the interval $(z, x_F)$, $F$ is twice differentiable and $f$ is positive, decreasing and convex. 
\label{smoothness and shape}
\end{assumption}
\begin{assumption}$F$ belongs to the maximum domain of attraction (MDA) of $H_\xi$, i.e., $F\in MDA(H_{\xi})$. In another words, there exist normalization constants $c_n>0,d_n\in\mathbb{R}$ such that $c_n^{-1}(M_n-d_n)\stackrel{}{\Rightarrow}H_{\xi}$ for some $\xi\in\mathbb{R}$ as $n\rightarrow \infty$ where $M_n:=\max(X_1,\cdots,X_n)$ is the sample maxima and $X_1,\cdots,X_n$ are random realizations of $X$.  
\label{MDA}
\end{assumption}
We recall that there are three different types of GEV distribution: the Fr\'{e}chet distribution $\phi_{\alpha}$ with $\xi=\alpha^{-1}>0$, the Gumbel distribution $\Lambda$ with $\xi=0$ and the Weibull distribution $\psi_{\alpha}$ with $\xi=-\alpha^{-1}<0$. $\xi$ is a shape parameter to govern the tail behavior of the distribution. Indeed, according to \citet{embrechtsModellingExtremalEvents1997}, in the Fr\'{e}chet case, $x_F=\infty$ and $\lim_{x\rightarrow\infty}\frac{\bar{F}(tx)}{\bar{F}(x)}=t^{-\frac{1}{\xi}}, t>0$. Such an aymptotic behavior is also known as regularly varying at $\infty$ with index $-1/\xi$, denoted for short by $\bar{F}\in RV_{-1/\xi}$. The larger is $\xi$, the more slowly the tail of $\bar{F}$ decays, and hence the heavier the tail of $F$ is. Similarly, when $\xi=0$ with $x_F=\infty$ (Gumbel distribution), the tail is lighter than those in the Fr\'{e}chet case. 
\par 
In the case that $\xi=0$, we make an additional assumption that is broadly satisfied by the textbook distributions such as normal, Gamma and exponential distributions. Note that Assumption \ref{additional assumption} together with the above assumptions with $\xi=0$ imply that $F$ is a von Mises function.
\begin{assumption}$\lim_{x\uparrow x_F}\frac{\overline{F}(x)f'(x)}{f^2(x)}=-1.$
\label{additional assumption}
\end{assumption}

It is known from \citet{embrechtsModellingExtremalEvents1997} that if $\xi>0$, then $x_F=\infty$; if $\xi<0$, then $x_F<\infty$; if $\xi=0$, then $x_F$ can be either finite or infinite. In fact, under our assumptions, we can focus on the case that $x_F=\infty$, which is justified by the following two propositions:
\begin{proposition}\label{conserv_1}
    Suppose that $F$ satisfies Assumption \ref{smoothness and shape} and Assumption \ref{MDA} with $\xi<0$. Let $Y=1/(x_F-X)$. Use $F_Y$ and $f_Y$ to denote the distribution function and density function of $Y$. Then there exists $z_Y<\infty$ such that on $(z_Y,\infty)$, $F_Y$ is twice differentiable and $f_Y$ is positive, decreasing and convex. Moreover, $F_Y\in MDA(H_{-\xi})$.
\end{proposition}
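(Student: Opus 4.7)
The plan is to reduce the claim to two independent pieces: (i) explicit closed-form expressions for $F_Y$ and $f_Y$ obtained via the monotone change of variables $Y = 1/(x_F - X)$, from which the shape properties follow by direct differentiation, and (ii) a classical Weibull-to-Fr\'echet duality statement in extreme value theory for the MDA conclusion. I would carry out step (i) first because it is self-contained and elementary, and then invoke step (ii) as a citation.

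For step (i), inverting the map $y = 1/(x_F - x)$ gives $F_Y(y) = F(x_F - 1/y)$ for $y > 0$, and differentiating yields $f_Y(y) = f(x_F - 1/y)/y^2$. Setting $z_Y := 1/(x_F - z)$, which is finite because $z < x_F$, ensures that whenever $y > z_Y$ the argument $x_F - 1/y$ lies in $(z, x_F)$. On this range Assumption \ref{smoothness and shape} gives twice differentiability of $F$ and positivity, monotonicity, and convexity of $f$, so $F_Y$ is twice differentiable and $f_Y > 0$ on $(z_Y, \infty)$ by the chain rule. To establish monotonicity and convexity of $f_Y$, I would apply the product and chain rules to compute $f_Y'(y)$ and $f_Y''(y)$ in terms of $f(x_F - 1/y)$, $f'(x_F - 1/y)$, $f''(x_F - 1/y)$, and powers of $y$. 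The key observation, which requires careful sign bookkeeping, is that when one substitutes the signs $f \geq 0$, $f' \leq 0$, $f'' \geq 0$ from Assumption \ref{smoothness and shape}, every resulting term of $f_Y'$ is nonpositive and every resulting term of $f_Y''$ is nonnegative. Hence $f_Y$ is decreasing and convex on $(z_Y, \infty)$.

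For step (ii), the assumption $F \in MDA(H_\xi)$ with $\xi < 0$ is equivalent to $x_F < \infty$ together with $\overline{F}(x_F - 1/t)$ being regularly varying of index $1/\xi$ as $t \to \infty$. Translating to $Y$ gives $\overline{F_Y}(y) = \overline{F}(x_F - 1/y) \in RV_{1/\xi}$, and rewriting $1/\xi = -1/(-\xi)$ with $-\xi > 0$ matches the Fr\'echet characterization with shape parameter $-\xi$, yielding $F_Y \in MDA(H_{-\xi})$. This is the standard Weibull-Fr\'echet duality (for instance \citet{embrechtsModellingExtremalEvents1997}, Proposition 3.3.13), which I would cite rather than reprove. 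The only genuine obstacle is the sign accounting in step (i); once that is laid out carefully, the full proposition falls out in a few lines.
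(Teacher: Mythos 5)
Your change of variables, the choice $z_Y = 1/(x_F-z)$, the positivity and monotonicity arguments, and the citation of the Weibull--Fr\'echet duality for the MDA claim all match what the paper does (the paper likewise computes $F_Y(x)=F(x_F-x^{-1})$, $f_Y(x)=f(x_F-x^{-1})x^{-2}$ and simply cites \citet{embrechtsModellingExtremalEvents1997} for $F_Y\in MDA(H_{-\xi})$). The one genuine gap is in your convexity step: you propose to compute $f_Y''(y)$ in terms of $f''(x_F-1/y)$ and check signs, but Assumption \ref{smoothness and shape} only asserts that $F$ is twice differentiable and that $f$ is convex -- it does not grant the existence of $f''$. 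A convex $f$ need only have a nondecreasing derivative, possibly with a singular part, so the object your sign bookkeeping manipulates may not exist, and ``$f_Y''\ge 0$ a.e.''\ would not by itself yield convexity of $f_Y$ without additional absolute-continuity input. (Your sign accounting is otherwise right: if $f''$ did exist, one gets $f_Y''(y)=f''(u)y^{-6}-6f'(u)y^{-5}+6f(u)y^{-4}$ with $u=x_F-1/y$, and every term is nonnegative.)

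The gap is repairable in at least two ways. The paper avoids derivatives of $f_Y$ entirely: for $z_Y<x_1<x_2$ and $0<\lambda<1$ it bounds $f_Y(\lambda x_1+(1-\lambda)x_2)$ using, in turn, the convexity of $t\mapsto 1/t$ and $t\mapsto 1/t^2$ together with the monotonicity of $f$, then the convexity of $f$, and finally a rearrangement inequality exploiting that $f(x_F-x_1^{-1})\ge f(x_F-x_2^{-1})$ while $x_1^{-2}\ge x_2^{-2}$. Alternatively, staying closer to your plan, you can stop at the first derivative: $f_{Y}'(y)=f'(u)y^{-4}-2f(u)y^{-3}$ is a sum of two terms, each a product of a nonpositive nondecreasing function of $y$ with a positive decreasing one, hence each nondecreasing; so $f_Y'$ is nondecreasing and $f_Y$ is convex without ever invoking $f''$. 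As written, though, your convexity argument rests on a derivative that the hypotheses do not supply.
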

\begin{proposition}\label{conserv_2}
    Suppose that $F$ satisfies Assumption \ref{smoothness and shape}, Assumption \ref{MDA} with $\xi=0$ and additionally, Assumption \ref{additional assumption}. Moreover, we suppose that $x_F<\infty$. Let $Y=1/(x_F-X)$. Use $F_Y$ and $f_Y$ to denote the distribution function and density function of $Y$. Then there exists $z_Y<\infty$ such that on $(z_Y,\infty)$, $F_Y$ is twice differentiable and $f_Y$ is positive, decreasing and convex. Moreover, $F_Y\in MDA(\Lambda)$ with infinite right endpoint. In addition, we have that 
    \begin{equation}
        \lim_{x\rightarrow\infty}\frac{\bar{F}_Y(x)f_Y'(x)}{f_Y^2(x)}=-1.
        \label{additional assumption for Y}
    \end{equation}
    \vspace{-1cm}
\end{proposition}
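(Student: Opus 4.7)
The plan is to derive $F_Y$, $\bar F_Y$ and $f_Y$ explicitly through the monotone change of variables $Y=1/(x_F-X)$, check the smoothness and shape claims by direct differentiation, and then reduce the MDA conclusion to verifying the von Mises condition, which itself coincides with the stated identity \eqref{additional assumption for Y}. For the setup, I would first note that for $y>0$,
\[
F_Y(y)=F(x_F-1/y),\qquad \bar F_Y(y)=\bar F(x_F-1/y),\qquad f_Y(y)=\frac{f(x_F-1/y)}{y^2}.
\]
Writing $u:=x_F-1/y$, the twice-differentiability of $F_Y$ and positivity of $f_Y$ on some $(z_Y,\infty)$ follow from Assumption \ref{smoothness and shape} applied near $x_F$ together with the chain rule, while $F_Y(y)<1$ for every finite $y$ gives the infinite right endpoint.

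For the shape claims, I would differentiate the closed form of $f_Y$ twice, obtaining
\[
f_Y'(y)=\frac{f'(u)}{y^4}-\frac{2f(u)}{y^3},\qquad f_Y''(y)=\frac{f''(u)}{y^6}-\frac{6f'(u)}{y^5}+\frac{6f(u)}{y^4}.
\]
Because Assumption \ref{smoothness and shape} yields $f(u)\geq 0$, $f'(u)\leq 0$ and $f''(u)\geq 0$ for $u$ close enough to $x_F$, every term in $f_Y'$ is non-positive while every term in $f_Y''$ is non-negative, so $f_Y$ is decreasing and convex on $(z_Y,\infty)$ for some large enough $z_Y<\infty$.

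The crux is the limit identity \eqref{additional assumption for Y}, which simultaneously delivers $F_Y\in MDA(\Lambda)$ because a smooth positive density satisfying this limit is a von Mises function and hence lies in the Gumbel MDA (see, e.g., \citet{embrechtsModellingExtremalEvents1997}, Proposition 1.17 and Theorem 3.3.26). Substituting the formulas above and using $y=1/(x_F-u)$, I would write
\[
\frac{\bar F_Y(y)\,f_Y'(y)}{f_Y^2(y)}=\frac{\bar F(u)\,f'(u)}{f^2(u)}-\frac{2\bar F(u)}{(x_F-u)\,f(u)}.
\]
As $y\to\infty$, $u\uparrow x_F$ and the first term tends to $-1$ by Assumption \ref{additional assumption}. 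The main obstacle is to show the second term vanishes, equivalently that the auxiliary function $a(u):=\bar F(u)/f(u)$ satisfies $a(u)=o(x_F-u)$ as $u\uparrow x_F$. This is a standard feature of the Gumbel MDA with a finite right endpoint: the Pickands--Balkema--de Haan convergence $\bar F(u+x\,a(u))/\bar F(u)\to e^{-x}$ uniformly on compact $x$-sets, which holds because $F$ is von Mises at $x_F$, is incompatible with $a(u)/(x_F-u)\not\to 0$, forcing $a(u)=o(x_F-u)$ (see \citet{embrechtsModellingExtremalEvents1997}, Section 3.3). Combining the two limits yields $-1$, establishing \eqref{additional assumption for Y} and hence $F_Y\in MDA(\Lambda)$.
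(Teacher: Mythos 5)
Your proof is essentially correct and, on the key analytic point, is actually more explicit than the paper's own argument. The paper handles the smoothness and shape claims by re-running the proof of Proposition \ref{conserv_1} (which establishes convexity of $f_Y$ directly from the defining convexity inequality, without second derivatives), and then obtains the tail conclusions by a change of variables inside the von Mises integral representation of $\bar F$: it writes $\bar F_Y(x)=c\exp\{-\int_{z_Y}^x f_Y(s)/\bar F_Y(s)\,ds\}$ and asserts that $F_Y$ is therefore a von Mises function, from which both $F_Y\in MDA(\Lambda)$ and \eqref{additional assumption for Y} follow. You instead verify \eqref{additional assumption for Y} by direct computation and then invoke the von Mises sufficient condition to get $MDA(\Lambda)$; the two routes are logically equivalent, since for the auxiliary function $a_Y=\bar F_Y/f_Y$ one has $a_Y'=-1-\bar F_Y f_Y'/f_Y^2$, so the von Mises property of $F_Y$ \emph{is} the limit \eqref{additional assumption for Y}. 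Your decomposition of the ratio into $\bar F(u)f'(u)/f^2(u)-2\bar F(u)/\big((x_F-u)f(u)\big)$ correctly isolates the one nontrivial ingredient, namely $a(u):=\bar F(u)/f(u)=o(x_F-u)$, which the paper leaves implicit. Your justification via incompatibility with the POT limit is acceptable, but a more elementary one is worth substituting: since $x_F<\infty$ and $\int^{x_F}1/a(t)\,dt=\infty$, necessarily $a(x_F-)=0$, whence $a(u)=-\int_u^{x_F}a'(t)\,dt\le(x_F-u)\sup_{t\in[u,x_F)}|a'(t)|$, and the supremum tends to $0$ because $a'\to0$ by the von Mises property of $F$.

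One small repair is needed in the shape argument. Assumption \ref{smoothness and shape} gives twice differentiability of $F$, i.e.\ existence of $f'$, together with convexity of $f$; it does not give a pointwise second derivative $f''$, so your term-by-term sign analysis of $f_Y''$ is not literally licensed. The fix is immediate: show that $f_Y'(y)=f'(u)y^{-4}-2f(u)y^{-3}$ is non-decreasing by noting that $f'(u)y^{-4}$ is the product of a non-positive non-decreasing function of $y$ with a positive decreasing one (hence non-decreasing), and $-2f(u)y^{-3}$ is likewise non-decreasing; alternatively, argue from the convexity inequality as the paper does in Proposition \ref{conserv_1}.
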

By the above two propositions, under our assumptions, if $x_F<\infty$, then we may define $Y=1/(x_F-X)$, which also satisfies our assumptions. Knowing $F(x),x\leq a$ implies that we know $F_Y$, the distribution function of $Y$, up to $1/(x_F-a)$. Therefore, we may transform the problem into an equivalent one with infinite right endpoint. From now on, we assume that $x_F=\infty$ without loss of generality and either $F\in MDA(H_{\xi})$ for some $\xi>0$, or $F\in MDA(\Lambda)$. 

First we consider estimating tail probabilities $\psi(P)=E[h(X)]$ with $h(x)=\mathbb{I}(x> b)$, i.e. $\psi(P)=\mathbb{P}(X>b)=\int_b^{\infty} f(x)\mathrm{d} x$, where $b=b(a)$ satisfies that $a\leq b<\infty$. We extract three constants $\beta,\eta,\nu$ from the known information $f(x),x\leq a$:
\begin{align*}
    \beta=\beta(a):=1-F(a)=\bar{F}(a);\  \eta=\eta(a):=f(a); \ \nu=\nu(a):=-f_+'(a).
\end{align*}
For simplicity of further use, we define 
\begin{align*}
    \mu=\mu(a):=\frac{\eta}{\nu},\sigma=\sigma(a):=\frac{2\beta}{\nu}.
\end{align*}
For $\beta,\eta,\nu>0$, we consider the RO problem formulated in the following form:
\begin{equation}
\begin{aligned}
\max_f\  & \int_{a}^{\infty}\mathbb{I}(x>b)f(x)\mathrm{d}x\\
\textit{ s.t. } & \int_a^{\infty}f(x)\mathrm{d}x=\beta,\\
& f(a)=f(a+)=\eta,\\
& f_+'(a)\ge -\nu,\\
& f \mathrm{\ convex\ for\ }x\ge a,\\
& f(x)\ge 0 \mathrm{\ for\ }x\ge a.
\end{aligned}
\label{optimization problem}
\end{equation}
We note that \eqref{optimization problem} is equivalent to $\mathfrak{P}(\mathscr{P}^2_{a,\eta,\eta,\nu}, \mathbb{I}(x>a), \{\beta\})$, so we are indeed considering a special case of the previously defined framework. From now on, the optimal value to the above problem is denoted by $z^*=z^*(a, b)$. By applying the results in \citet{lamTailAnalysisParametric2017}, we get the following theorem:
\begin{theorem}
    If $\eta^2<2\beta\nu$, then the optimal value of \eqref{optimization problem}, denoted by $z^*$, is 
    \begin{equation}
    z^*=\begin{cases}
    \frac{\nu}{2}(\sigma-\mu^2) & \text{if } \mu\leq b-a;\\
    \frac{\nu}{2}[\sigma-2(b-a)\mu+(b-a)^2] & \text{if } \mu > b-a.
    \end{cases} 
    \label{optimal value}
    \end{equation}
    \label{optimal value theorem}
\end{theorem}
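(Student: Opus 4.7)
The plan is to reformulate \eqref{optimization problem} as a generalized moment problem via the Choquet-type representation of convex, nonnegative, decreasing functions on $[a,\infty)$ that vanish at infinity. Every such $f$ admits a unique representation
\[ f(x) = \int_a^{\infty} (t-x)_+ \, d\rho(t), \]
where $\rho$ is a nonnegative Borel measure on $[a,\infty)$ (distributionally, $\rho$ corresponds to $f''$). Substituting this representation and applying Fubini, the constraints of \eqref{optimization problem} translate into moment conditions on $\rho$: the total-mass condition $\int_a^\infty f \, dx = \beta$ becomes $\int \tfrac{1}{2}(t-a)^2 \, d\rho(t) = \beta$; the boundary value $f(a) = \eta$ becomes $\int (t-a) \, d\rho(t) = \eta$; and the right-derivative bound $f'_+(a) \ge -\nu$ becomes $\rho([a,\infty)) \le \nu$. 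Likewise, the objective rewrites as $\int \tfrac{1}{2}((t-b)_+)^2 \, d\rho(t)$. After the change of variables $s = t-a$ and $L := b-a$, problem \eqref{optimization problem} becomes the generalized moment problem
\[ \max_{\rho \ge 0}\ \int_0^\infty \tfrac{1}{2}\bigl((s-L)_+\bigr)^2\, d\rho(s) \quad \text{s.t.} \quad \int s\, d\rho = \eta,\ \int s^2\, d\rho = 2\beta,\ \rho([0,\infty)) \le \nu. \]

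Next I would exhibit a near-optimal sequence and prove matching upper and lower bounds. The natural candidate is the two-atom measure $\rho_R = \nu\,\delta_{\eta/\nu} + \varepsilon_R\,\delta_R$ with $R \to \infty$ and $\varepsilon_R$ chosen so that the moment constraints hold in the limit, which requires $\varepsilon_R R \to 0$ and $\varepsilon_R R^2 \to 2\beta - \eta^2/\nu$; the latter limit is strictly positive thanks to the assumption $\eta^2 < 2\beta\nu$. The first atom alone represents the steepest admissible affine density $(\eta - \nu(x-a))_+$ supported on $[a,a+\eta/\nu]$, while the escaping atom carries the leftover mass $\beta - \eta^2/(2\nu)$ arbitrarily far into the tail. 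A direct evaluation of the objective along this sequence reproduces both branches of \eqref{optimal value}: when $\eta/\nu \le L$, the first atom contributes nothing past $b$ and the value tends to $\beta - \eta^2/(2\nu) = \tfrac{\nu}{2}(\sigma - \mu^2)$; when $\eta/\nu > L$, the linear piece from $b$ to $a+\eta/\nu$ contributes an additional $\tfrac{\nu}{2}(\eta/\nu - L)^2$, which simplifies to $\tfrac{\nu}{2}[\sigma - 2L\mu + L^2]$.

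To close the argument I would exhibit a dual certificate, namely nonnegative multipliers $\lambda_1, \lambda_2, \lambda_3$ satisfying
\[ \lambda_1 s + \lambda_2 s^2 + \lambda_3 \ge \tfrac{1}{2}\bigl((s-L)_+\bigr)^2 \quad \text{for all } s \ge 0, \]
with equality at the support points of the approximating measures, so that weak duality forces the primal optimum to coincide with $\lambda_1 \eta + 2\beta\lambda_2 + \nu\lambda_3$ computed in each case. The main obstacle will be the escaping-atom subtlety: the supremum is not attained by a finite-support measure but only as a limit, so rigorous matching of primal and dual requires either a truncation argument (solve over $[a,M]$ and send $M \to \infty$) or a semicontinuity-and-compactness statement for moment problems with a coercivity gap at infinity. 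This is precisely the analysis carried out in \citet{lamTailAnalysisParametric2017} for convex tail DRO with prescribed density and slope at the threshold; once \eqref{optimization problem} is identified as the instance $\mathfrak{P}(\mathscr{P}^2_{a,\eta,\eta,\nu},\mathbb{I}(x>a),\{\beta\})$ of the paper's general framework, their theorem applies directly and yields \eqref{optimal value}, which is the route I would take for the written proof.
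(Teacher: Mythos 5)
Your proposal is correct in substance, and—apart from its closing sentence—it takes a genuinely different route from the paper. The paper's proof is a short reduction: it computes $H(x)=\frac{(x-b+a)^2}{2}\mathbb{I}(x>b-a)$, checks the hypotheses of Theorem 4 of \citet{lamTailAnalysisParametric2017}, and then maximizes the resulting one-dimensional function $W(x)$ over $[0,\mu]$, splitting into the cases $\mu\le b-a$ and $\mu>b-a$; your fallback of invoking that theorem is therefore exactly the paper's argument. Your main line, however, is a self-contained primal--dual argument: represent each feasible density as $f(x)=\int (t-x)_+\,d\rho(t)$, pass to a moment problem in $\rho$, exhibit a feasible sequence whose value converges to \eqref{optimal value} (a near-degenerate atom at $\mu$ plus an escaping atom), and certify optimality with the quadratic dual functions $\tfrac{1}{2}(s-\mu)^2$ when $\mu\le b-a$ and $\tfrac{1}{2}\bigl(s-(b-a)\bigr)^2$ when $\mu>b-a$. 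This buys transparency (one sees exactly where the worst case sends mass to infinity and why $\eta^2<2\beta\nu$ matters) at the cost of redoing work the cited theorem packages. Three repairs are needed to make it airtight. First, the literal sequence $\nu\,\delta_{\eta/\nu}+\varepsilon_R\,\delta_R$ violates the two equality constraints for every finite $R$; you must let the first atom's mass $m_R$ and location $s_R$ solve $m_Rs_R=\eta-\varepsilon_RR$ and $m_Rs_R^2=2\beta-\varepsilon_RR^2$, so that $(m_R,s_R)\to(\nu,\mu)$, and use $\eta^2<2\beta\nu$ to keep $m_R+\varepsilon_R\le\nu$ along the sequence. Second, only the multiplier of the inequality $\rho([a,\infty))\le\nu$ need be nonnegative; the multipliers of the two equality constraints are signed, and in fact the certificate for $\mu\le b-a$ forces $\lambda_1=-\mu<0$, so your stated requirement that all three multipliers be nonnegative would make that case infeasible. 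Third, the attainment worry at the end is moot: weak duality applied to any feasible $\rho$ already gives the upper bound $\lambda_1\eta+2\beta\lambda_2+\nu\lambda_3$ with no truncation or compactness argument, and the feasible sequence supplies the matching lower bound, so no optimal measure need exist.
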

To get some intuition on Theorem \ref{optimal value theorem}, note that if we draw a line from $(a,\eta)$ with slope $-\nu$, then it hits 0 at $(a+\mu,0)$. The tail extrapolation of any feasible density function must be above this straight line. On the other hand, they can be as close as possible. Thus intuitively $z^*$ is equal to $\beta$ subtracted by the area of the shaded region in Figure \ref{intuition}, which exactly coincides with the results in the theorem.
\begin{figure}[htbp] 
\centering    
\begin{subfigure}{0.45\textwidth}
\centering
\includegraphics[width=1\textwidth]{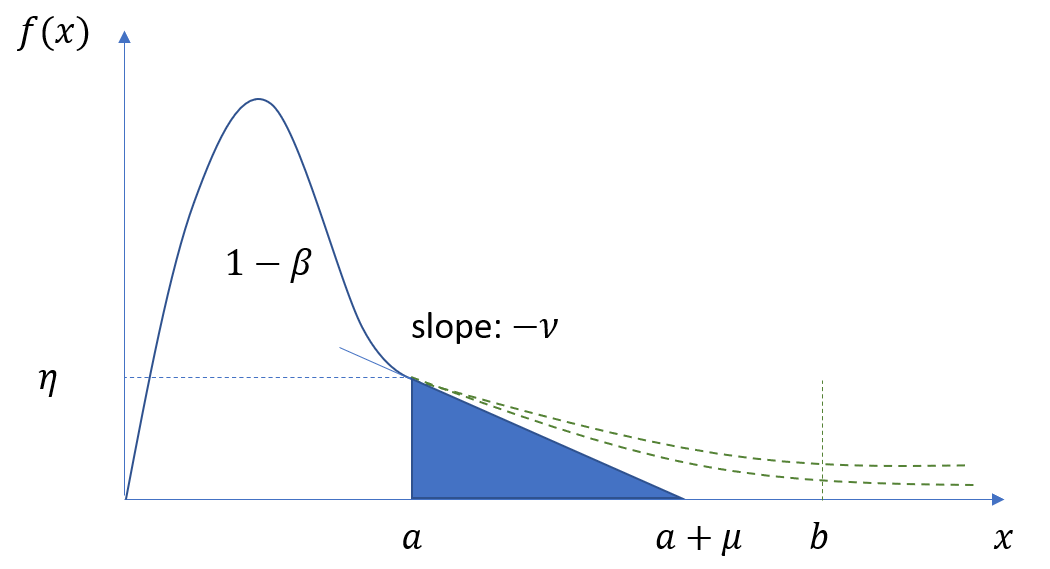}  
\caption{$b\geq a+\mu$}
\end{subfigure}
\begin{subfigure}{0.45 \textwidth}
\centering
\includegraphics[width=1\textwidth]{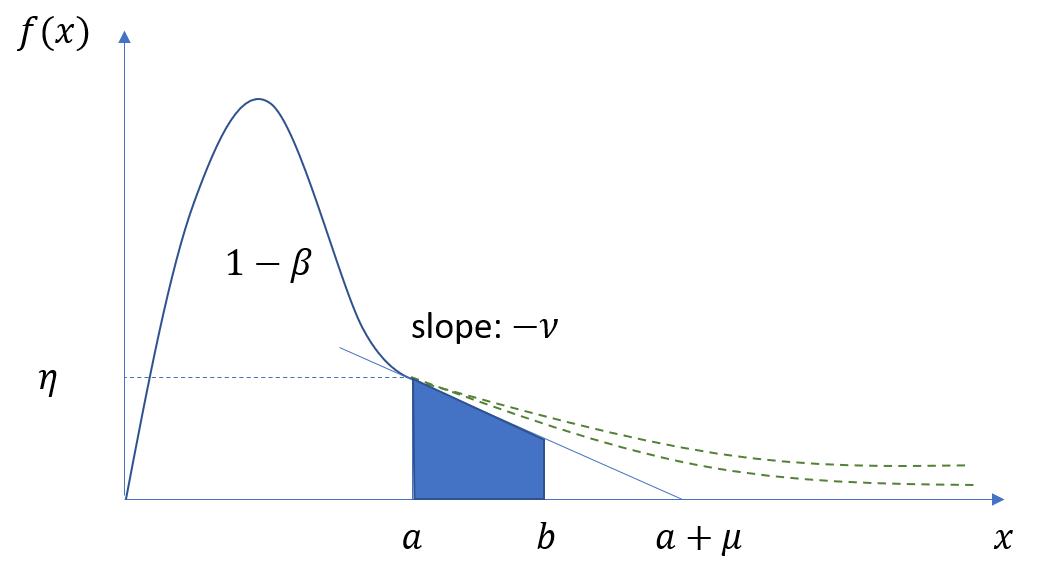} 
\caption{$b<a+\mu$}
\end{subfigure}    
\caption{Intuition for the Optimal Value $z^*$}
\label{intuition}     
\end{figure}
To quantify the conservativeness in estimating tail probabilities, we need to specify a proper function $b=b(a)$ of $a$ and consider the limit of the relative error as $a\rightarrow\infty$, which is defined as 
\begin{equation}
    \lim_{a\rightarrow\infty}\frac{z^*(a,b(a))-\bar{F}(b(a))}{\bar{F}(b(a))}.
    \label{relative error for p}
\end{equation}
Heuristically, the larger the value of \eqref{relative error for p}, the more conservative the RO approach is. We will present the selection of $b(a)$ in later subsections.

Now we consider estimating tail quantiles. Similarly, we can apply the RO approach to get a worst-case estimation. More specifically, given the information of $F(x)$ for $x\leq a$, suppose that our goal is to estimate $q=F^{-1}(p)$ where $p\geq 1-F(a)=1-\beta$. In order to get a worst-case estimation, we maximize $q$ among all the potential convex tail extrapolations. That is, the worst-case estimation for $q$ is obtained by solving 
\begin{equation}
\begin{aligned}
\max_f\  & q\\
\text{ s.t. } & \int_{-\infty}^q f(x)\mathrm{d}x=p,\\
&\int_a^{\infty}f(x)\mathrm{d}x=\beta,\\
& f(a)=f(a+)=\eta,\\
& f_+'(a)\ge -\nu,\\
& f \mathrm{\ convex\ for\ }x\ge a,\\
& f(x)\ge 0 \mathrm{\ for\ }x\ge a.
\end{aligned}
\label{quantile optimization problem}
\end{equation}
We define
\begin{equation}
    q^*:=\inf\{b\geq a:z^*(a,b)=1-p\}.
    \label{q^*}
\end{equation}
The curve in Figure \ref{plot z^*} reflects the shape of $z^*$ against $b$. For $p_1$ such that $\beta-\eta^2/(2\nu)\leq 1-p_1\leq \beta$, we may find a corresponding point $q_1^*$ such that $z^*(a,q_1^*)=1-p_1$. In particular, if $1-p_1=\beta-\eta^2/(2\nu)$, then by the definition in \eqref{q^*}, we have that $q_1^*=a+\mu$. However, for $p_2$ such that $1-p_2<\beta-\eta^2/(2\nu)$, the corresponding $q_2^*$ is defined as $\infty$. The following theorem justifies this intuitive definition of $q^*$. 
\begin{figure}[htbp]
    \centering
    \includegraphics[width=6cm]{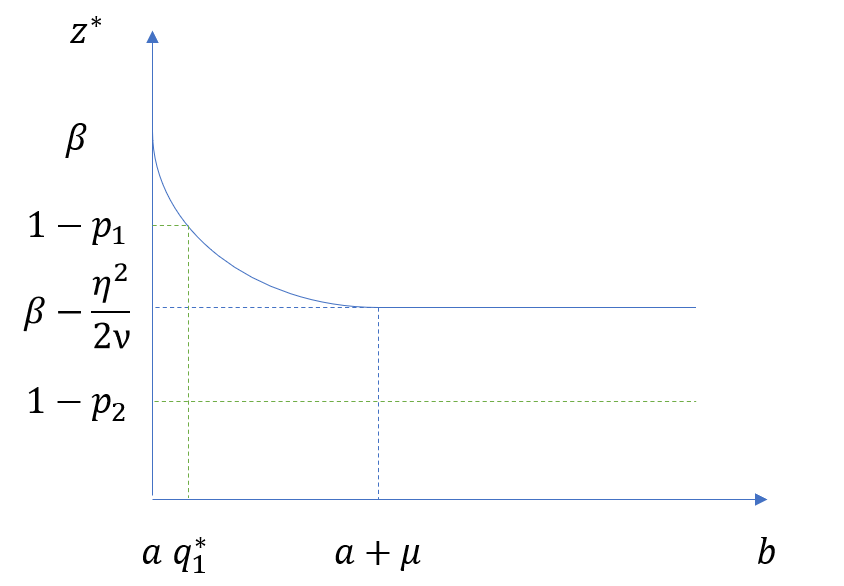}
    \caption{The Shape of $z^*$ against $b$ and the Definition of $q^*$ }
    \label{plot z^*}
\end{figure}
\begin{theorem}\label{conserv_4}
    $q^*$ defined in \eqref{q^*} is the optimal value of \eqref{quantile optimization problem}, and is expressed by
    \begin{equation}
    q^*=\begin{cases}
    a+\mu-\sqrt{\mu^2-\sigma+\frac{2(1-p)}{\nu}} & \text{if }1-\beta \le p\le 1-\beta+\frac{\eta^2}{2\nu};\\
    \infty & \text{if }p>1-\beta+\frac{\eta^2}{2\nu}.
    \end{cases}
    \label{value of q^*}
\end{equation}
    \label{quantile optimal value}
\end{theorem}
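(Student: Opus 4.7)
The plan is to exploit the elementary duality between quantiles and tail probabilities in order to reduce problem \eqref{quantile optimization problem} to a question already answered by Theorem \ref{optimal value theorem}. First I observe that every $f$ feasible for \eqref{quantile optimization problem} is also feasible for \eqref{optimization problem}: the two problems share the same five structural constraints on $f$, and the extra equation $\int_{-\infty}^q f = p$ in \eqref{quantile optimization problem} simply identifies $q$ as the $p$-quantile $q_f := F_f^{-1}(p)$, which in turn is equivalent to $P_f(X > q) = 1-p$. Hence the optimal value of \eqref{quantile optimization problem} equals $\sup_f q_f$ taken over all such feasible $f$.

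Next I would prove the equivalence $\sup_f q_f \leq b \iff z^*(a, b) \leq 1-p$. For one direction, $q_f \leq b \Leftrightarrow F_f(b) \geq p \Leftrightarrow P_f(X > b) \leq 1-p$ (using absolute continuity of $F_f$), and taking supremum over feasible $f$ gives $z^*(a,b) \leq 1-p$; the reverse uses $z^*(a,b) \leq 1-p \Rightarrow P_f(X > b) \leq 1-p$ for every feasible $f$, hence $q_f \leq b$. Rearranging yields $\sup_f q_f = \inf\{b \geq a : z^*(a, b) \leq 1-p\}$. By Theorem \ref{optimal value theorem}, $z^*(a, \cdot)$ is continuous and non-increasing on $[a,\infty)$, equal to $\beta$ at $b = a$, strictly decreasing to $\beta - \eta^2/(2\nu)$ at $b = a + \mu$, and constant at $\beta - \eta^2/(2\nu)$ on $[a+\mu, \infty)$. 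When $1 - p \geq \beta - \eta^2/(2\nu)$ (i.e.\ $p \leq 1-\beta+\eta^2/(2\nu)$), $z^*$ attains the value $1-p$, so $\{b : z^*(a,b) \leq 1-p\}$ is a closed half-line whose left endpoint coincides with $\inf\{b : z^*(a,b) = 1-p\} = q^*$; when $1 - p < \beta - \eta^2/(2\nu)$, this set is empty and both infima equal $+\infty$, consistent with $q^* = \infty$.

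Finally, the explicit expression \eqref{value of q^*} follows by inverting the quadratic piece of $z^*$. Setting $\tfrac{\nu}{2}[(b-a)^2 - 2\mu(b-a) + \sigma] = 1-p$ yields $(b-a) = \mu \pm \sqrt{\mu^2 - \sigma + 2(1-p)/\nu}$; the discriminant is non-negative precisely when $p \leq 1-\beta+\eta^2/(2\nu)$, and the smaller root $\mu - \sqrt{\,\cdot\,}$ lies in $[0, \mu]$ because $1-p \leq \beta$, giving $q^* = a + \mu - \sqrt{\mu^2 - \sigma + 2(1-p)/\nu}$. I expect the main delicate step to be the boundary sub-case $p = 1-\beta+\eta^2/(2\nu)$, where $z^*(a, b) = 1-p$ on all of $[a+\mu, \infty)$: the infimum in \eqref{q^*} picks out $q^* = a+\mu$, and to confirm this is genuinely the supremum of $q_f$ one can further verify, via the pointwise lower bound $f(x) \geq \max(\eta - \nu(x-a), 0)$ implied by convexity together with $f'_+(a) \geq -\nu$, and combined with the strict inequality $\beta > \eta^2/(2\nu)$ from the standing assumption $\eta^2 < 2\beta\nu$, that no feasible $f$ can satisfy $q_f > a+\mu$.
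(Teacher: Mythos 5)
You've got a correct proof, and in the direction that requires care it takes a genuinely different route from the paper's. The paper also passes through the tail-probability problem \eqref{optimization problem}, but it argues the two inequalities separately: $q_{opt}\le q^*$ is deduced from ``$\tilde f$ feasible $\Rightarrow z^*(a,q)\ge 1-p\Rightarrow q\le q^*$ by the definition of $q^*$,'' and $q_{opt}\ge q^*$ is obtained by explicitly constructing near-optimal densities (slope $-\nu$ on $[a,q^*]$, then flattened, with an $\varepsilon\to 0$ limit). You instead establish the two-sided equivalence $\sup_f q_f\le b\iff z^*(a,b)\le 1-p$, conclude $\sup_f q_f=\inf\{b\ge a: z^*(a,b)\le 1-p\}$, and identify this with \eqref{q^*} from the monotone shape of $z^*(a,\cdot)$ given by Theorem \ref{optimal value theorem}, never constructing a density. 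Your version is cleaner exactly at the boundary $1-p=\beta-\eta^2/(2\nu)$: there $z^*(a,b)=1-p$ for all $b\ge a+\mu$, so the paper's step ``$z^*(a,q)\ge 1-p$ hence $q\le q^*$'' does not follow from the definition of $q^*$ alone and needs the supplementary argument you sketch at the end (the pointwise bound $f(x)\ge \max(\eta-\nu(x-a),0)$ together with $\beta>\eta^2/(2\nu)$) --- or, more simply, your $\le$-form of the equivalence, which yields $q_f\le a+\mu$ directly from $z^*(a,a+\mu)\le 1-p$. What the paper's constructive half buys in exchange is an explicit picture of the near-worst-case tails, whereas your lower bound extracts a witness $f$ with $q_f>b$ implicitly from $z^*(a,b)>1-p$ for $b<q^*$, which is equally valid since $z^*$ is a supremum. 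The quadratic inversion giving \eqref{value of q^*} is the same in both.
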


\par 
Similar to estimation of tail probabilities, we manage to find a proper function $p=p(a)$ of $a$ and compute the limit of the relative error as $a\rightarrow\infty$, which is defined as 
\begin{equation}
    \lim_{a\rightarrow\infty}\frac{q^*(p(a))-q(p(a))}{q(p(a))}. \label{relative error of q}
\end{equation}
Recall that $q=q(p(a)):=F^{-1}(p(a))$ is the true $p$-quantile. Similarly, the larger the value of \eqref{relative error of q}, the more conservative the RO approach in estimating tail quantiles is. 

\subsection{Conservativeness in Estimating Tail Probabilities}

We consider two cases as follows.

\textbf{Case 1: $\xi>0$.} Suppose that $F$ and $f$, the true distribution function and density function of $X$, satisfy Assumptions \ref{smoothness and shape} and \ref{MDA} with $\xi>0$. Since $\bar{F}\in RV_{-1/\xi}$, by the Karamata representation theorem \citep{embrechtsModellingExtremalEvents1997}, $\bar{F}$ has the following representation:
\begin{equation}
    \bar{F}(x)=c(x)\exp\left\{-\int_z^x\frac{1}{u(t)}\mathrm{d}t\right\},z<x<\infty
    \label{Karamata}
\end{equation}
where $c(x)\rightarrow c>0$, $u(x)/x\rightarrow \xi$ as $x\rightarrow\infty$. Moreover, it is known that when the threshold $a$ is sufficiently large,
\begin{align*}
P(X>x|X>a)\approx\bar{G}_{\xi;a,u(a)}(x)=\left(1+\xi\frac{x-a}{u(a)}\right)^{-\frac{1}{\xi}},x\geq a,
\end{align*}
which is exactly the mathematical foundation for the GPD method. 
\par 
Heuristically, if $P(X>x|X>a)$ is exactly equal to $\bar{G}_{\xi;a,u(a)}(x)$ for any $x\geq a$, then we get that 
\begin{align*}
    \bar{F}(x)=P(X>a)\bar{G}_{\xi;a,u(a)}(x)=\beta\left(1+\xi\frac{x-a}{u(a)}\right)^{-\frac{1}{\xi}}.
\end{align*}
Thus,
\begin{align*}
f(x)&=-\frac{\mathrm{d}}{\mathrm{d}x}P(X>x)=\frac{\beta}{u(a)}\left(1+\xi\frac{x-a}{u(a)}\right)^{-\frac{1}{\xi}-1},\\
-f'(x)&=-\frac{\mathrm{d}}{\mathrm{d}x}f(x)=\frac{(\xi+1)\beta}{u^2(a)}\left(1+\xi\frac{x-a}{u(a)}\right)^{-\frac{1}{\xi}-2}.
\end{align*}
If we substitute $x$ with $a$, then we get that 
\begin{align*}
    \eta=\frac{\beta}{u(a)},\nu=\frac{(\xi+1)\beta}{u^2(a)}.
\end{align*}
Therefore, it seems reasonable to use $\beta/u(a)$ and $(\xi+1)\beta/u^2(a)$ to approximate $\eta$ and $\nu$ respectively. Indeed, this is true as $a\rightarrow\infty$, which is justified by the following proposition:
\begin{proposition}\label{conserv_5}
    Suppose that distribution function $F$ and the corresponding density function $f$ satisfy Assumptions \ref{smoothness and shape} and \ref{MDA} with $\xi>0$, which implies that $\bar{F}$ has the representation \eqref{Karamata}. The following statements are true:
    \begin{align}\label{Frechet}
        \lim_{x\to\infty}\frac{f(x)u(x)}{\bar{F}(x)}=1;\ \lim_{x\to\infty}-\frac{f'(x)u^2(x)}{(\xi+1)\bar{F}(x)}=1.
    \end{align}
\end{proposition}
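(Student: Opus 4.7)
The plan is to apply Karamata's monotone density theorem twice, once to $\bar F$ and once to $f$. By Assumption \ref{MDA} with $\xi>0$ we have $\bar F\in RV_{-1/\xi}$, and by Assumption \ref{smoothness and shape} the density $f$ is eventually non-increasing. Writing $\bar F(x)=\int_x^\infty f(t)\,dt$ and invoking the tail form of the monotone density theorem (see, e.g., \citet{embrechtsModellingExtremalEvents1997}) yields
\[
f\in RV_{-1/\xi-1}\qquad\text{and}\qquad \lim_{x\to\infty}\frac{xf(x)}{\bar F(x)}=\frac{1}{\xi}.
\]
Combined with $u(x)/x\to\xi$ from the Karamata representation \eqref{Karamata}, this immediately gives the first equivalence $f(x)u(x)/\bar F(x)\to 1$.

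For the second equivalence, Assumption \ref{smoothness and shape} additionally supplies convexity of $f$, so $-f'\ge 0$ is non-increasing on the tail; together with $f(x)\to 0$ (which follows from the finiteness $\int f=1$ and monotonicity) we obtain the representation $f(x)=\int_x^\infty(-f'(t))\,dt$. Applying the monotone density theorem a second time, now to this tail integral and using the already-established $f\in RV_{-1/\xi-1}$, gives
\[
-f'\in RV_{-1/\xi-2}\qquad\text{and}\qquad \lim_{x\to\infty}\frac{-xf'(x)}{f(x)}=\frac{1}{\xi}+1.
\]
The second claim in \eqref{Frechet} then follows from the factorization
\[
\frac{-f'(x)\,u^2(x)}{\bar F(x)} \;=\; \frac{-xf'(x)}{f(x)}\cdot \frac{xf(x)}{\bar F(x)}\cdot \frac{u^2(x)}{x^2} \;\longrightarrow\; \Bigl(\tfrac{1}{\xi}+1\Bigr)\cdot \tfrac{1}{\xi}\cdot \xi^2 \;=\; \xi+1.
\]

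The main obstacle is checking the hypotheses of the monotone density theorem in its tail-integral form, specifically the eventual monotonicity of both $f$ and $-f'$, and the convergence of the corresponding tail integrals. These reduce to short consequences of Assumption \ref{smoothness and shape} (which gives $f$ non-increasing and convex, hence $-f'$ non-increasing) and the normalization $\int f=1$; no delicate estimate is needed beyond an accurate invocation of the classical result.
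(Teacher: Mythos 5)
Your proposal is correct and follows essentially the same route as the paper: the paper likewise applies the monotone density theorem (cited as Proposition 0.7 of Resnick) twice — first to $\bar F=\int_x^\infty f$ to get $xf(x)/\bar F(x)\to 1/\xi$ and $f\in RV_{-1/\xi-1}$, then to $f=\int_x^\infty(-f')$ using convexity to get $-xf'(x)/f(x)\to 1/\xi+1$ — and concludes with the same factorization against $u(x)/x\to\xi$. No substantive differences.
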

\begin{remark}
Note that this theorem implies that as $a\rightarrow\infty$,
\begin{equation}
    \eta\sim \frac{\beta}{u(a)},\nu\sim \frac{(\xi+1)\beta}{u^2(a)},\mu\sim \frac{u(a)}{\xi+1},\sigma\sim \frac{2u^2(a)}{\xi+1}.
    \label{Frechet approximations}
\end{equation}
Throughout this section, we use $g_1(a)\sim g_2(a)$ to denote $\lim_{a\to\infty}g_1(a)/g_2(a)=1$.
\end{remark}
\par 
For simplicity, we first consider $b=a+xu(a)$ where $x\geq 0$ is a fixed number. In this case, it is known that $\bar{F}(b)/\bar{F}(a)\rightarrow (1+\xi x)^{-1/\xi}$ as $a\rightarrow\infty$ \citep{embrechtsModellingExtremalEvents1997}. Using the conclusions in Proposition \ref{conserv_5} above, we can get the following theorem:
\begin{theorem}\label{conserv_6}
    Suppose that distribution function $F$ and the corresponding density function $f$ satisfy Assumptions \ref{smoothness and shape} and \ref{MDA} with $\xi>0$, which implies that $\bar{F}$ has the representation \eqref{Karamata}. $b=b(a)$ is chosen as $b=a+xu(a)$ where $x\geq 0$ is a fixed number. Then we have that
    \begin{equation}
        \lim_{a\rightarrow\infty}\frac{z^*(a,b)}{\bar{F}(b)}=\begin{cases}
        \left(1-\frac{1}{2(\xi+1)}\right)(1+\xi x)^{\frac{1}{\xi}} & \text{if } x\geq\frac{1}{\xi+1};\\
        \left(1-x+\frac{\xi+1}{2}x^2\right)(1+\xi x)^{\frac{1}{\xi}} & \text{if } x<\frac{1}{\xi+1}.
    \end{cases}
    \label{Frechet error}
    \end{equation}
    \label{Frechet results}
\end{theorem}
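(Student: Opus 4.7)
The plan is to plug the asymptotic expressions from Proposition \ref{conserv_5} into the closed-form expression for $z^*$ in Theorem \ref{optimal value theorem}, and then divide by the standard GPD-type asymptotic for $\bar{F}(b)$. Concretely, with $b = a + xu(a)$, we have $b - a = xu(a)$ and from \eqref{Frechet approximations} we have $\mu \sim u(a)/(\xi+1)$. So asymptotically the condition $\mu \leq b-a$ becomes $u(a)/(\xi+1) \leq xu(a)$, i.e., $x \geq 1/(\xi+1)$, while $\mu > b-a$ corresponds to $x < 1/(\xi+1)$. This cleanly splits the analysis into the same two cases appearing in the statement.

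Next, I would compute each case by direct substitution. In the regime $x \geq 1/(\xi+1)$, starting from $z^* = (\nu/2)(\sigma - \mu^2)$ and plugging in $\nu \sim (\xi+1)\beta/u^2(a)$, $\sigma \sim 2u^2(a)/(\xi+1)$, $\mu^2 \sim u^2(a)/(\xi+1)^2$, everything involving $u(a)$ cancels and one obtains $z^* \sim \beta\bigl(1 - \tfrac{1}{2(\xi+1)}\bigr)$. In the regime $x < 1/(\xi+1)$, starting from $z^* = (\nu/2)[\sigma - 2(b-a)\mu + (b-a)^2]$ and substituting $(b-a)\mu \sim xu^2(a)/(\xi+1)$ and $(b-a)^2 = x^2u^2(a)$, again the $u(a)$'s cancel and the expression simplifies to $z^* \sim \beta\bigl(1 - x + \tfrac{\xi+1}{2}x^2\bigr)$. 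Finally, dividing by $\bar{F}(b) \sim \beta(1+\xi x)^{-1/\xi}$, which follows from $\bar{F} \in RV_{-1/\xi}$ and the standard POT limit $\bar{F}(a+xu(a))/\bar{F}(a) \to (1+\xi x)^{-1/\xi}$ stated in \citet{embrechtsModellingExtremalEvents1997}, yields precisely \eqref{Frechet error}.

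The main obstacle, or at least the point requiring the most care, is ensuring that the ``$\sim$'' relations from Proposition \ref{conserv_5} combine correctly inside the expressions $\sigma - \mu^2$ and $\sigma - 2(b-a)\mu + (b-a)^2$, since we are subtracting quantities of the same order $u^2(a)$ and thus need the limits themselves (not just the leading asymptotic orders) to add up. The cleanest way to handle this is to factor out $\beta$ and $u^2(a)$ and take the limit of the bracketed expression term by term, using that each ratio $\sigma/u^2(a)$, $\mu^2/u^2(a)$, $(b-a)\mu/u^2(a)$, $(b-a)^2/u^2(a)$, and $\nu u^2(a)/\beta$ has a nonzero finite limit; then continuity of addition and multiplication gives the desired limit of the product. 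A small side check is that at the crossover $x = 1/(\xi+1)$ both formulas agree (each evaluates to $\tfrac{2\xi+1}{2(\xi+1)}(1+\xi/(\xi+1))^{1/\xi}$), so the asymptotic decision between the two cases of Theorem \ref{optimal value theorem} is harmless even when the ``$\leq$'' and ``$>$'' flip back and forth for finite $a$.
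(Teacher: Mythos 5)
Your proposal is correct and follows essentially the same route as the paper: split into the two regimes of Theorem \ref{optimal value theorem} using $\mu\sim u(a)/(\xi+1)$, substitute the asymptotics of Proposition \ref{conserv_5} into the closed-form $z^*$, and divide by $\bar F(b)\sim\bar F(a)(1+\xi x)^{-1/\xi}$. Your extra care about combining the ``$\sim$'' relations under subtraction and your check that both formulas coincide at the crossover $x=1/(\xi+1)$ are exactly the points the paper's proof handles (the latter only implicitly), so nothing is missing.
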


In fact, we may generalize the results in Theorem \ref{Frechet results} to the case that $b=a+x(a)u(a)$ where $\lim_{a\rightarrow\infty}x(a)=x_0\ge 0$. By Proposition 0.5 in \citet{resnickExtremeValuesRegular1987}, we get that the convergence $\lim_{x\rightarrow\infty}\frac{\bar{F}(tx)}{\bar{F}(x)}=t^{-\frac{1}{\xi}}$ holds locally uniformly on $(0,\infty)$. Since the limit $t^{-\frac{1}{\xi}}$ is continuous in $t$, continuous convergence holds. We have that 
\begin{align*}
\lim_{a\rightarrow\infty}\frac{a+x(a)u(a)}{a}=1+\xi x_0>0,
\end{align*}
so 
\begin{align*}
\lim_{a\rightarrow\infty}\frac{\bar{F}(a+x(a)u(a))}{\bar{F}(a)}=(1+\xi x_0)^{-\frac{1}{\xi}}.
\end{align*}
Moreover, we can also follow the discussions in the above proof to get the limit of $z^*/\bar{F}(a)$, and hence the limit of $z^*/\bar{F}(b)$. For example, if we choose $b=2a$, then we can set $x(a)=a/u(a)\rightarrow 1/\xi$. Since $1/\xi>1/(\xi+1)$, we get that
\begin{equation}
    \lim_{a\rightarrow\infty}\frac{z^*}{\bar{F}(b)}=\left(1-\frac{1}{2(\xi+1)}\right)2^{\frac{1}{\xi}}.
    \label{Frechet 2u}
\end{equation}
We note that \eqref{Frechet 2u} is decreasing in $\xi$ for $\xi>0$, and it converges to 1 as $\xi\to\infty$. This shows that the DRO approach is less conservative in estimating tail probabilities for heavier-tailed distributions.

\noindent\textbf{Case 2: $\xi=0$ and $x_F=\infty$.} Now we suppose that $F$ and $f$ satisfy Assumption \ref{smoothness and shape}, Assumption \ref{MDA} with $\xi=0$, and additionally Assumption \ref{additional assumption}. Substituting $\xi$ with 0 in \eqref{Frechet approximations}, we naturally guess that as $a\to\infty$,
\begin{equation}
    \eta\sim \frac{\beta}{u(a)}, \nu\sim \frac{\beta}{u^2(a)}, \mu\sim u(a),\sigma\sim 2u^2(a).
    \label{Gumbel approximations}
\end{equation}
This guess is justified by the proposition below. 
\begin{proposition}\label{conserv_7}
    Suppose that distribution function $F$ and the corresponding density function $f$ satisfy Assumption \ref{smoothness and shape}, Assumption \ref{MDA} with $\xi=0$ and Assumption \ref{additional assumption}. We have that $\bar{F}$ is a von Mises function with the following representation:
    \begin{equation}
        \bar{F}(x)=c\exp\left\{-\int_z^x\frac{1}{u(t)}\mathrm{d}t\right\},z<x<\infty
        \label{von Mises representation}
    \end{equation}
    where $c$ is some positive constant and $u(x)=\bar{F}(x)/f(x)$ is positive and absolutely continuous with $\lim_{x\rightarrow \infty}u'(x)=0$. In addition, the following statements hold:
   \begin{align}
        \lim_{x\rightarrow \infty}\frac{f(x)u(x)}{\bar{F}(x)}&=1;\label{eta_Gumbel}\\
        \lim_{x\rightarrow \infty}-\frac{f'(x)u^2(x)}{\bar{F}(x)}&=1.\label{nu_Gumbel}
    \end{align}
\end{proposition}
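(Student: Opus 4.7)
\textbf{Proof proposal for Proposition \ref{conserv_7}.} The plan is to unpack Assumption \ref{additional assumption} directly: it already has the ratio $\bar F f'/f^2$ in its statement, which is exactly what appears (up to sign) in both the derivative of the candidate auxiliary function $u=\bar F/f$ and in the target limit \eqref{nu_Gumbel}. So the work splits into (a) verifying that $u$ has the properties required of a von Mises auxiliary function and reading off the exponential representation, and (b) checking the two displayed limits, the first of which is essentially tautological.

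First I would define $u(x):=\bar F(x)/f(x)$ on $(z,\infty)$, where $z$ is from Assumption \ref{smoothness and shape}. On this interval $f>0$ and $\bar F>0$, so $u$ is well defined and positive. Since $F$ is twice differentiable there (Assumption \ref{smoothness and shape}), $\bar F$ and $f$ are continuously differentiable, so $u$ is continuously differentiable and in particular absolutely continuous on compact subintervals. A direct computation gives
\begin{equation*}
u'(x)=\frac{-f(x)\cdot f(x)-\bar F(x)\cdot f'(x)}{f(x)^2}=-1-\frac{\bar F(x)f'(x)}{f(x)^2}.
\end{equation*}
Assumption \ref{additional assumption} says the second term tends to $-(-1)=1$ from below in absolute value, so $u'(x)\to 0$ as $x\to\infty$. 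To obtain the representation \eqref{von Mises representation}, note that $(\log\bar F)'(x)=\bar F'(x)/\bar F(x)=-f(x)/\bar F(x)=-1/u(x)$. Integrating from $z$ to $x$ yields $\bar F(x)=\bar F(z)\exp\{-\int_z^x dt/u(t)\}$, so we set $c:=\bar F(z)>0$. This is precisely the defining representation of a von Mises function with auxiliary function $u$ (see, e.g., \citet{embrechtsModellingExtremalEvents1997}, Section 3.3), so $\bar F$ is a von Mises function.

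The limit \eqref{eta_Gumbel} is immediate from the definition $u=\bar F/f$, which gives $f(x)u(x)/\bar F(x)\equiv 1$ on $(z,\infty)$. For \eqref{nu_Gumbel}, substituting $u(x)=\bar F(x)/f(x)$ yields
\begin{equation*}
-\frac{f'(x)u(x)^2}{\bar F(x)}=-\frac{f'(x)\,\bar F(x)^2/f(x)^2}{\bar F(x)}=-\frac{\bar F(x)f'(x)}{f(x)^2},
\end{equation*}
which converges to $1$ by Assumption \ref{additional assumption}. There is no substantive obstacle in this argument: everything reduces to the algebraic identity between $u'$, $u^2/\bar F$ and the specific ratio controlled by Assumption \ref{additional assumption}. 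The only thing to be careful about is ensuring differentiability of $\bar F$ and $f$ on $(z,\infty)$ (which Assumption \ref{smoothness and shape} supplies) so that the derivative computations and the integration of $(\log\bar F)'$ are legitimate.
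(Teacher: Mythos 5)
Your proposal is correct and follows essentially the same route as the paper: the paper simply cites Example 3.3.23 of \citet{embrechtsModellingExtremalEvents1997} for the von Mises representation and notes that \eqref{eta_Gumbel} is immediate from the definition of $u$ while \eqref{nu_Gumbel} is a restatement of Assumption \ref{additional assumption}, which is exactly the content you have written out explicitly (computing $u'=-1-\bar F f'/f^2$ and integrating $(\log\bar F)'=-1/u$). The only cosmetic issue is the phrase ``tends to $-(-1)=1$ from below in absolute value,'' which is muddled; all that is needed is that $\bar F f'/f^2\to-1$ forces $u'\to 0$.
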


Similar to Case 1, we may choose $b=a+xu(a)$ where $x\geq 0$ is a fixed number, and then we have the following results:
\begin{theorem}\label{conserv_8}
    Suppose that distribution function $F$ and the corresponding density function $f$ satisfy Assumption \ref{smoothness and shape}, Assumption \ref{MDA} with $\xi=0$ and additionally, Assumption \ref{additional assumption}. $b=b(a)$ is chosen as $b=a+xu(a)$ where $x\geq 0$ is a fixed number. Then we have that 
    \begin{equation}
        \lim_{u\rightarrow\infty}\frac{z^*(a,b)}{\bar{F}(b)}=
        \begin{cases}
        \frac{1}{2}e^x & \text{if }x\geq1;\\
        \left(1-x+\frac{1}{2}x^2\right)e^x & \text{if }x<1.
        \end{cases}
        \label{Gumbel error}
    \end{equation}
\end{theorem}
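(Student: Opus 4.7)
The plan is to substitute the asymptotic relations \eqref{Gumbel approximations} supplied by Proposition \ref{conserv_7} into the closed-form expression \eqref{optimal value} for $z^*$, and then compare with an asymptotic for $\bar F(b)$ obtained from the von Mises representation \eqref{von Mises representation}. As a preliminary step, I would verify that the hypothesis $\eta^2<2\beta\nu$ of Theorem \ref{optimal value theorem} is eventually satisfied: by \eqref{Gumbel approximations}, $\eta^2/(2\beta\nu)\to 1/2<1$. I would then identify the active branch of \eqref{optimal value}. With $b-a=xu(a)$ and $\mu\sim u(a)$, one has $\mu\le b-a$ for all large $a$ when $x>1$, and $\mu>b-a$ for all large $a$ when $x<1$; the boundary value $x=1$ is harmless because the two branches of \eqref{optimal value} agree at $\mu=b-a$.

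For the numerator, direct substitution using \eqref{Gumbel approximations} gives, when $x\ge 1$,
$$z^*=\tfrac{\nu}{2}(\sigma-\mu^2)\sim \tfrac{1}{2}\cdot\tfrac{\beta}{u^2(a)}\bigl(2u^2(a)-u^2(a)\bigr)=\tfrac{\beta}{2},$$
and, when $x<1$,
$$z^*=\tfrac{\nu}{2}\bigl[\sigma-2(b-a)\mu+(b-a)^2\bigr]\sim \tfrac{1}{2}\cdot\tfrac{\beta}{u^2(a)}\bigl[2u^2(a)-2xu^2(a)+x^2u^2(a)\bigr]=\beta\bigl(1-x+\tfrac{1}{2}x^2\bigr).$$

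For the denominator, the von Mises representation yields, after the substitution $t=a+su(a)$,
$$\frac{\bar F(b)}{\bar F(a)}=\exp\left\{-\int_a^{a+xu(a)}\frac{dt}{u(t)}\right\}=\exp\left\{-\int_0^x\frac{u(a)}{u(a+su(a))}\,ds\right\}.$$
Since Proposition \ref{conserv_7} provides that $u$ is absolutely continuous with $u'(t)\to 0$, for any $\varepsilon>0$ and all $a$ large enough, $|u(a+su(a))-u(a)|\le\int_a^{a+su(a)}|u'(t)|\,dt\le\varepsilon\cdot su(a)$, so $u(a+su(a))/u(a)\to 1$ uniformly in $s\in[0,x]$. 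The integral inside the exponential therefore tends to $x$, giving $\bar F(b)/\bar F(a)\to e^{-x}$; since $\bar F(a)=\beta$, this yields $\bar F(b)\sim \beta e^{-x}$.

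Taking the ratio finishes the argument: for $x\ge 1$, $z^*/\bar F(b)\to (\beta/2)/(\beta e^{-x})=\tfrac{1}{2}e^x$; for $x<1$, $z^*/\bar F(b)\to \beta(1-x+\tfrac{1}{2}x^2)/(\beta e^{-x})=(1-x+\tfrac{1}{2}x^2)e^x$, matching \eqref{Gumbel error}. The only delicate point I expect is the uniform convergence used in the denominator step; this is a standard Karamata-type argument, but it is precisely where Assumption \ref{additional assumption}, through Proposition \ref{conserv_7} (which supplies both the von Mises representation and $u'\to 0$), is indispensable. Everything else reduces to bookkeeping with the asymptotic equivalences \eqref{Gumbel approximations}.
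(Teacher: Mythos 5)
Your proof is correct and follows essentially the same route as the paper: substitute the asymptotic equivalences from Proposition \ref{conserv_7} into the two branches of the closed-form for $z^*$ in Theorem \ref{optimal value theorem}, and divide by $\bar F(b)\sim\beta e^{-x}$. The only difference is that the paper simply cites $\bar F(a+xu(a))/\bar F(a)\to e^{-x}$ to \citet{embrechtsModellingExtremalEvents1997}, whereas you derive it from the von Mises representation via the uniform-convergence argument using $u'\to 0$; you also make explicit the checks (the hypothesis $\eta^2<2\beta\nu$, the branch identification, the boundary case $x=1$) that the paper leaves implicit in the phrase ``similar to the proof of Theorem \ref{Frechet results}.''
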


Indeed, $\lim_{a\rightarrow\infty} u(a)/a=0$, so no matter how large is $x$, we always have that $(b-a)/a=xu(a)/a\rightarrow 0$ as $a\rightarrow\infty$. Thus if we choose $b=2a$ instead, then for any $x\geq 0$, $\lim_{a\rightarrow\infty}\bar{F}(b)/\bar{F}(a)\leq \lim_{a\rightarrow\infty}\bar{F}(a+xu(a))/\bar{F}(a)=e^{-x}$, and hence $\bar{F}(b)/\bar{F}(a)\rightarrow 0$ as $a\rightarrow\infty$. We also know that in this case $z^*(a,b)/\bar{F}(a)\rightarrow 1/2$. Therefore, in the Gumbel case, $z^*(a,2a)/\bar{F}(2a)\rightarrow \infty$ as $a\rightarrow\infty$. Compared with \eqref{Frechet 2u}, we conclude that in estimating the tail probabilities, the DRO approach is more conservative in the light-tail case than in the heavy-tail case.

\subsection{Conservativeness in Estimating Tail Quantiles}
We again consider two cases as follows.

\textbf{Case 1: $\xi>0$.} Suppose that $F$ and $f$ satisfy Assumption \ref{smoothness and shape} and Assumption \ref{MDA} with $\xi>0$. For simplicity, we choose $p=p(a)$ such that $1-p=x\beta$ where $1-\frac{1}{2(\xi+1)}< x\leq 1$ is a fixed number. Recall that $q^*=\infty$ if $p>1-\beta+\eta^2/(2\nu)$. Using \eqref{Frechet approximations}, we get that 
\begin{align*}
\lim_{a\rightarrow\infty}\left(\beta-\frac{\eta^2}{2\nu}\right)/\beta=1-\frac{1}{2(\xi+1)}.
\end{align*}
Thus the requirement $x>1-\frac{1}{2(\xi+1)}$ guarantees that  $1-p=x\beta>\beta-\eta^2/(2\nu)$ for sufficiently large $a$, and hence the RO approach can give a non-trivial estimate. \par 
\begin{theorem}\label{conserv_9}
    Suppose that distribution function $F$ and the corresponding density function $f$ satisfy Assumption \ref{smoothness and shape} and Assumption \ref{MDA} with $\xi>0$. $p$ is chosen as $1-p=x\beta$ where $1-\frac{1}{2(\xi+1)}< x\leq 1$ is a fixed number. Then we have that 
    \begin{equation}
        \lim_{a\rightarrow\infty}\frac{q^*}{q}=x^{\xi}\left[\frac{\xi}{\xi+1}\left(1-\sqrt{1-2(1-x)(\xi+1)}\right)+1\right].
        \label{Frechet error q}
    \end{equation}
\end{theorem}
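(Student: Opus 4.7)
The plan is to combine the closed-form expression for $q^*$ from Theorem \ref{quantile optimal value} with the asymptotic approximations of $\eta,\nu,\mu,\sigma$ from \eqref{Frechet approximations} to get the asymptotics of $q^*/a$, then separately use regular variation to obtain the asymptotics of $q/a$, and finally take the ratio.

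First, I would verify that for sufficiently large $a$ the parameter $p=p(a)$ falls in the non-trivial regime of \eqref{value of q^*}, so that $q^*$ admits the closed form
\[
q^* = a + \mu - \sqrt{\mu^2 - \sigma + \frac{2(1-p)}{\nu}}.
\]
The lower bound $1-\beta \leq p$ is equivalent to $x\leq 1$, which holds by assumption. The upper bound $p \leq 1-\beta+\eta^2/(2\nu)$ rewrites as $\eta^2/(2\nu)\geq (1-x)\beta$; substituting $\eta \sim \beta/u(a)$ and $\nu\sim (\xi+1)\beta/u^2(a)$ from Proposition \ref{conserv_5} yields $\eta^2/(2\nu)\sim \beta/(2(\xi+1))$, so the condition reduces to $x>1-1/(2(\xi+1))$, which is exactly the hypothesis.

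Next, I would substitute the asymptotics from \eqref{Frechet approximations} into the expression for $q^*-a$. With $1-p=x\beta$, the term $2(1-p)/\nu \sim 2xu^2(a)/(\xi+1)$, so the radicand satisfies
\[
\mu^2-\sigma+\frac{2(1-p)}{\nu} \sim \frac{u^2(a)}{(\xi+1)^2}\bigl[1-2(\xi+1)+2x(\xi+1)\bigr] = \frac{u^2(a)}{(\xi+1)^2}\bigl[1-2(1-x)(\xi+1)\bigr].
\]
The hypothesis $x>1-1/(2(\xi+1))$ keeps the bracket strictly positive, so taking square roots gives
\[
q^* - a \;\sim\; \frac{u(a)}{\xi+1}\Bigl[1-\sqrt{1-2(1-x)(\xi+1)}\Bigr].
\]
Dividing by $a$ and using $u(a)/a\to\xi$ (which follows from the Karamata representation \eqref{Karamata}), I conclude
\[
\frac{q^*}{a}\;\longrightarrow\; 1+\frac{\xi}{\xi+1}\Bigl[1-\sqrt{1-2(1-x)(\xi+1)}\Bigr].
\]

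For the true quantile, since $1-p=x\beta=x\bar F(a)$, the definition of $q$ gives $\bar F(q)/\bar F(a)=x$. Because $\bar F\in RV_{-1/\xi}$ and the limit in $\bar F(tx)/\bar F(x)\to t^{-1/\xi}$ is continuous and locally uniform in $t$ (Proposition 0.5 of \citet{resnickExtremeValuesRegular1987}), inverting this relation forces $q/a\to x^{-\xi}$ as $a\to\infty$. Dividing the two displayed limits yields exactly \eqref{Frechet error q}. The main technical care is in the last step: one must justify that $q/a$ remains bounded and bounded away from zero in order to invoke continuous (rather than merely pointwise) convergence of $\bar F(ta)/\bar F(a)$; this can be handled by sandwiching $q/a$ between values of $t$ for which $\bar F(ta)/\bar F(a)$ sits strictly below and above $x$ for large $a$, using monotonicity of $\bar F$ together with the regular variation limit applied at two fixed points.
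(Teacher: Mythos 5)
Your computation of $q^*-a$ is correct and is exactly the paper's: you check (as the paper does in the discussion preceding the theorem) that $1-p=x\beta>\beta-\eta^2/(2\nu)$ eventually, so the finite branch of \eqref{value of q^*} applies, and then you substitute the asymptotics \eqref{Frechet approximations} and $u(a)/a\to\xi$ to get $(q^*-a)/a\to\frac{\xi}{\xi+1}\bigl(1-\sqrt{1-2(1-x)(\xi+1)}\bigr)$; the radicand computation matches. Where you diverge is the true quantile. The paper invokes the standard EVT result for the tail quantile function $U(t)=F^{-1}(1-t^{-1})$, namely $\lim_{s\to\infty}\frac{U(st)-U(s)}{u(U(s))}=\frac{t^\xi-1}{\xi}$, applied with $s=1/\beta$, $t=1/x$, which directly yields $(q-a)/u(a)\to(x^{-\xi}-1)/\xi$ and hence $q/a\to x^{-\xi}$. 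You instead derive $q/a\to x^{-\xi}$ from the defining relation $\bar F(q)=x\bar F(a)$ together with locally uniform convergence of $\bar F(ta)/\bar F(a)\to t^{-1/\xi}$, handling the inversion by sandwiching $q/a$ via monotonicity of $\bar F$ (and note $q\geq a$ automatically since $p\geq 1-\beta$, so only the upper bound needs work). This is a more self-contained argument that avoids quoting the $U$-asymptotics, at the cost of the extra subsequence/sandwich step you correctly identify; both routes rest on the same regular-variation fact and give the same limit, so the proof is sound.
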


We note that as $\xi$ grows, the feasible interval for $x$, i.e. $(1-\frac{1}{2(\xi+1)},1]$, becomes narrower. We also note that \eqref{Frechet error q} has an upper bound which only depends on $\xi$. Roughly speaking, for any $\xi>0$ and $1-\frac{1}{2(\xi+1)}<x\leq 1$, the value of \eqref{Frechet error q} is always bounded by $\frac{\xi}{\xi+1}+1$, which is increasing with $\xi$. Thus, for heavier-tailed distributions, estimating the tail quantiles is more conservative.

\textbf{Case 2: $\xi=0$ and $x_F=\infty$.} Now we suppose that $F$ and $f$ satisfy Assumption \ref{smoothness and shape}, Assumption \ref{MDA} with $\xi=0$, and additionally, Assumption \ref{additional assumption}. Similarly, we choose $p$ such that $1-p=x\beta$ where $1/2<x\leq 1$ is a fixed number, which guarantees that $q^*<\infty$ for sufficiently large $a$. In this case, we have the following theorem:
\begin{theorem}\label{conserv_10}
    Suppose that distribution function $F$ and the corresponding density function $f$ satisfy Assumption \ref{smoothness and shape}, Assumption \ref{MDA} with $\xi=0$ and additionally, Assumption \ref{additional assumption}. $p$ is chosen as $1-p=x\beta$ where $1/2< x\leq 1$ is a fixed number. Then we have that 
    \begin{equation}
        \lim_{a\rightarrow\infty}\frac{q^*}{q}=1.
        \label{Gumbel error q}
    \end{equation}
\end{theorem}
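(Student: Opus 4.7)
The plan is to combine the closed-form expression for $q^*$ from Theorem \ref{conserv_4} with the asymptotic equivalences in Proposition \ref{conserv_7}, and to derive the asymptotics of the true quantile $q$ from the Gumbel maximum-domain-of-attraction convergence. The key observation driving everything is that both $q^*-a$ and $q-a$ turn out to be of order $u(a)$, while $u(a)=o(a)$, so the ratio $q^*/q$ is asymptotically dominated by $a/a=1$.

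For the worst-case side, I would first dispose of the degenerate case $x=1$: then $p=1-\beta$ forces $q=a$, and substitution into formula \eqref{value of q^*} gives $q^*=a+\mu-\sqrt{\mu^2-\sigma+\sigma}=a$ exactly. For $1/2<x<1$, Proposition \ref{conserv_7} yields $\eta^2/(2\nu\beta)\to 1/2$, so the admissibility condition $1-p=x\beta\ge \beta-\eta^2/(2\nu)$ holds for all sufficiently large $a$ and the nontrivial branch of \eqref{value of q^*} applies. Substituting the equivalences $\mu\sim u(a)$, $\sigma\sim 2u^2(a)$, and $2(1-p)/\nu = 2x\beta/\nu\sim 2xu^2(a)$ into the radicand gives $\mu^2-\sigma+2(1-p)/\nu\sim(2x-1)u^2(a)$, so
\[
q^*-a\sim u(a)\bigl(1-\sqrt{2x-1}\bigr).
\]

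For the true quantile, the defining equation $\bar F(q)=1-p=x\bar F(a)$ can be rewritten, via the change of variable $y(a):=(q-a)/u(a)$, as $G_a(y(a))=x$ where $G_a(y):=\bar F(a+yu(a))/\bar F(a)$. The standard Gumbel-MDA convergence asserts $G_a(y)\to e^{-y}$ pointwise, which, together with continuity of the limit and monotonicity of each $G_a$ (and Proposition 0.5 of \citet{resnickExtremeValuesRegular1987} as used earlier in the section), upgrades to continuous convergence and allows inversion at the level $x$. This yields $y(a)\to\log(1/x)$, hence
\[
q-a\sim u(a)\log(1/x).
\]
The conclusion is now immediate: Proposition \ref{conserv_7} provides an absolutely continuous $u$ with $u'(x)\to 0$, so by L'Hopital (or direct integration) $u(a)/a\to 0$. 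Thus $q^*-a$ and $q-a$ are both $o(a)$, giving $q^*/q=(a+o(a))/(a+o(a))\to 1$.

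The main obstacle is the inversion step that produces $q-a\sim u(a)\log(1/x)$: pointwise convergence of $G_a$ must be strengthened enough to invert at level $x$, which is where the Gumbel structure genuinely enters and where the restriction $x>1/2$ (ensuring $q^*<\infty$ for large $a$) becomes essential. Every other step is either a direct substitution into the Proposition \ref{conserv_7} equivalences or a routine asymptotic simplification.
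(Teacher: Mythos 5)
Your proof is correct, but it does considerably more work than the paper's argument, and the step you identify as the ``main obstacle'' is one the paper avoids entirely. The paper's proof is a two-line squeeze: since the true density is feasible for \eqref{quantile optimization problem} for large $a$, one has $q\le q^*$; since $p\ge 1-\beta$, one has $q\ge a$; and the explicit formula \eqref{value of q^*} gives the crude bound $q^*\le a+\mu$. Hence $1\le q^*/q\le (a+\mu)/a$, and the right side tends to $1$ because $\mu/u(a)\to 1$ (Proposition \ref{conserv_7}) and $u(a)/a\to 0$. In particular the paper never needs the asymptotics of the true quantile $q$ at all --- only $q\ge a$ --- so the inversion of the Gumbel limit $\bar F(a+yu(a))/\bar F(a)\to e^{-y}$ that you labor over is unnecessary for the stated conclusion. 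What your computation buys in exchange is strictly more information: the first-order expansions $q^*-a\sim u(a)\bigl(1-\sqrt{2x-1}\bigr)$ and $q-a\sim u(a)\log(1/x)$ quantify the absolute gap $q^*-q$ on the scale $u(a)$, which the squeeze argument cannot see (and which explains why the \emph{relative} error vanishes here while it does not in the Fr\'echet case, where $u(a)\asymp \xi a$). Your handling of the boundary case $x=1$ and of the admissibility condition $x>1/2$ via $\eta^2/(2\nu\beta)\to 1/2$ is correct and matches the discussion preceding the theorem. So: correct, sharper than needed, but reliant on a continuous-convergence inversion step that the paper's simpler bounding strategy sidesteps.
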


Compared to Theorem \ref{conserv_9}, in the light-tail case not only the feasible interval for $x$ is wider, but also the limit of the relative error is smaller, and thus it is less conservative to estimate tail quantiles.

\subsection{Examples}
For Cases 1 and 2, we respectively consider the Pareto distribution and the standard normal distribution as examples, which verify our main conclusions about the relationship between the conservativeness of the DRO approach and the heaviness of the tail.
\begin{example}[Pareto distribution]
Suppose that $X$ has a Pareto distribution with scale parameter $x_m>0$ and shape parameter $\alpha>0$. Then the tail distribution function is $\bar{F}(x)=(x_m/x)^{\alpha}$ for $x\in [x_m,\infty)$. It is known that $F\in MDA(H_{\xi})$ where $\xi=\alpha^{-1}>0$. The larger the $\alpha$ is, the smaller $\xi$ is, and the lighter is the tail. In estimating tail probabilities, we choose $b=ka$ where $k\geq 1$. Then by the discussion following Theorem \ref{Frechet results}, we get that the limit of relative error is
\begin{align*}
\lim_{a\rightarrow\infty}\frac{z^*(a,b(a))-\bar{F}(b(a))}{\bar{F}(b(a))}=
\begin{cases}
\left(1-\frac{1}{2(\xi+1)}\right)k^{\frac{1}{\xi}}-1 &\text{ if }k\geq 1+\frac{\xi}{\xi+1};\\
\left(1-\frac{k-1}{\xi}+\frac{(\xi+1)(k-1)^2}{2\xi^2}\right)k^{\frac{1}{\xi}}-1 & \text{ if }k<1+\frac{\xi}{\xi+1}.
\end{cases}
\end{align*}
In estimating tail quantiles, we choose $1-p=x\beta$ where $1-\frac{1}{2(\xi+1)}<x\leq 1$. Then by Theorem \ref{conserv_9}, we get that the limit of relative error is 
\begin{align*}
\lim_{a\rightarrow\infty}\frac{q^*(p(a))-q(p(a))}{q(p(a))}=x^{\xi}\left[\frac{\xi}{\xi+1}\left(1-\sqrt{1-2(1-x)(\xi+1)}\right)+1\right]-1.
\end{align*}
Figure \ref{Pareto} $(a)$ shows how the limit of relative error changes with $k$ and $\alpha$ in estimating probabilities while $(b)$ shows the change with $x$ and $\alpha$ in estimating quantiles. It can be seen that in estimating probabilities, heavier tail gives less conservativeness; however, in estimating quantiles, for the heavier tail, the limit of relative error is larger and meanwhile the range of $x$ that gives non-trivial estimate $q^*$ is smaller, so the DRO approach is more conservative.
\par 
\begin{figure}[htbp] 
\centering    
\begin{subfigure}{0.4\textwidth}
\centering
\includegraphics[width=1\textwidth]{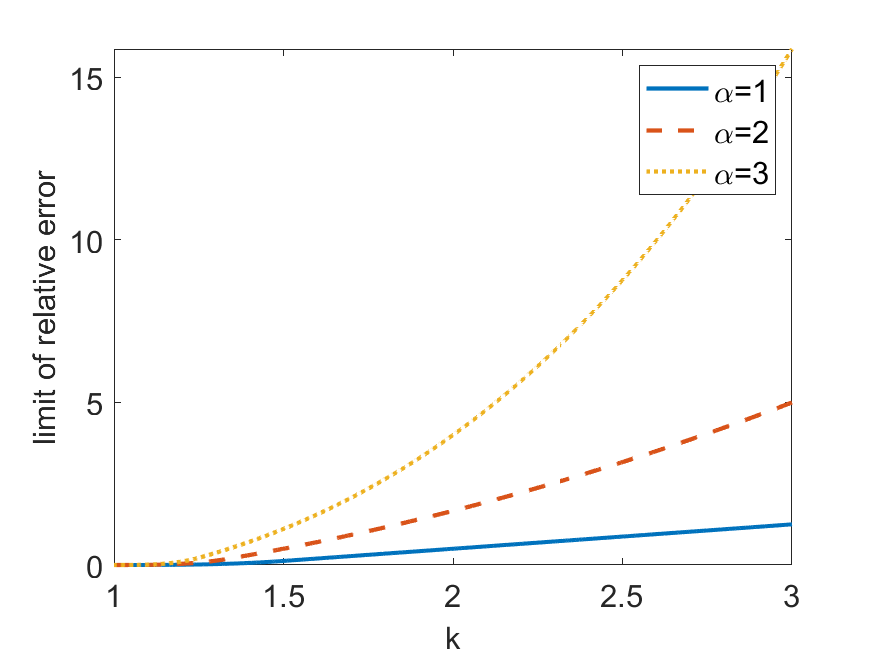}  
\caption{Estimating Tail Probabilities}
\end{subfigure}
\begin{subfigure}{0.4\textwidth}
\centering
\includegraphics[width=1\textwidth]{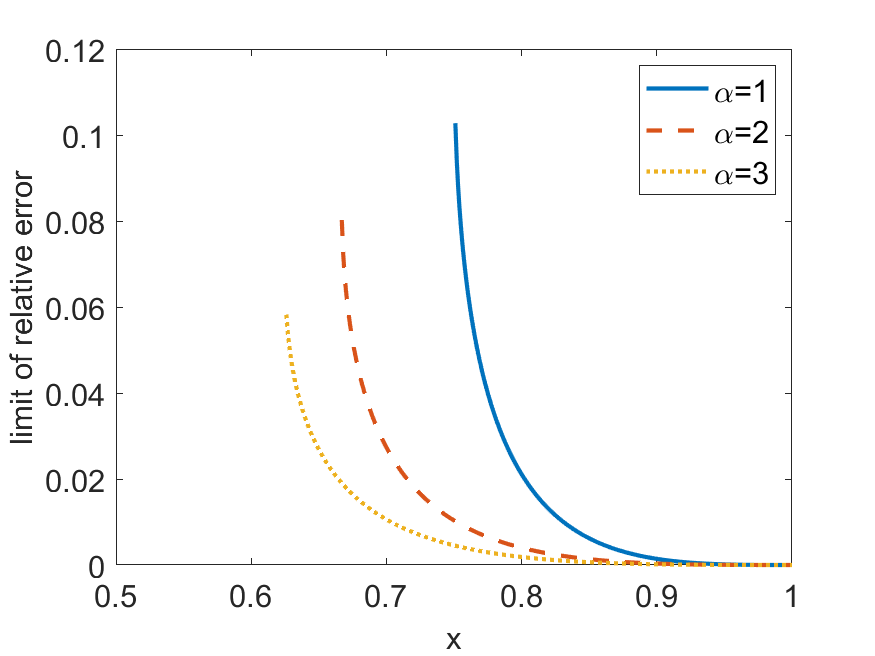}     
\caption{Estimating Tail Quantiles}
\end{subfigure}
\caption{Limit of Relative Errors for the Pareto Distribution}
\label{Pareto}     
\end{figure}
\end{example}
\begin{example}[Standard normal distribution]
Suppose that $X$ has a standard normal distribution. It is known that the distribution function $F\in MDA(H_{\xi})$ where $\xi=0$. In particular, $F$ is a von Mises function with auxiliary function $u(x)=\bar{F}(x)/f(x)\sim 1/x,x\rightarrow\infty$. In estimating tail probabilities, if $b=a+xu(a)$, we always have $b-a\rightarrow 0$ as $a\rightarrow \infty$. Nevertheless, the value in \eqref{Gumbel error} still grows exponentially in $x$. Therefore, the DRO approach is very conservative in estimating probabilities. In estimating tail quantiles, we may choose $1-p=x\beta$ for $1/2<x\leq 1$. This range is larger than that for any $\xi>0$, and yet the limit of relative error is always equal to 0, which means that the DRO approach is less conservative than in the Fr\'{e}chet case.  
\end{example}

\section{Optimization Reformulations and Solution Tractability}\label{sec:computation}
In this section, we focus mainly on $\psi(P)$ as an expectation, i.e., $\mathbb E_P[h(X)]$ for some function $h$. When the target $\psi(P)$ is a quantile, the analysis can be reduced to the expectation case because $\psi(P)= \min \{q:\mathbb{P}(X\leq q)\geq p\}$ can be written as $\psi(P)= \min \{q:\mathbb{E}[h(X)]\geq p-\mathbb{P}(X\leq a)\}$ where $h(X)=\mathbb{I}(a\leq X\leq q)$, and the non-tail part $\mathbb{P}(X\leq a)$ is supposedly handleable by standard statistical tools such as via the empirical distribution. Therefore, in finding the quantile, one could consider line search methods such as bisection on $q$ to obtain the minimum $q$ such that the following holds:
\begin{align*}
    \max_{P\in \mathbb F(\mathfrak{P}(\mathscr{P},\bm{g},\mathbb{S}))} \mathbb{E}[\mathbb{I}(a\leq X\leq q)]\geq p-\mathbb{P}(X\leq a).
\end{align*}

There exists some challenges in solving \eqref{general_framework} as it is an infinite-dimensional optimization problem, and the geometric shape constraint imposes extra complication. In the following subsections, we show how we leverage techniques in the optimization literature to reduce \eqref{general_framework} into a moment problem which can then be dualized into standard solvable program classes.


\subsection{Transformation to Moment Problems }


Given $\psi(P)=\mathbb E_P[h(X)]$, moment constraints configured by $\bm g$ and $\mathbb S$, and shape information $\mathscr{P}$ as either $\mathscr{P}^1_{a,\eta}$ or $\mathscr{P}^2_{a,\underline{\eta},\bar{\eta},\nu}$, \eqref{general_framework} can be written as
\begin{align}\label{general_framework1}
\begin{split}
\mathfrak{P}(h,\mathscr{P},\bm{g},\mathbb{S}): \quad \max_{P}\ \mathbb E_P[h(X)] \text{\ \ subject to\ }\ P\in\mathscr{P},\ \mathbb{E}_P[ \bm{g}(X)]\in\mathbb{S}
\end{split}
\end{align}
where we have now highlighted the role of $h$ in the notation $\mathfrak{P}(h,\mathscr{P},\bm{g},\mathbb{S})$. Our first step in handling \eqref{general_framework1} is to convert it to an equivalent moment problem:
\begin{align}
\mathfrak{M}_a(H,\bm{G},\mathbb{S}'): \ \max_{X\sim Q}\   \mathbb{E}_Q [H(X)]\text{\ \ subject to\ }\  \mathbb{E}_Q[ \bm{G}(X)]\in\mathbb{S}',\ Q\in \mathscr{P}[a,\infty)
\label{moment problem}
\end{align}
where the set $\mathbb{S}'$ is derived from $\mathbb{S}$, $\mathscr{P}[a,\infty)$ denotes the class of probability distributions with support $[a,\infty)$, and function $H$ and vector function $\bm{G}$ are derived from $h$ and $\bm{g}$ respectively. More precisely, we have the following result. 

\begin{theorem}
\label{thm:mon}
Suppose $h$ is bounded and each $g_i$ is bounded from below with support $x\geq a$. 

\begin{enumerate}
\item $\mathfrak{P}(h,\mathscr{P}^{1}_{a,\eta},\bm{g},\mathbb{S})$ is equivalent to $\mathfrak{M}_a\big(\eta\tilde{H},\eta \tilde{\bm{G}},\mathbb{S}\big)$
where $\tilde{H}(x)=\int_{a}^x h(u)du$ and $\tilde{G}_i(x)=\int_{a}^x g_i(u)du$ for each component $i$ in the vector function $\tilde{\bm{G}}$. A bijective transformation between a feasible solution $P$ of $\mathfrak{P}(h,\mathscr{P}^{1}_{a,\eta},\bm{g},\mathbb{S})$ and $Q$ of $\mathfrak{M}_a(\eta\tilde{H},\eta \tilde{\bm{G}},\mathbb{S})$ is given by $P'_+(x)=\eta(1-Q(x))$ (viewing $P$ and $Q$ as distribution functions).

\item $\mathfrak{P}(h,\mathscr{P}^{2}_{a,\underline{\eta},\bar{\eta},\nu},\bm{g},\mathbb{S})$ is equivalent to $\mathfrak{M}_a\Big(\nu H,\big(\nu (x-a),\nu \bm{G}^\top\big)^\top,\mathbb{S}_{R}(\underline{\eta},\bar{\eta})\times\mathbb{S}\Big)$
where $H(x)=\int_{a}^x\int_{a}^u h(v)dvdu$ and $G_i(x)=\int_{a}^x\int_{a}^u g_i(v)dvdu$ for each component $i$ in the vector function $\bm{G}$. A bijective transformation between a feasible solution $P$ in $\mathfrak{P}(h,\mathscr{P}^{2}_{a,\underline{\eta},\bar{\eta},\nu}, \bm{g},\mathbb{S})$ and $Q$ in $\mathfrak{M}_a\Big(\nu H,\big(\nu (x-a),\nu \bm{G}^\top\big)^\top,\mathbb{S}_{R}(\underline{\eta},\bar{\eta})\times\mathbb{S} \Big)$ is given by $-P^{(2)}_+(x)=\nu(1-Q(x))$, where $P^{(2)}_+$ denotes the second-order right derivative of $P$ (viewing $P$ and $Q$ as distribution functions).
\end{enumerate}
\end{theorem}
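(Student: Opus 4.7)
The plan for both parts is to construct an explicit bijection between feasible $P$ in the shape-constrained DRO $\mathfrak{P}$ and feasible $Q$ in the purely moment-constrained moment problem $\mathfrak{M}_a$, and then to check directly that under this bijection the objective values and moment constraints agree. The underlying tool is a Khintchine-type representation of monotone (respectively convex) tail densities via survival functions of probability measures on $[a,\infty)$, which is the Choquet theme alluded to in the paper's keywords.

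For part (1), the key observation is that a right-continuous, non-increasing, non-negative function $f$ on $[a,\infty)$ with $f(a)\leq \eta$ admits the representation $f(x)=\eta\bar Q(x)$, where $\bar Q(x)=Q((x,\infty))$ is the survival function of a probability measure $Q$ on $[a,\infty)$; any slack $f(a)<\eta$ is absorbed by an atom of $Q$ at $a$ of size $1-f(a)/\eta$. Conversely, for any such $Q$, the function $\eta\bar Q$ is manifestly non-increasing, right-continuous, and bounded above by $\eta$. This is the stated bijection $P'_+(x)=\eta(1-Q(x))$. Matching the objective via Fubini,
\begin{align*}
\mathbb{E}_P[h(X)] = \int_a^\infty h(x)\,\eta\bar Q(x)\,dx = \eta\int_{[a,\infty)}\int_a^t h(x)\,dx\,dQ(t) = \eta\,\mathbb{E}_Q[\tilde H(X)],
\end{align*}
where $\tilde H(a)=0$ harmlessly absorbs any atom of $Q$ at $a$; replacing $h$ by each $g_i$ gives $\mathbb{E}_P[g_i(X)]=\eta\,\mathbb{E}_Q[\tilde G_i(X)]$, so the constraint $\mathbb{E}_P[\bm g(X)]\in\mathbb{S}$ translates into $\eta\,\mathbb{E}_Q[\tilde{\bm G}(X)]\in\mathbb{S}$, yielding exactly $\mathfrak{M}_a(\eta\tilde H,\eta\tilde{\bm G},\mathbb{S})$.

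For part (2), the plan is to apply the part-(1) representation to the auxiliary function $-f'_+$ rather than to $f$ itself. Since $f$ is non-negative, convex, and integrable on $[a,\infty)$, $-f'_+$ is automatically non-negative, non-increasing and right-continuous on $[a,\infty)$ with $-f'_+(a)\leq\nu$; so it admits the representation $-f'_+(x)=\nu\bar Q(x)$ for a probability $Q$ on $[a,\infty)$, giving the stated bijection $-P^{(2)}_+(x)=\nu(1-Q(x))$. Integrating from $x$ to $\infty$ using $f(\infty)=0$ yields $f(x)=\nu\int_x^\infty\bar Q(t)\,dt=\nu\,\mathbb{E}_Q[(X-x)_+]$, so in particular $f(a)=\nu\,\mathbb{E}_Q[X-a]$, and the boundary condition $\underline\eta\leq f(a)\leq\bar\eta$ becomes $\nu\,\mathbb{E}_Q[X-a]\in[\underline\eta,\bar\eta]$, i.e., the rectangular piece $\mathbb{S}_R(\underline\eta,\bar\eta)$. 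A double Fubini argument then gives
\begin{align*}
\mathbb{E}_P[h(X)] = \nu\int_a^\infty h(x)\,\mathbb{E}_Q[(X-x)_+]\,dx = \nu\,\mathbb{E}_Q\!\left[\int_a^X\int_a^u h(v)\,dv\,du\right] = \nu\,\mathbb{E}_Q[H(X)],
\end{align*}
and analogously $\mathbb{E}_P[g_i(X)]=\nu\,\mathbb{E}_Q[G_i(X)]$. Stacking these with the boundary constraint produces exactly $\mathfrak{M}_a\bigl(\nu H,(\nu(x-a),\nu\bm G^\top)^\top,\mathbb{S}_R(\underline\eta,\bar\eta)\times\mathbb{S}\bigr)$.

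The main obstacle I anticipate is not the Fubini manipulations themselves—those are routine given the boundedness of $h$ and the lower-boundedness and tail-support of each $g_i$—but rather the careful verification that the proposed parametrizations map \emph{onto} and only onto the prescribed shape classes. In particular, one must confirm that the density $f(x)=\nu\,\mathbb{E}_Q[(X-x)_+]$ constructed from $Q$ in part (2) is genuinely convex and non-negative on $[a,\infty)$ (convexity from $f''=\nu\,dQ\geq 0$ as a measure, non-negativity from $(X-x)_+\geq 0$), and handle edge cases arising from atoms of $Q$—corresponding to slack at the boundary in part (1), and to kinks or jumps in $-f'_+$ in part (2).
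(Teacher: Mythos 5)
Your proposal is correct, and it reaches the result by a route that is recognizably a streamlined hybrid of the two proofs the paper gives. The paper's first proof performs the substitution $p(x)=1-f(x)/\eta$ (resp.\ $p(x)=1+f_+'(x)/\nu$) --- which is exactly your survival-function parametrization $f=\eta\bar Q$, $-f_+'=\nu\bar Q$ --- but verifies the correspondence of objective and constraints by repeated integration by parts, which forces it to prove several auxiliary lemmas showing that boundary terms such as $x^{\alpha+1}f(x)$, $x^{\alpha+2}f_+'(x)$, $\tilde G(x)f(x)$ and $G(x)f_+'(x)$ vanish at infinity. The paper's second proof instead invokes Khintchine/Choquet representations (monotone densities as mixtures of uniform densities on $[a,a+z)$, convex densities as mixtures of triangular densities $(a+z-x)^+/(z^2/2)$) and then reweights the mixing measure by $1/z$ or $2/z^2$ to arrive at $Q$. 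You bypass both the boundary-limit lemmas and the mixture theorem by writing $\bar Q$ directly and using the layer-cake/Tonelli--Fubini identities $\int_a^\infty h(x)\bar Q(x)\,dx=\mathbb E_Q[\tilde H(X)]$ and $\int_x^\infty\bar Q=\mathbb E_Q[(X-x)_+]$; this is cleaner and makes the stated bijections $P_+'=\eta(1-Q)$ and $-P_+^{(2)}=\nu(1-Q)$ transparent rather than emergent. The one place where you defer work that the paper's lemmas actually carry out is the Fubini interchange for the sign-changing, possibly unbounded $g_i$: since each $g_i$ is only bounded below, you need to split $g_i=(g_i+c\,\mathbb I(x\ge a))-c\,\mathbb I(x\ge a)$, apply Tonelli to each nonnegative piece, and use the fact that the always-present component $\mathbb I(\cdot\ge a)$ of $\bm g$ forces $\mathbb E_Q[X-a]<\infty$ (resp.\ $\mathbb E_Q[(X-a)^2]<\infty$ in the convex case) for any feasible $Q$, so that $\mathbb E_Q[\tilde G_i^-]$ and $\mathbb E_Q[G_i^-]$ are finite and the identities hold in $(-\infty,+\infty]$ with feasibility preserved in both directions. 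You correctly flag this and the onto-ness of the parametrization as the remaining checkpoints; with those details written out, your argument is complete.
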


To illustrate Theorem \ref{thm:mon}, consider for example a tail interval probability as the objective, in which $h(x) = \mathbb{I}(L\leq x \leq R)$ for some given number $L,R$. We have 
\begin{align*}
\tilde{H}(x) &= (x-L)\mathbb{I}(L\leq x\leq R)+(R-L)\mathbb{I}(x\geq R); \\ H(x) &= \frac{1}{2}(x-L)^2\mathbb{I}(L\leq x\leq R)+(R-L)(x-\frac{R+L}{2})\mathbb{I}(x\geq R).
\end{align*}
If $g_{i}(x)=x^{i-1}\mathbb{I}(x\geq a),i\geq 1$, we have 
\begin{align*}
\tilde{G}_i(x)&=(x^{i}-a^{i})/i,\ G_i(x)=x^{i+1}/(i^2+i)-a^{i}x/i+a^{i+1}/(i+1).
\end{align*}

Theorem \ref{thm:mon} shows that \eqref{general_framework} is equivalent to a moment-constrained program, by identifying the decision variable (as a probability distribution) via a one-to-one map with a probability distribution function with support on $[a,\infty)$. We give two distinct methods to prove Theorem \ref{thm:mon}. The first one is an integration-by-parts technique that involves replacing the distribution function in the decision variable by its derivative. This approach is built on \cite{lamTailAnalysisParametric2017} that considers a more restrictive formulation. The second method is Choquet's theory, which in convex analysis implies the representation of any point in a compact convex set by a mixture of its extreme points. In our context, the stipulated class of probability distributions can be written as a mixture representation of simpler distributions. This then allows one to rewrite the optimization problem in terms of the mixture distribution as the decision variable, and subsequently remove the shape constraint. Though the main idea of this method follows from some existing DRO works \citep{popescuSemidefiniteProgrammingApproach2005,van2016generalized,li2019ambiguous}, our theorem allows for general moment set $\mathbb{S}$ and inequality constraints for the accompanying parameters in the shape constraint $\mathscr{P}^1_{a,\eta}$ or $\mathscr{P}^2_{a,\underline{\eta},\bar{\eta},\nu}$, instead of singleton used in these works.

\subsection{Dualization to Semidefinite Programs}
The moment problem \eqref{moment problem} has a finite number of constraints but an infinite-dimensional decision variable. In the following, we transform it into a dual program with finite-dimensional decision variable but an infinite number of constraints, which we can further reduce to a more tractable formulation. Since our set $\mathbb S'$ in the moment problem \eqref{moment problem} generally consists of both ellipsoid and rectangular sets, we write our theorem in this generality as well. First, we introduce the following assumption for guaranteeing strong duality:
\begin{assumption}[Slater Condition]\label{slater_condition}
In $\mathfrak{M}_a(H,\mathbf{G},\mathbb{S})$ where $(G_j)_{j=1}^d$ are measurable functions, there exists $P\in \mathscr{P}[a,\infty)$ such that $\mathbb{E}_P[\bm{G}(X)]$ lies in the interior of $\mathbb{S}$. 
\end{assumption}

We have the following duality result:


\begin{theorem}\label{p2d}
The dual problem of $\mathfrak{M}_a\big( H,(\bm{G_1}^\top,\bm{G_2}^\top)^\top,\mathbb{S}_{E}(\bm{\mu},\bm{\Sigma},r)\times \mathbb{S}_{R}(\underline{\bm{\mu}},\bar{\bm{\mu}})\big)$, with given constant values $\bm{\mu},\bm{\Sigma},\bm{\bar{\mu}},\bm{\underline{\mu}},a$, function $H$ and vector functions $\bm{G_1},\bm{G_2}$, is 
\begin{subequations}\label{mix_F_d}
\begin{align}
\min_{\substack{\kappa,\bm{\lambda_1}\geq 0,\bm{\lambda_2}\geq 0,\\   \|\bm{u}\|_2\leq \lambda} } \ \ \  &\kappa+\lambda+r^{-1/2}\bm{u}^\top\bm{\Sigma}^{-1/2}\bm{\mu}+\bm{\lambda_1}^\top \bm{\bar{\mu}}-\bm{\lambda_2}^\top \bm{\underline{\mu}}\label{mix_F_d_a}\\
\text{ s.t. } \ \  &- H(x)+r^{-1/2}\bm{u}^\top \bm{\Sigma}^{-1/2}\bm{G_1}(x)+(\bm{\lambda_1}-\bm{\lambda_2})^\top\bm{G_2}(x)+\kappa \geq 0 ,\ \forall x\geq a. \label{mix_F_d_b}
\end{align}
\end{subequations}
Here, $\kappa,\bm{u},\lambda,\bm{\lambda_1},\bm{\lambda_2}$ are decision variables. The optimal value of \eqref{mix_F_d} is at least that of $\mathfrak{M}_a\big( H,(\bm{G_1}^\top,\bm{G_2}^\top)^\top,\mathbb{S}_{E}(\bm{\mu},\bm{\Sigma},r)\times \mathbb{S}_{R}(\underline{\bm{\mu}},\bar{\bm{\mu}})\big)$ and, under Assumption \ref{slater_condition}, they attain equality.
\end{theorem}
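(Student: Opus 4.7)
The plan is to identify \eqref{mix_F_d} as the Lagrangian dual of the generalized moment problem, to check weak duality by a direct integration against $dQ$, and then to invoke a standard strong duality theorem for infinite-dimensional conic LPs to close the gap under Assumption~\ref{slater_condition}.

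To obtain the form \eqref{mix_F_d}, I would first lift the problem by introducing auxiliary variables $\bm v$ with $\|\bm v\|_2\le\sqrt{r}$ and $\bm w$ with $\underline{\bm{\mu}}\le\bm w\le\bar{\bm{\mu}}$, recasting the two moment inclusions as the equalities $\mathbb{E}_Q[\bm{G_1}]=\bm{\mu}+\bm{\Sigma}^{1/2}\bm v$ and $\mathbb{E}_Q[\bm{G_2}]=\bm w$. Attaching multipliers $\bm{u'}$, $\bm\ell$, and $\kappa$ respectively to these two vector equalities and to the probability-mass equality $\int dQ=1$ gives the Lagrangian
\begin{align*}
\int_{a}^{\infty}\bigl[H(x)-\bm{u'}^\top\bm{G_1}(x)-\bm\ell^\top\bm{G_2}(x)-\kappa\bigr]\,dQ(x)+\bm{u'}^\top\bm{\mu}+\bm{u'}^\top\bm{\Sigma}^{1/2}\bm v+\bm\ell^\top\bm w+\kappa.
\end{align*}
Maximizing over nonnegative measures $Q$ on $[a,\infty)$ enforces the semi-infinite inequality $H(x)-\bm{u'}^\top\bm{G_1}(x)-\bm\ell^\top\bm{G_2}(x)-\kappa\le 0$ for all $x\ge a$; maximizing over the Euclidean ball in $\bm v$ produces $\sqrt{r}\,\|\bm{\Sigma}^{1/2}\bm{u'}\|_2$; and maximizing componentwise over the box in $\bm w$ produces $\bm{\lambda_1}^\top\bar{\bm{\mu}}-\bm{\lambda_2}^\top\underline{\bm{\mu}}$ after writing $\bm\ell=\bm{\lambda_1}-\bm{\lambda_2}$ with $\bm{\lambda_1},\bm{\lambda_2}\ge 0$ (the optimal decomposition takes $\bm{\lambda_1},\bm{\lambda_2}$ to be the positive and negative parts of $\bm\ell$). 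The substitution $\bm u=\sqrt{r}\,\bm{\Sigma}^{1/2}\bm{u'}$ converts the norm term into $\|\bm u\|_2$ and the linear term in $\bm{u'}$ into $r^{-1/2}\bm u^\top\bm{\Sigma}^{-1/2}\bm{\mu}$, and introducing an epigraph variable $\lambda$ with $\|\bm u\|_2\le\lambda$ yields exactly \eqref{mix_F_d}.

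For weak duality I would take any primal feasible $Q$ and dual feasible $(\kappa,\bm u,\lambda,\bm{\lambda_1},\bm{\lambda_2})$, integrate \eqref{mix_F_d_b} against $dQ$, and use $\int dQ=1$ to obtain $\mathbb{E}_Q[H]\le r^{-1/2}\bm u^\top\bm{\Sigma}^{-1/2}\mathbb{E}_Q[\bm{G_1}]+(\bm{\lambda_1}-\bm{\lambda_2})^\top\mathbb{E}_Q[\bm{G_2}]+\kappa$. Writing $\mathbb{E}_Q[\bm{G_1}]=\bm{\mu}+\bm\delta$ with $\bm\delta^\top\bm{\Sigma}^{-1}\bm\delta\le r$, Cauchy--Schwarz combined with $\|\bm u\|_2\le\lambda$ yields $r^{-1/2}\bm u^\top\bm{\Sigma}^{-1/2}\bm\delta\le r^{-1/2}\|\bm u\|_2\,\|\bm{\Sigma}^{-1/2}\bm\delta\|_2\le r^{-1/2}\lambda\sqrt{r}=\lambda$, while $\underline{\bm{\mu}}\le\mathbb{E}_Q[\bm{G_2}]\le\bar{\bm{\mu}}$ and $\bm{\lambda_1},\bm{\lambda_2}\ge 0$ yield $(\bm{\lambda_1}-\bm{\lambda_2})^\top\mathbb{E}_Q[\bm{G_2}]\le\bm{\lambda_1}^\top\bar{\bm{\mu}}-\bm{\lambda_2}^\top\underline{\bm{\mu}}$. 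Substituting these two bounds gives $\mathbb{E}_Q[H]\le$~\eqref{mix_F_d_a}, which is weak duality.

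The hard part is strong duality. The primal is an infinite-dimensional conic LP over Radon probability measures on $[a,\infty)$ with a finite number of affine moment constraints taking values in the closed convex set $\mathbb{S}_{E}(\bm{\mu},\bm{\Sigma},r)\times\mathbb{S}_{R}(\underline{\bm{\mu}},\bar{\bm{\mu}})$, and Assumption~\ref{slater_condition} is exactly the Slater (strict feasibility) hypothesis for that LP. I would then invoke the classical strong duality theorem for generalized moment problems in the tradition of Isii and Shapiro, which guarantees that Slater closes the gap for conic LPs of this form. The technical subtlety lies in the infinite-dimensional step: one must endow the signed Radon measures on $[a,\infty)$ with the weak-$*$ topology and verify that Slater implies weak-$*$ closedness of the moment image cone, so that the dual supremum is attained and coincides with the primal supremum. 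This is the main obstacle to a fully self-contained proof, although it reduces to invoking a well-established theorem rather than developing new machinery.
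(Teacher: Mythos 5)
Your proposal is correct and follows essentially the same route as the paper: both derive \eqref{mix_F_d} as the Lagrangian dual of the moment problem (your lifting of the set-inclusions to equalities with auxiliary ball/box variables is mechanically equivalent to the paper's use of the dual-norm representation $\|\bm{z}\|_2=\max_{\|\bm{u}\|_2\leq 1}\bm{u}^\top\bm{z}$ inside the Lagrangian, and yields the identical dual after the change of variables), and both close the gap by appealing to standard strong duality for conic/moment problems under the Slater condition rather than proving it from scratch. Your explicit weak-duality verification by integrating \eqref{mix_F_d_b} against $dQ$ is a welcome addition that the paper leaves implicit in the max-min versus min-max ordering.
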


Theorem \ref{p2d} shows the dual program of \eqref{moment problem} when $\mathbb{S}$ is a combination of ellipsoid and rectangle. The Slater condition in Assumption \ref{slater_condition} that ensures strong duality can be checked routinely case-by-case. Moreover, even if this condition does not hold, the ultimate statistical guarantees provided by Theorems \ref{proposition:sg1} and \ref{proposition:sg2} are still valid since weak duality allows us to obtain a more conservative bound. Theorem \ref{p2d} follows immediately from the duality theory of conic programs (e.g.,  \cite{shapiro2001duality}).

Putting Theorems \ref{thm:mon} and \ref{p2d} together, we can convert \eqref{general_framework1} with $(\mathbb{S},\mathscr{P})$ being $(\mathbb{S}_E,\mathscr{P}^1_{a,\eta})$, $(\mathbb{S}_R,\mathscr{P}^1_{a,\eta})$, $(\mathbb{S}_E,\mathscr{P}^2_{a,\underline{\eta},\bar{\eta},\nu})$ or $(\mathbb{S}_R, \mathscr{P}^2_{a,\underline{\eta},\bar{\eta},\nu})$ into the following dual program. 

\begin{corollary}\label{coro_4cases}
Given any functions $h,\bm{g}$ satisfying the assumptions in Theorem \ref{thm:mon}, parameters $\eta,\nu, a, \bm{\mu},\underline{\bm{\mu}},\bar{\bm{\mu}},\bm{\Sigma}\succ 0$ and $\tilde{H}(x)=\int_{a}^x h(u)du$, $H(x)=\int_{a}^x\int_{a}^u h(v)dvdu$, $\tilde{G}_i(x)=\int_{a}^x g_i(u)du$, $G_i(x)=\int_{a}^x\int_{a}^u g_i(v)dvdu $ for each component $i$ of $\bm{g}$, we have 

1. For problem $\mathfrak{P}(h,\bm{g},\mathbb{S}_{E}(\bm{\mu},\bm{\Sigma},r),\mathscr{P}^{1}_{a,\eta})$, the dual of the converted moment problem is 
\begin{align*}
\min_{\kappa,\|\bm{u}\|_2\leq \lambda }\kappa+\lambda+r^{-1/2}\bm{u}^\top\bm{\Sigma}^{-\frac{1}{2}}\bm{\mu} \textup{ s.t. } -\eta \tilde{H}(x)+\eta\bm{u}^\top \bm{\Sigma}^{-\frac{1}{2}}\tilde{\bm{G}}(x)+\kappa \geq 0 ,\ \forall x\geq a.
\end{align*}

2. For problem $\mathfrak{P}(h,\bm{g},\mathbb{S}_{R}(\underline{\bm{\mu}},\bar{\bm{\mu}}),\mathscr{P}^{1}_{a,\eta})$, the dual of the converted moment problem is 
\begin{align*}
\min_{\kappa,\bm{\lambda_1}\geq 0,\bm{\lambda_2}\geq 0 } \kappa+\bm{\lambda_1}^\top \bm{\bar{\mu}}-\bm{\lambda_2}^\top \bm{\underline{\mu}},\textup{ s.t. }-\eta \tilde{H}(x)+\eta(\bm{\lambda_1}-\bm{\lambda_2})^\top\tilde{\bm{G}}(x)+\kappa \geq 0 ,\ \forall x\geq a.
\end{align*}

3. For problem $\mathfrak{P}(h,\bm{g},\mathbb{S}_{E}(\bm{\mu},\bm{\Sigma},r),\mathscr{P}^{2}_{a,\underline{\eta},\bar{\eta},\nu})$, the dual of the converted moment problem is  
{\footnotesize
\begin{align*}
\min_{\substack{\kappa,\|\bm{u}\|_2\leq \lambda \\ \delta_1\geq 0,\delta_2\geq 0 } } \kappa+\lambda+r^{-1/2}\bm{u}^\top\bm{\Sigma}^{-1/2}\bm{\mu}+\delta_1 \bar{\eta}-\delta_2 \underline{\eta}, \textup{ s.t. }-\nu H(x)+\nu\bm{u}^\top \bm{\Sigma}^{-1/2}\bm{G}(x)+\nu(\delta_1-\delta_2)(x-a)+\kappa \geq 0 ,x\geq a.
\end{align*}}

4. For problem $\mathfrak{P}(h,\bm{g},\mathbb{S}_{R}(\underline{\bm{\mu}},\bar{\bm{\mu}}),\mathscr{P}^{2}_{a,\underline{\eta},\bar{\eta},\nu})$, the dual of the converted moment problem is
{\footnotesize\begin{align*}
\min_{\substack{\kappa,\delta_1\geq 0,\delta_2\geq 0 ,\\ \bm{\lambda_1}\geq 0,\bm{\lambda_2}\geq 0} } \kappa+\bm{\lambda_1}^\top \bm{\bar{\mu}}-\bm{\lambda_2}^\top \bm{\underline{\mu}}+\delta_1 \bar{\eta}-\delta_2 \underline{\eta}, \textup{ s.t. } -\nu H(x)+\nu(\bm{\lambda_1}-\bm{\lambda_2})^\top\bm{G}(x)+\nu(\delta_1-\delta_2)(x-a)+\kappa \geq 0 ,\ \forall x\geq a.
\end{align*}}
In each case, the optimal value of the dual problem is at least that of the corresponding $\mathfrak{P}(h,\bm{g},\mathbb{S},\mathscr{P})$ and, under Assumption \ref{slater_condition} applied to the corresponding moment problem, they attain equality.
\end{corollary}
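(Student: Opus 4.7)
The plan is to derive each of the four dual formulations by a direct two-step application of Theorems \ref{thm:mon} and \ref{p2d}. Theorem \ref{thm:mon} first converts the shape-constrained primal $\mathfrak{P}(h,\bm{g},\mathbb{S},\mathscr{P})$ into an equivalent moment problem $\mathfrak{M}_a$ whose decision variable ranges over $\mathscr{P}[a,\infty)$ and whose integrand/constraint functions are the integral transforms $\tilde{H},\tilde{\bm{G}}$ (monotone case) or $H,\bm{G}$ (convex case). Theorem \ref{p2d} then supplies the conic dual of any such moment problem whose moment set factors as an ellipsoid times a rectangle. Combining these two results case-by-case will produce exactly the four semi-infinite programs listed in the corollary.

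For Cases 1 and 2, I would invoke Theorem \ref{thm:mon}(1) to rewrite the primal as $\mathfrak{M}_a(\eta\tilde{H},\eta\tilde{\bm{G}},\mathbb{S})$. In Case 1, $\mathbb{S}=\mathbb{S}_{E}(\bm{\mu},\bm{\Sigma},r)$, so Theorem \ref{p2d} is applied with $\bm{G_1}=\eta\tilde{\bm{G}}$ and an empty rectangular block; the multipliers $\bm{\lambda_1},\bm{\lambda_2}$ and the associated constants $\underline{\bm{\mu}},\bar{\bm{\mu}}$ simply drop out, yielding Case~1's dual. In Case 2, $\mathbb{S}=\mathbb{S}_{R}(\underline{\bm{\mu}},\bar{\bm{\mu}})$, so I would instead set $\bm{G_2}=\eta\tilde{\bm{G}}$ and take the ellipsoidal block empty, making the $\bm{u},\lambda$ variables and the term $r^{-1/2}\bm{u}^\top\bm{\Sigma}^{-1/2}\bm{\mu}$ vanish, producing Case~2's dual.

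For Cases 3 and 4, I would invoke Theorem \ref{thm:mon}(2) and read off the augmented constraint vector $\bigl(\nu(x-a),\nu\bm{G}^\top\bigr)^\top$ as two independent blocks: the linear component $\nu(x-a)$, whose expectation is constrained to $[\underline{\eta},\bar{\eta}]$ and which encodes the boundary information $\underline{\eta}\leq f(a)\leq\bar{\eta}$, together with the original $\nu\bm{G}$ whose expectation lies in $\mathbb{S}$. Theorem \ref{p2d} then applies with the $[\underline{\eta},\bar{\eta}]$ piece treated as part of the rectangular block: its nonnegative multipliers $\delta_1,\delta_2$ contribute $\delta_1\bar{\eta}-\delta_2\underline{\eta}$ to the objective and $\nu(\delta_1-\delta_2)(x-a)$ to the semi-infinite constraint. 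The remaining dual variables are $\bm{u},\lambda$ in Case 3 (corresponding to $\mathbb{S}_E$) or $\bm{\lambda_1},\bm{\lambda_2}$ in Case 4 (corresponding to $\mathbb{S}_R$), and transcribing the resulting dual yields exactly the two stated programs.

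The only step requiring genuine care is handling the ``degenerate'' specializations of Theorem \ref{p2d} in which one of the two blocks is absent. These are milder than the full version and can be justified either by reexamining the conic-duality argument with a trivial block, or equivalently by taking the limits $r\to\infty$ (to suppress the ellipsoid) or $\underline{\bm{\mu}}\to-\infty,\bar{\bm{\mu}}\to\infty$ (to suppress the rectangle) in Theorem \ref{p2d}. Strong duality and equality of optimal values in each case then follow from the Slater condition in Assumption \ref{slater_condition}; even should this condition fail, weak duality alone gives a valid upper bound that still preserves the statistical guarantees of Theorems \ref{proposition:sg1}--\ref{proposition:sg2}. I do not anticipate any substantial obstacle beyond this bookkeeping, since all the analytic content has already been absorbed into Theorems \ref{thm:mon} and \ref{p2d}.
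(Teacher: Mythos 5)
Your proposal is correct and follows exactly the paper's own two-step argument: apply Theorem \ref{thm:mon} to convert each primal into the corresponding moment problem, then invoke Theorem \ref{p2d} on the resulting (possibly one-block) product set to read off the dual. The extra care you give to the degenerate specializations with an empty ellipsoidal or rectangular block is a sensible elaboration of a detail the paper leaves implicit.
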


Program \eqref{mix_F_d} and the specialized versions in Corollary \ref{coro_4cases} have infinite numbers of constraints, with \eqref{mix_F_d_b} being a condition for any $x\geq a$. We can convert \eqref{mix_F_d} into a semidefinite program under suitable assumptions.
\begin{theorem}\label{d2sdp}
If the constraint \eqref{mix_F_d_b} can be written as a series of polynomial inequalities, i.e.,
\begin{equation}
\label{polying}
\begin{aligned}
\sum\limits_{i=0}^{k}y_{1i} x^i\geq 0, \ \forall x\geq a \textup{\ \ or\ \ } \sum\limits_{i=0}^{k}y_{2i} x^i\geq 0, \ \forall b\leq x\leq c,
\end{aligned}
\end{equation}
then program \eqref{mix_F_d} is equivalent to a mixed semidefinite and second-order cone program (SDP-SOCP) 
\begin{equation}\label{thm::socp-sdp_expr}
\begin{aligned}
\min_{\substack{\bm{u},\lambda,\kappa,\bm{\lambda_1}\geq 0,\bm{\lambda_2}\geq 0,\\ \bm{V}\succeq 0,\bm{W}\succeq 0, \|\bm{\Sigma}^{1/2}\bm{u}\|_2 \leq \lambda}} \ \ \ \ \ \ \   &\kappa+\lambda+\bm{u}^\top\bm{\mu}+\bm{\lambda_1}^\top \bm{\bar{\mu}}-\bm{\lambda_2}^\top \bm{\underline{\mu}}\\
\textup{ s.t. }\sum\limits_{i,j:i+j=2l-1} &v_{ij}=\sum\limits_{i,j:i+j=2l-1} w_{ij} =0, \ \ l=1,\cdots,k,\\
\sum\limits_{i,j:i+j=2l} &v_{ij}=\sum\limits_{r=l}^k \binom{r}{l}y_{1r} a^{r-l},\ \ l=0,\cdots,k,\\
\sum\limits_{i,j:i+j=2l} &w_{ij}=\sum\limits_{m=0}^l \sum\limits_{r=m}^{k+m-l} \binom{r}{m}\binom{k-r}{l-m}y_{2r} b^{r-m}c^{m}, \ \ l=0,\cdots,k
\end{aligned}
\end{equation}
where $\bm{V}=[v_{ij}]_{i,j=0,\cdots,k}$ and $\bm{W}=[w_{ij}]_{i,j=0,\cdots,k}$.
\end{theorem}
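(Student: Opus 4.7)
The plan is to exploit two classical facts from univariate real algebraic geometry: a univariate polynomial is nonnegative on $\mathbb{R}$ if and only if it is a sum of squares (SOS), and a polynomial of degree at most $2k$ is SOS if and only if it admits a Gram representation $\bm{b}(\tau)^{\top}V\bm{b}(\tau)$ with $V\succeq 0$ and $\bm{b}(\tau)=(1,\tau,\dots,\tau^{k})^{\top}$, in which case the coefficient of $\tau^{l}$ equals the anti-diagonal sum $\sum_{i+j=l} v_{ij}$. Together these turn any polynomial nonnegativity constraint on all of $\mathbb{R}$ into a linear-in-the-coefficients semidefinite constraint, so the only remaining work is to reduce the half-line and interval constraints in \eqref{polying} to nonnegativity on all of $\mathbb{R}$ via suitable substitutions, and then read off the coefficient-matching conditions of \eqref{thm::socp-sdp_expr}.

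For the half-line constraint $\sum_{r} y_{1r} x^{r}\ge 0$ on $[a,\infty)$, I will use the surjection $\tau\in\mathbb{R}\mapsto x=a+\tau^{2}\in[a,\infty)$. The constraint is equivalent to $p_1(\tau):=\sum_{r=0}^{k} y_{1r}(a+\tau^{2})^{r}\ge 0$ on all of $\mathbb{R}$. Expanding with the binomial theorem, the coefficient of $\tau^{2l}$ in $p_1$ is exactly $\sum_{r=l}^{k}\binom{r}{l} y_{1r} a^{r-l}$, and the odd-degree coefficients vanish identically. The SOS/Gram equivalence applied to $p_1$ with matrix $V\succeq 0$ then immediately yields the two families of $v_{ij}$ conditions in \eqref{thm::socp-sdp_expr}.

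For the interval constraint $\sum_{r} y_{2r} x^{r}\ge 0$ on $[b,c]$, I will use the rational substitution $x=(b+c\tau^{2})/(1+\tau^{2})$, which maps $\mathbb{R}\cup\{\infty\}$ bijectively onto $[b,c]$ (with $\tau=0\mapsto b$ and $\tau\to\infty\mapsto c$), and clear denominators by multiplying through by $(1+\tau^{2})^{k}$. The constraint becomes
\[
p_2(\tau):=\sum_{r=0}^{k} y_{2r}(b+c\tau^{2})^{r}(1+\tau^{2})^{k-r}\ge 0 \quad \textnormal{on all of } \mathbb{R}.
\]
Expanding $(b+c\tau^{2})^{r}$ and $(1+\tau^{2})^{k-r}$ binomially and collecting powers of $\tau$, the coefficient of $\tau^{2l}$ becomes $\sum_{m=0}^{l}\sum_{r=m}^{k+m-l}\binom{r}{m}\binom{k-r}{l-m}\,y_{2r}\,b^{r-m} c^{m}$, matching the theorem, with odd-degree coefficients again vanishing. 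Applying the SOS/Gram equivalence to $p_2$ with a $(k+1)\times(k+1)$ matrix $W\succeq 0$ produces the $w_{ij}$ family of equalities. The remaining ingredients of \eqref{thm::socp-sdp_expr} --- the objective term $\bm{u}^{\top}\bm{\mu}$ and the cone $\|\bm{\Sigma}^{1/2}\bm{u}\|_2\le\lambda$ --- arise from a routine invertible linear change of variables that absorbs the factor $r^{-1/2}\bm{\Sigma}^{-1/2}$ appearing in both the objective and the constraint of \eqref{mix_F_d}.

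The main effort will be the combinatorial coefficient bookkeeping in the interval case, where two binomials interact, but there is no deeper conceptual hurdle beyond the two substitutions and the classical SOS/PSD correspondence. One technical point worth verifying is that imposing $V\succeq 0$ together with vanishing odd anti-diagonal sums does not shrink the feasible set: given any $V\succeq 0$ whose Gram polynomial is even, the symmetrization $\tfrac{1}{2}(V+JVJ)$ with $J=\mathrm{diag}(1,-1,1,-1,\ldots)$ remains PSD, preserves the even anti-diagonal sums, and forces the odd ones to zero, so the reduction is indeed lossless, and the same remark applies to $W$.
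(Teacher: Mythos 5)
Your proposal is correct and matches the paper's argument: the paper simply invokes Proposition 3.1 of Bertsimas and Popescu (2005) to convert the two polynomial nonnegativity constraints in \eqref{polying} into the PSD conditions on $\bm{V}$ and $\bm{W}$ and then substitutes into \eqref{mix_F_d}, and your substitutions $x=a+\tau^2$ and $x=(b+c\tau^2)/(1+\tau^2)$ together with the SOS/Gram correspondence are exactly the content (and standard proof) of that cited proposition, with the coefficient bookkeeping agreeing with \eqref{thm::socp-sdp_expr}. Your closing symmetrization remark is harmless but unnecessary, since any Gram matrix of an even polynomial automatically has vanishing odd anti-diagonal sums because those sums equal the (zero) odd coefficients.
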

Theorem \ref{d2sdp} shows that problem \eqref{moment problem} can be formulated into a tractable SDP-SOCP program when $H(x)$ and $\bm{G}(x)$ belong to the polynomial function class. If $h(x)$ and $\bm{g}(x)$ are indicator functions or piecewise polynomial functions, then the polynomial class condition is satisfied which guarantees the tractability. Theorem \ref{d2sdp} can be proved using the semidefinite representation of moments (e.g., \cite{lasserre2009moments,bertsimasOptimalInequalitiesProbability2005}).


In Section \ref{sec:numerics}, we use Corollary \ref{coro_4cases} and Theorem \ref{d2sdp} to produce our numerical results in estimating extremal quantities. In Appendix \ref{sec:sensitivity} we also present some sensitivity analysis tools for our DRO problems.



\section{Numerical Results}
\label{sec:numerics}
We illustrate the numerical performance of our DRO framework and compare with conventional EVT tools. In our experiment, we consider synthetic data  of size $500$.
The quantities of interest are tail interval probabilities and quantiles, which we will specify later. 
In the experiments, we generate samples from ``true"  distributions that range from light-tailed to heavy-tailed, including 1) Gamma distribution with shape parameter $0.5$ and scale parameter $1$, 2) log-normal distribution with mean parameter 0 and standard deviation parameter 1, 3) Pareto distribution with shape parameter $1.5$ and scale parameter $1$. We aim to obtain a one-sided $95\%$ confidence upper bound by using program \eqref{general_framework}. 
To approximate the coverage probability, we repeat each experiment $200$ times, from which we would also output the sample mean of the estimated confidence bounds. Calibration of parameters is conducted by bootstrapping with a resample size $500$, where for densities and their derivatives we use the standard kernel estimator from the R package \textit{ks}.

We conduct experiments with different 1) shape constraints, 2) moment constraints 3) cutoff thresholds $a$, 4) objective functions, and also compare with POT. These experiments aim to: 1) validate our methodology by demonstrating how it generates valid confidence bounds under a wide range of settings, as supported by the statistical guarantees in Theorems \ref{proposition:sg1} and \ref{proposition:sg2}; 2) provide guidance for users in implementing our approach; 3) compare our approach against existing methods like POT. Sections \ref{sec:exp_constraints}-\ref{sec:exp_objective} will discuss results pertinent to the first two goals, while Section \ref{sec:exp_POT} will focus on the third goal above.

In the following tables, we use $(D,\chi^2)$ to denote the setting of $D$-th order monotonicity and ellipsoid moment constraint depicted in Section \ref{subsec:optimization_objective_constraints} where $D\in \{0,1, 2\}$. Similarly, $(D,\text{KS})$ denotes the setting of $D$-th order monotonicity and rectangular moment constraint. 


\subsection{Selection of Shape and Moment Constraints}\label{sec:exp_constraints}

In Table \ref{tb1_tpe}, we consider the estimation of tail interval probabilities $\mathbb{P}(q_{0.99}\leq X\leq q_{0.995})$ using the synthetic data set from three distributions, where $q_{0.99}$ and $q_{0.995}$ are theoretical $99^{th}$ and $99.5^{th}$-percentile respectively. The threshold $a$ is chosen as the $70^{th}$ sample percentile of this synthetic data set. In this and the following tables, the ``Upper Bound'' and ``Coverage Probability'' columns respectively show the sample mean and the ratio of coverage of the upper confidence bounds in the 200 repetitions, and the ``Relative Ratio'' column is defined as the mean upper confidence bound divided by the true value. The confidence intervals are given by $sample\ mean\pm 1.96*sample\ standard\ deviation/\sqrt{200}$. We note that in some cases the average coverage probability is 1, i.e., the confidence bound covers the truth in all the 200 repetitions, in which case the sample standard deviation as well as the confidence interval width are 0.

\begin{table}[!ht]
    \centering\scalebox{0.6}{\begin{tabular}{cc|ccc}
    \toprule
    \hline
    Data Source & Constraint Setting & Relative Ratio & Upper Bound & Coverage Probability\\\hline
\multirow{6}{*}{Gamma}&$(0,\chi^2)$& $14.46\ (\pm0.57)$ & $7.23\times 10^{-2}\ (\pm 2.85\times 10^{-3})$ & $1.000\ (\pm 0.000)$\\
&$(1,\chi^2)$& $5.36\ (\pm0.17)$ & $2.68\times 10^{-2}\ (\pm 8.26\times 10^{-4})$ & $1.000\ (\pm 0.000)$\\
&$(2,\chi^2)$& $3.05\ (\pm0.08)$ & $1.53\times 10^{-2}\ (\pm 4.12\times 10^{-4})$ & $1.000\ (\pm 0.000)$\\
&$(0,\text{{KS}})$& $14.17\ (\pm0.12)$ & $7.08\times 10^{-2}\ (\pm 5.88\times 10^{-4})$ & $1.000\ (\pm 0.000)$\\
&$(1,\text{{KS}})$& $6.81\ (\pm0.07)$ & $3.40\times 10^{-2}\ (\pm 3.39\times 10^{-4})$ & $1.000\ (\pm 0.000)$\\
&$(2,\text{{KS}})$& $4.12\ (\pm0.04)$ & $2.06\times 10^{-2}\ (\pm 2.04\times 10^{-4})$ & $1.000\ (\pm 0.000)$\\\hline 
\multirow{6}{*}{Lognorm}&$(0,\chi^2)$& $16.62\ (\pm0.80)$ & $8.31\times 10^{-2}\ (\pm 3.99\times 10^{-3})$ & $1.000\ (\pm 0.000)$\\
&$(1,\chi^2)$& $6.52\ (\pm0.26)$ & $3.26\times 10^{-2}\ (\pm 1.32\times 10^{-3})$ & $1.000\ (\pm 0.000)$\\
&$(2,\chi^2)$& $3.98\ (\pm0.14)$ & $1.99\times 10^{-2}\ (\pm 7.11\times 10^{-4})$ & $1.000\ (\pm 0.000)$\\
&$(0,\text{{KS}})$& $14.08\ (\pm0.13)$ & $7.04\times 10^{-2}\ (\pm 6.49\times 10^{-4})$ & $1.000\ (\pm 0.000)$\\
&$(1,\text{{KS}})$& $7.98\ (\pm0.09)$ & $3.99\times 10^{-2}\ (\pm 4.33\times 10^{-4})$ & $1.000\ (\pm 0.000)$\\
&$(2,\text{{KS}})$& $5.09\ (\pm0.05)$ & $2.55\times 10^{-2}\ (\pm 2.72\times 10^{-4})$ & $1.000\ (\pm 0.000)$\\\hline 
\multirow{6}{*}{Pareto}&$(0,\chi^2)$& $20.90\ (\pm1.68)$ & $1.05\times 10^{-1}\ (\pm 8.39\times 10^{-3})$ & $1.000\ (\pm 0.000)$\\
&$(1,\chi^2)$& $9.85\ (\pm0.76)$ & $4.92\times 10^{-2}\ (\pm 3.78\times 10^{-3})$ & $0.995\ (\pm 0.010)$\\
&$(2,\chi^2)$& $6.83\ (\pm0.44)$ & $3.41\times 10^{-2}\ (\pm 2.21\times 10^{-3})$ & $0.995\ (\pm 0.010)$\\
&$(0,\text{{KS}})$& $14.13\ (\pm0.13)$ & $7.07\times 10^{-2}\ (\pm 6.30\times 10^{-4})$ & $1.000\ (\pm 0.000)$\\
&$(1,\text{{KS}})$& $9.76\ (\pm0.10)$ & $4.88\times 10^{-2}\ (\pm 5.15\times 10^{-4})$ & $1.000\ (\pm 0.000)$\\
&$(2,\text{{KS}})$& $6.81\ (\pm0.07)$ & $3.40\times 10^{-2}\ (\pm 3.65\times 10^{-4})$ & $1.000\ (\pm 0.000)$\\
    \hline
    \bottomrule
    \end{tabular}}\caption{Tail probability estimation under different constraint settings. The true value is 0.005.}
    \label{tb1_tpe}
\end{table}

In Table \ref{tb2_qe}, we consider the estimation of quantile $q_{0.99}$ using the synthetic data set from three distributions. The threshold is still chosen as the $70^{th}$ sample percentile. The true value of the $99^{th}$-quantile of each distribution is shown in the table. We note that $(D,\text{KS})$ cannot always obtain a valid quantile estimation due to the possible assignment of probability mass at $\infty$, so we do not include these settings in this table.

\begin{table}[ht]
    \centering\scalebox{0.6}{\begin{tabular}{cc|ccc}
    \toprule
    \hline
    Data Source & Constraint Setting & Relative Ratio & Upper Bound & Coverage Probability \\\hline
\multirow{3}{*}{Gamma w. true quantile point 3.32.}&$(0,\chi^2)$& $2.13\ (\pm0.04)$ & $7.07\times 10^{0}\ (\pm 1.34\times 10^{-1})$ & $1.000\ (\pm 0.000)$\\
&$(1,\chi^2)$& $1.50\ (\pm0.03)$ & $4.96\times 10^{0}\ (\pm 9.66\times 10^{-2})$ & $1.000\ (\pm 0.000)$\\
&$(2,\chi^2)$& $1.39\ (\pm0.03)$ & $4.62\times 10^{0}\ (\pm 9.20\times 10^{-2})$ & $0.995\ (\pm 0.010)$\\\hline 
\multirow{3}{*}{Lognorm w. true quantile point 10.24.}&$(0,\chi^2)$& $2.54\ (\pm0.10)$ & $2.60\times 10^{1}\ (\pm 1.05\times 10^{0})$ & $1.000\ (\pm 0.000)$\\
&$(1,\chi^2)$& $1.80\ (\pm0.07)$ & $1.84\times 10^{1}\ (\pm 7.40\times 10^{-1})$ & $1.000\ (\pm 0.000)$\\
&$(2,\chi^2)$& $1.69\ (\pm0.07)$ & $1.73\times 10^{1}\ (\pm 6.97\times 10^{-1})$ & $0.990\ (\pm 0.014)$\\\hline 
\multirow{3}{*}{Pareto w. true quantile point 21.54.}&$(0,\chi^2)$& $6.05\ (\pm1.33)$ & $1.30\times 10^{2}\ (\pm 2.87\times 10^{1})$ & $1.000\ (\pm 0.000)$\\
&$(1,\chi^2)$& $3.92\ (\pm0.71)$ & $8.45\times 10^{1}\ (\pm 1.54\times 10^{1})$ & $0.990\ (\pm 0.014)$\\
&$(2,\chi^2)$& $3.59\ (\pm0.61)$ & $7.73\times 10^{1}\ (\pm 1.31\times 10^{1})$ & $0.985\ (\pm 0.017)$\\
    \hline
    \bottomrule
    \end{tabular}}\caption{Quantile estimation under different constraint settings. }
    \label{tb2_qe}
\end{table}

Regarding the selection of shape constraints, for tail probability estimation problem in Table \ref{tb1_tpe}, as $D$ increases, we observe a decreasing upper bound and confidence interval width. For example, with Gamma data in Table \ref{tb1_tpe}, the upper bound decreases from $7.23\times 10^{-2}$ to $1.53\times 10^{-2}$ (the true value is $5.00\times 10^{-3}$) and the confidence interval width decreases from $2.85\times 10^{-3}$ to $2.04\times 10^{-4}$ as the constraint setting changes from $(0,\chi^2)$ to $(2,\chi^2)$, leading to a tighter result. Similarly, for quantile estimation problem in Table \ref{tb2_qe}, we observe that the upper bound and confidence interval width decrease when the shape constraint becomes stronger. Again, with Gamma data, the upper bound decreases from $7.07$ to $4.62$ (the true value is 3.32) and the confidence interval width decreases from $0.134$ to $0.092$.

Indeed, by assuming shape property, we restrict the feasible distribution to a smaller set compared to that without shape assumptions ($D=0$). Moreover, convexity ($D=2$) implies monotonicity ($D=1$) so the former set is a subset of the latter. Our results also show that the extra errors in the additional estimation tasks needed in calibrating the parameters under the stronger shape conditions do not seem to outweigh the benefits of imposing the stronger constraints. In practice, one could visualize the distribution around the threshold $a$ to evaluate the plausibility of the shape assumption.

Now we compare the moment constraints for tail probability estimation. From Table \ref{tb1_tpe}, we can see that the difference between the two moment constraints is not as substantial as the one among the three shape constraints. Without any shape constraint (i.e., $D=0$), the rectangular constraint has a smaller relative ratio than the ellipsoidal one for all the three distributions. In the presence of shape constraint (i.e., $D=1,2$), the ellipsoidal constraint is less conservative than the rectangular one for Gamma and log-normal distributions, and only slightly more conservative for Pareto distribution. In fact, for the Pareto data and $D=1,2$, we cannot reject the null hypothesis that $(D,\chi^2)$ and $(D,\text{KS})$ are equally conservative using Welch's t-test with the significance level 0.05. Overall, if one decides to choose $D=0$ after observing the data, then the rectangular constraint seems a better choice for the moment constraint. Otherwise, the ellipsoidal constraint should have a better or comparable performance.

\subsection{Selection of Threshold}\label{sec:exp_threshold}

In this section, we compare different selections of the threshold. In Table \ref{tb3_tpe}, we consider the estimation of tail interval probabilities $\mathbb{P}(q_{0.99}\leq X\leq q_{0.995})$ under constraint setting $(2,\chi^2)$. For each data distribution, we test four single cutoff thresholds: $60^{th}$, $70^{th}$, $80^{th}$, $90^{th}$ sample percentiles. We also test using all four of them as multiple thresholds (see Supplement \ref{sec:calibration} for details of using multiple thresholds). In Table \ref{tb3_qe}, we consider the estimation of quantile $q_{0.99}$ also under constraint setting $(2,\chi^2)$ with different choices of cutoff thresholds. 

\begin{table}[!ht]
    \centering\scalebox{0.6}{\begin{tabular}{cc|ccc}
    \toprule
    \hline
    Data Source & Constraint Setting & Relative Ratio & Upper Bound & Coverage Probability \\\hline
\multirow{5}{*}{Gamma}&$60^{th}$ & $3.130\ (\pm0.088)$ & $1.57\times 10^{-2}\ (\pm4.39\times 10^{-4})$ & $1.000\ (\pm0.000)$\\
&$70^{th}$ & $3.053\ (\pm0.082)$ & $1.53\times 10^{-2}\ (\pm4.12\times 10^{-4})$ & $1.000\ (\pm0.000)$\\
&$80^{th}$ & $2.948\ (\pm0.077)$ & $1.47\times 10^{-2}\ (\pm3.85\times 10^{-4})$ & $0.995\ (\pm0.010)$\\
&$90^{th}$ & $2.823\ (\pm0.077)$ & $1.41\times 10^{-2}\ (\pm3.85\times 10^{-4})$ & $0.995\ (\pm0.010)$\\
&$(60^{th},70^{th},80^{th},90^{th})$ & $2.946\ (\pm0.079)$ & $1.47\times 10^{-2}\ (\pm3.93\times 10^{-4})$ & $0.995\ (\pm0.010)$\\\hline 
\multirow{5}{*}{Lognorm}&$60^{th}$ & $4.113\ (\pm0.147)$ & $2.06\times 10^{-2}\ (\pm7.37\times 10^{-4})$ & $1.000\ (\pm0.000)$\\
&$70^{th}$ & $3.984\ (\pm0.142)$ & $1.99\times 10^{-2}\ (\pm7.11\times 10^{-4})$ & $1.000\ (\pm0.000)$\\
&$80^{th}$ & $3.831\ (\pm0.140)$ & $1.92\times 10^{-2}\ (\pm6.99\times 10^{-4})$ & $1.000\ (\pm0.000)$\\
&$90^{th}$ & $3.629\ (\pm0.129)$ & $1.81\times 10^{-2}\ (\pm6.45\times 10^{-4})$ & $0.995\ (\pm0.010)$\\
&$(60^{th},70^{th},80^{th},90^{th})$ & $3.792\ (\pm0.132)$ & $1.90\times 10^{-2}\ (\pm6.61\times 10^{-4})$ & $1.000\ (\pm0.000)$\\\hline 
\multirow{5}{*}{Pareto}&$60^{th}$ & $7.093\ (\pm0.500)$ & $3.55\times 10^{-2}\ (\pm2.50\times 10^{-3})$ & $0.995\ (\pm0.010)$\\
&$70^{th}$ & $6.825\ (\pm0.443)$ & $3.41\times 10^{-2}\ (\pm2.21\times 10^{-3})$ & $0.995\ (\pm0.010)$\\
&$80^{th}$ & $6.428\ (\pm0.367)$ & $3.21\times 10^{-2}\ (\pm1.84\times 10^{-3})$ & $0.995\ (\pm0.010)$\\
&$90^{th}$ & $5.559\ (\pm0.233)$ & $2.78\times 10^{-2}\ (\pm1.17\times 10^{-3})$ & $0.995\ (\pm0.010)$\\
&$(60^{th},70^{th},80^{th},90^{th})$ & $5.821\ (\pm0.240)$ & $2.91\times 10^{-2}\ (\pm1.20\times 10^{-3})$ & $0.995\ (\pm0.010)$\\
    \hline
    \bottomrule
    \end{tabular}}\caption{Tail probability estimation under different cutoff threshold(s). The true value is 0.005.}
    \label{tb3_tpe}
\end{table}

\begin{table}[!ht]
    \centering\scalebox{0.6}{\begin{tabular}{cc|ccc}
    \toprule
    \hline
    Data Source & Constraint Setting & Relative Ratio & Upper Bound & Coverage Probability \\\hline
\multirow{5}{*}{Gamma w. true quantile point 3.32}&$60^{th}$ & $1.426\ (\pm0.027)$ & $4.73\times 10^{0}\ (\pm8.95\times 10^{-2})$ & $1.000\ (\pm0.000)$\\
&$70^{th}$ & $1.392\ (\pm0.028)$ & $4.62\times 10^{0}\ (\pm9.20\times 10^{-2})$ & $0.995\ (\pm0.010)$\\
&$80^{th}$ & $1.359\ (\pm0.029)$ & $4.51\times 10^{0}\ (\pm9.47\times 10^{-2})$ & $0.990\ (\pm0.014)$\\
&$90^{th}$ & $1.331\ (\pm0.029)$ & $4.41\times 10^{0}\ (\pm9.76\times 10^{-2})$ & $0.980\ (\pm0.019)$\\
&$(60^{th},70^{th},80^{th},90^{th})$ & $1.354\ (\pm0.031)$ & $4.49\times 10^{0}\ (\pm1.02\times 10^{-1})$ & $0.980\ (\pm0.019)$\\\hline 
\multirow{5}{*}{Lognorm w. true quantile point 10.24}&$60^{th}$ & $1.702\ (\pm0.067)$ & $1.74\times 10^{1}\ (\pm6.89\times 10^{-1})$ & $0.995\ (\pm0.010)$\\
&$70^{th}$ & $1.686\ (\pm0.068)$ & $1.73\times 10^{1}\ (\pm6.97\times 10^{-1})$ & $0.990\ (\pm0.014)$\\
&$80^{th}$ & $1.670\ (\pm0.069)$ & $1.71\times 10^{1}\ (\pm7.04\times 10^{-1})$ & $0.985\ (\pm0.017)$\\
&$90^{th}$ & $1.650\ (\pm0.068)$ & $1.69\times 10^{1}\ (\pm6.97\times 10^{-1})$ & $0.970\ (\pm0.024)$\\
&$(60^{th},70^{th},80^{th},90^{th})$ & $1.690\ (\pm0.071)$ & $1.73\times 10^{1}\ (\pm7.30\times 10^{-1})$ & $0.980\ (\pm0.019)$\\\hline 
\multirow{5}{*}{Pareto w. true quantile point 21.54}&$60^{th}$ & $3.637\ (\pm0.648)$ & $7.83\times 10^{1}\ (\pm1.40\times 10^{1})$ & $0.990\ (\pm0.014)$\\
&$70^{th}$ & $3.587\ (\pm0.607)$ & $7.73\times 10^{1}\ (\pm1.31\times 10^{1})$ & $0.985\ (\pm0.017)$\\
&$80^{th}$ & $3.542\ (\pm0.588)$ & $7.63\times 10^{1}\ (\pm1.27\times 10^{1})$ & $0.985\ (\pm0.017)$\\
&$90^{th}$ & $3.347\ (\pm0.450)$ & $7.21\times 10^{1}\ (\pm9.69\times 10^{0})$ & $0.985\ (\pm0.017)$\\
&$(60^{th},70^{th},80^{th},90^{th})$ & $3.584\ (\pm0.552)$ & $7.72\times 10^{1}\ (\pm1.19\times 10^{1})$ & $0.985\ (\pm0.017)$\\
    \hline
    \bottomrule
    \end{tabular}}\caption{Quantile estimation under different cutoff threshold(s). }
    \label{tb3_qe}
\end{table}

We observe that as the cutoff threshold increases, the result tends to be less conservative for both tail probability estimation and quantile estimation. For example, for Gamma distribution, as the threshold increases from $60^{th}$ sample percentile to $90^{th}$ sample percentile, the relative ratio decreases from 3.130 to 2.823 for tail probability estimation, and from 1.426 to 1.331 for quantile estimation. This phenomenon is reasonable as more information is leveraged with a larger threshold. The performance of multiple thresholds lies in the middle of the ones of single thresholds. Ideally, we should choose a relatively large threshold given that the parameters could be calibrated well. However, in practice, it is usually hard to evaluate which threshold satisfies this condition. Thus, using multiple thresholds is also a reasonable choice as it is less sensitive to the selection.

\subsection{Performance on Different Objective Functions}\label{sec:exp_objective}

In this section, we aim to understand how the performance of our approach would change with the objective function. In Table \ref{tb4_tpe_0.7}, we show the tail probability estimation results for different intervals under constraint settings $(2,\chi^2)$ and $(2,\text{KS})$. More specifically, the target probability is chosen as $\mathbb{P}(q_{\text{LHS}}\leq X\leq q_{\text{LHS}+0.005})$ where LHS takes different values ranging from 0.90 to 0.99. That is, we keep the true probability value as 0.005, and the interval $[q_{\text{LHS}}, q_{\text{LHS}+0.005}]$ moves to the farther part of the tail as LHS increases. The cutoff threshold is chosen as the $70^{th}$ sample percentile.

\begin{table}[!ht]
    \centering\scalebox{0.5}{\begin{tabular}{cc|ccc|ccc}
    \toprule
    \hline
    \multicolumn{1}{c}{} & \multicolumn{1}{c}{} &\multicolumn{3}{|c|}{{$(2,\chi^2)$}} & \multicolumn{3}{|c}{{$(2, \textup{KS})$}}\\
    Data Source & LHS Quantitle & Relative Ratio & Upper Bound & Coverage Probability & Relative Ratio & Upper Bound & Coverage Probability \\\hline
\multirow{10}{*}{Gamma}&$0.900$ & $1.714\ (\pm0.010)$ & $8.57\times 10^{-3}\ (\pm4.94\times 10^{-5})$ & $1.000\ (\pm0.000)$ & $1.766\ (\pm0.014)$ & $8.83\times 10^{-3}\ (\pm7.14\times 10^{-5})$ & $1.000\ (\pm0.000)$\\
&$0.910$ & $1.782\ (\pm0.010)$ & $8.91\times 10^{-3}\ (\pm4.87\times 10^{-5})$ & $1.000\ (\pm0.000)$ & $1.780\ (\pm0.014)$ & $8.90\times 10^{-3}\ (\pm7.12\times 10^{-5})$ & $1.000\ (\pm0.000)$\\
&$0.920$ & $1.860\ (\pm0.010)$ & $9.30\times 10^{-3}\ (\pm4.92\times 10^{-5})$ & $1.000\ (\pm0.000)$ & $1.806\ (\pm0.015)$ & $9.03\times 10^{-3}\ (\pm7.40\times 10^{-5})$ & $1.000\ (\pm0.000)$\\
&$0.930$ & $1.950\ (\pm0.010)$ & $9.75\times 10^{-3}\ (\pm5.24\times 10^{-5})$ & $1.000\ (\pm0.000)$ & $1.844\ (\pm0.016)$ & $9.22\times 10^{-3}\ (\pm7.88\times 10^{-5})$ & $1.000\ (\pm0.000)$\\
&$0.940$ & $2.051\ (\pm0.012)$ & $1.03\times 10^{-2}\ (\pm6.14\times 10^{-5})$ & $1.000\ (\pm0.000)$ & $1.899\ (\pm0.017)$ & $9.50\times 10^{-3}\ (\pm8.55\times 10^{-5})$ & $1.000\ (\pm0.000)$\\
&$0.950$ & $2.163\ (\pm0.016)$ & $1.08\times 10^{-2}\ (\pm8.09\times 10^{-5})$ & $1.000\ (\pm0.000)$ & $1.982\ (\pm0.019)$ & $9.91\times 10^{-3}\ (\pm9.35\times 10^{-5})$ & $1.000\ (\pm0.000)$\\
&$0.960$ & $2.278\ (\pm0.023)$ & $1.14\times 10^{-2}\ (\pm1.16\times 10^{-4})$ & $1.000\ (\pm0.000)$ & $2.111\ (\pm0.021)$ & $1.06\times 10^{-2}\ (\pm1.03\times 10^{-4})$ & $1.000\ (\pm0.000)$\\
&$0.970$ & $2.390\ (\pm0.033)$ & $1.20\times 10^{-2}\ (\pm1.63\times 10^{-4})$ & $1.000\ (\pm0.000)$ & $2.326\ (\pm0.023)$ & $1.16\times 10^{-2}\ (\pm1.15\times 10^{-4})$ & $1.000\ (\pm0.000)$\\
&$0.980$ & $2.565\ (\pm0.041)$ & $1.28\times 10^{-2}\ (\pm2.04\times 10^{-4})$ & $1.000\ (\pm0.000)$ & $2.755\ (\pm0.028)$ & $1.38\times 10^{-2}\ (\pm1.39\times 10^{-4})$ & $1.000\ (\pm0.000)$\\
&$0.990$ & $3.053\ (\pm0.082)$ & $1.53\times 10^{-2}\ (\pm4.12\times 10^{-4})$ & $1.000\ (\pm0.000)$ & $4.115\ (\pm0.041)$ & $2.06\times 10^{-2}\ (\pm2.04\times 10^{-4})$ & $1.000\ (\pm0.000)$\\\hline 
\multirow{10}{*}{Lognorm}&$0.900$ & $1.860\ (\pm0.011)$ & $9.30\times 10^{-3}\ (\pm5.52\times 10^{-5})$ & $1.000\ (\pm0.000)$ & $1.951\ (\pm0.016)$ & $9.75\times 10^{-3}\ (\pm7.75\times 10^{-5})$ & $1.000\ (\pm0.000)$\\
&$0.910$ & $1.959\ (\pm0.011)$ & $9.80\times 10^{-3}\ (\pm5.45\times 10^{-5})$ & $1.000\ (\pm0.000)$ & $1.987\ (\pm0.016)$ & $9.93\times 10^{-3}\ (\pm7.91\times 10^{-5})$ & $1.000\ (\pm0.000)$\\
&$0.920$ & $2.080\ (\pm0.011)$ & $1.04\times 10^{-2}\ (\pm5.49\times 10^{-5})$ & $1.000\ (\pm0.000)$ & $2.038\ (\pm0.017)$ & $1.02\times 10^{-2}\ (\pm8.39\times 10^{-5})$ & $1.000\ (\pm0.000)$\\
&$0.930$ & $2.225\ (\pm0.011)$ & $1.11\times 10^{-2}\ (\pm5.71\times 10^{-5})$ & $1.000\ (\pm0.000)$ & $2.102\ (\pm0.018)$ & $1.05\times 10^{-2}\ (\pm9.04\times 10^{-5})$ & $1.000\ (\pm0.000)$\\
&$0.940$ & $2.402\ (\pm0.013)$ & $1.20\times 10^{-2}\ (\pm6.50\times 10^{-5})$ & $1.000\ (\pm0.000)$ & $2.185\ (\pm0.020)$ & $1.09\times 10^{-2}\ (\pm9.93\times 10^{-5})$ & $1.000\ (\pm0.000)$\\
&$0.950$ & $2.613\ (\pm0.018)$ & $1.31\times 10^{-2}\ (\pm8.86\times 10^{-5})$ & $1.000\ (\pm0.000)$ & $2.298\ (\pm0.022)$ & $1.15\times 10^{-2}\ (\pm1.10\times 10^{-4})$ & $1.000\ (\pm0.000)$\\
&$0.960$ & $2.854\ (\pm0.030)$ & $1.43\times 10^{-2}\ (\pm1.48\times 10^{-4})$ & $1.000\ (\pm0.000)$ & $2.471\ (\pm0.025)$ & $1.24\times 10^{-2}\ (\pm1.23\times 10^{-4})$ & $1.000\ (\pm0.000)$\\
&$0.970$ & $3.109\ (\pm0.051)$ & $1.55\times 10^{-2}\ (\pm2.57\times 10^{-4})$ & $1.000\ (\pm0.000)$ & $2.748\ (\pm0.029)$ & $1.37\times 10^{-2}\ (\pm1.43\times 10^{-4})$ & $1.000\ (\pm0.000)$\\
&$0.980$ & $3.399\ (\pm0.078)$ & $1.70\times 10^{-2}\ (\pm3.89\times 10^{-4})$ & $1.000\ (\pm0.000)$ & $3.305\ (\pm0.036)$ & $1.65\times 10^{-2}\ (\pm1.80\times 10^{-4})$ & $1.000\ (\pm0.000)$\\
&$0.990$ & $3.984\ (\pm0.142)$ & $1.99\times 10^{-2}\ (\pm7.11\times 10^{-4})$ & $1.000\ (\pm0.000)$ & $5.091\ (\pm0.054)$ & $2.55\times 10^{-2}\ (\pm2.72\times 10^{-4})$ & $1.000\ (\pm0.000)$\\\hline 
\multirow{10}{*}{Pareto}&$0.900$ & $2.204\ (\pm0.014)$ & $1.10\times 10^{-2}\ (\pm6.80\times 10^{-5})$ & $1.000\ (\pm0.000)$ & $2.252\ (\pm0.018)$ & $1.13\times 10^{-2}\ (\pm9.23\times 10^{-5})$ & $1.000\ (\pm0.000)$\\
&$0.910$ & $2.338\ (\pm0.013)$ & $1.17\times 10^{-2}\ (\pm6.42\times 10^{-5})$ & $1.000\ (\pm0.000)$ & $2.306\ (\pm0.019)$ & $1.15\times 10^{-2}\ (\pm9.70\times 10^{-5})$ & $1.000\ (\pm0.000)$\\
&$0.920$ & $2.511\ (\pm0.013)$ & $1.26\times 10^{-2}\ (\pm6.28\times 10^{-5})$ & $1.000\ (\pm0.000)$ & $2.370\ (\pm0.020)$ & $1.19\times 10^{-2}\ (\pm1.02\times 10^{-4})$ & $1.000\ (\pm0.000)$\\
&$0.930$ & $2.739\ (\pm0.013)$ & $1.37\times 10^{-2}\ (\pm6.39\times 10^{-5})$ & $1.000\ (\pm0.000)$ & $2.456\ (\pm0.022)$ & $1.23\times 10^{-2}\ (\pm1.10\times 10^{-4})$ & $1.000\ (\pm0.000)$\\
&$0.940$ & $3.042\ (\pm0.014)$ & $1.52\times 10^{-2}\ (\pm6.83\times 10^{-5})$ & $1.000\ (\pm0.000)$ & $2.578\ (\pm0.024)$ & $1.29\times 10^{-2}\ (\pm1.19\times 10^{-4})$ & $1.000\ (\pm0.000)$\\
&$0.950$ & $3.453\ (\pm0.017)$ & $1.73\times 10^{-2}\ (\pm8.53\times 10^{-5})$ & $1.000\ (\pm0.000)$ & $2.739\ (\pm0.026)$ & $1.37\times 10^{-2}\ (\pm1.32\times 10^{-4})$ & $1.000\ (\pm0.000)$\\
&$0.960$ & $4.013\ (\pm0.032)$ & $2.01\times 10^{-2}\ (\pm1.58\times 10^{-4})$ & $1.000\ (\pm0.000)$ & $2.975\ (\pm0.030)$ & $1.49\times 10^{-2}\ (\pm1.50\times 10^{-4})$ & $1.000\ (\pm0.000)$\\
&$0.970$ & $4.759\ (\pm0.077)$ & $2.38\times 10^{-2}\ (\pm3.85\times 10^{-4})$ & $1.000\ (\pm0.000)$ & $3.375\ (\pm0.034)$ & $1.69\times 10^{-2}\ (\pm1.71\times 10^{-4})$ & $1.000\ (\pm0.000)$\\
&$0.980$ & $5.691\ (\pm0.186)$ & $2.85\times 10^{-2}\ (\pm9.28\times 10^{-4})$ & $1.000\ (\pm0.000)$ & $4.178\ (\pm0.045)$ & $2.09\times 10^{-2}\ (\pm2.23\times 10^{-4})$ & $1.000\ (\pm0.000)$\\
&$0.990$ & $6.825\ (\pm0.443)$ & $3.41\times 10^{-2}\ (\pm2.21\times 10^{-3})$ & $0.995\ (\pm0.010)$ & $6.808\ (\pm0.073)$ & $3.40\times 10^{-2}\ (\pm3.65\times 10^{-4})$ & $1.000\ (\pm0.000)$\\
    \hline
    \bottomrule
    \end{tabular}}\caption{Tail probability estimation under different objective functions. The target probability is $\mathbb{P}(q_{\text{LHS}}\leq X\leq q_{\text{LHS}+0.005})$.}
    \label{tb4_tpe_0.7}
\end{table}

From the table, we see that the relative ratio tends to increase as the interval is on the farther tail from the threshold, i.e., as LHS increases. For instance, for Gamma data under $(2,\chi^2)$ constraints, the relative ratio increases from 1.714 to 3.053 as the left endpoint LHS increases from 0.90 to 0.99. The same trend is observed for all the data distributions and constraint settings. Thus, when we infer the tail region from non-tail data, it is more conservative if the target quantity is associated with farther tail. In this case, one could try to increase the threshold if possible as discussed in Section \ref{sec:exp_threshold}, which could reduce the conservativeness to certain degree. Otherwise, one could at least get a conservative but safe estimation with this approach.

\subsection{Comparison with POT}\label{sec:exp_POT}

Finally, we compare our approach with the conventional POT method, where a GPD is fitted from the excess-loss data using maximum likelihood (e.g., \citet{smithEstimatingTailsProbability1987}) and a $95\%$ confidence upper bound for the tail interval probability is then obtained from the delta method. Table \ref{tb5_tpe} shows the POT results, where we choose the threshold for fitting the GPD according to the graphical approach based on the linearity of the mean excess function (see \cite{embrechtsModellingExtremalEvents1997}). 

Comparing Tables \ref{tb4_tpe_0.7} and \ref{tb5_tpe}, we see that while our DRO method obtains looser bounds than POT, it exhibits correct coverage. By contrast, POT undercovers (bold in Table \ref{tb5_tpe}) in many cases. For instance, for an objective function $\mathbb{P}(q_{0.90}\leq X\leq q_{0.905})$ and Gamma dataset, POT gives an upper confidence bound $4.64\times 10^{-3}$ which is even smaller than the truth, while DRO gives $8.57\times 10^{-2}$ with configuration $(2,\chi^2)$. On the other hand, POT gives only $65\%$ coverage while DRO gives $100\%$ coverage. The subpar coverage of POT suggests that the data size is too small to carry out proper estimation. 

Overall, POT gives estimates closer to the true target quantity but its confidence bounds can fall short of the prescribed coverage. Our recommendation is that a modeler whose priority is about the order of magnitude would be better off choosing GPD, whereas a more risk-averse modeler seeking a bound with correct confidence guarantee would be better off choosing our DRO approach.

\begin{table}[!ht]
    \centering\scalebox{0.6}{\begin{tabular}{cc|ccc}
    \toprule
    \hline
    Data Source & LHS Quantitle & Relative Ratio & Upper Bound & Coverage Probability \\\hline
\multirow{10}{*}{Gamma}&$0.900$ & $0.928\ (\pm0.060)$ & $4.64\times 10^{-3}\ (\pm3.01\times 10^{-4})$ & $\bm{0.650\ (\pm0.066)}$\\
&$0.910$ & $1.096\ (\pm0.046)$ & $5.48\times 10^{-3}\ (\pm2.32\times 10^{-4})$ & $\bm{0.780\ (\pm0.057)}$\\
&$0.920$ & $1.159\ (\pm0.040)$ & $5.79\times 10^{-3}\ (\pm2.01\times 10^{-4})$ & $\bm{0.850\ (\pm0.049)}$\\
&$0.930$ & $1.177\ (\pm0.035)$ & $5.88\times 10^{-3}\ (\pm1.75\times 10^{-4})$ & $\bm{0.880\ (\pm0.045)}$\\
&$0.940$ & $1.170\ (\pm0.032)$ & $5.85\times 10^{-3}\ (\pm1.59\times 10^{-4})$ & $\bm{0.905\ (\pm0.041)}$\\
&$0.950$ & $1.164\ (\pm0.030)$ & $5.82\times 10^{-3}\ (\pm1.49\times 10^{-4})$ & $\bm{0.895\ (\pm0.042)}$\\
&$0.960$ & $1.169\ (\pm0.030)$ & $5.85\times 10^{-3}\ (\pm1.50\times 10^{-4})$ & $\bm{0.865\ (\pm0.047)}$\\
&$0.970$ & $1.210\ (\pm0.032)$ & $6.05\times 10^{-3}\ (\pm1.58\times 10^{-4})$ & $\bm{0.900\ (\pm0.042)}$\\
&$0.980$ & $1.346\ (\pm0.037)$ & $6.73\times 10^{-3}\ (\pm1.86\times 10^{-4})$ & $0.950\ (\pm0.030)$\\
&$0.990$ & $1.705\ (\pm0.056)$ & $8.52\times 10^{-3}\ (\pm2.81\times 10^{-4})$ & $0.970\ (\pm0.024)$\\\hline 
\multirow{10}{*}{Lognorm}&$0.900$ & $1.002\ (\pm0.051)$ & $5.01\times 10^{-3}\ (\pm2.55\times 10^{-4})$ & $\bm{0.730\ (\pm0.062)}$\\
&$0.910$ & $1.110\ (\pm0.037)$ & $5.55\times 10^{-3}\ (\pm1.87\times 10^{-4})$ & $\bm{0.830\ (\pm0.052)}$\\
&$0.920$ & $1.185\ (\pm0.023)$ & $5.93\times 10^{-3}\ (\pm1.16\times 10^{-4})$ & $0.910\ (\pm0.040)$\\
&$0.930$ & $1.200\ (\pm0.020)$ & $6.00\times 10^{-3}\ (\pm1.01\times 10^{-4})$ & $0.935\ (\pm0.034)$\\
&$0.940$ & $1.219\ (\pm0.019)$ & $6.09\times 10^{-3}\ (\pm9.45\times 10^{-5})$ & $0.965\ (\pm0.025)$\\
&$0.950$ & $1.241\ (\pm0.020)$ & $6.20\times 10^{-3}\ (\pm9.86\times 10^{-5})$ & $0.955\ (\pm0.029)$\\
&$0.960$ & $1.272\ (\pm0.022)$ & $6.36\times 10^{-3}\ (\pm1.10\times 10^{-4})$ & $0.950\ (\pm0.030)$\\
&$0.970$ & $1.331\ (\pm0.024)$ & $6.65\times 10^{-3}\ (\pm1.19\times 10^{-4})$ & $0.975\ (\pm0.022)$\\
&$0.980$ & $1.461\ (\pm0.029)$ & $7.31\times 10^{-3}\ (\pm1.43\times 10^{-4})$ & $0.990\ (\pm0.014)$\\
&$0.990$ & $1.731\ (\pm0.045)$ & $8.65\times 10^{-3}\ (\pm2.23\times 10^{-4})$ & $0.985\ (\pm0.017)$\\\hline 
\multirow{10}{*}{Pareto}&$0.900$ & $1.061\ (\pm0.029)$ & $5.30\times 10^{-3}\ (\pm1.44\times 10^{-4})$ & $\bm{0.830\ (\pm0.052)}$\\
&$0.910$ & $1.100\ (\pm0.020)$ & $5.50\times 10^{-3}\ (\pm1.00\times 10^{-4})$ & $\bm{0.875\ (\pm0.046)}$\\
&$0.920$ & $1.131\ (\pm0.015)$ & $5.66\times 10^{-3}\ (\pm7.45\times 10^{-5})$ & $0.910\ (\pm0.040)$\\
&$0.930$ & $1.156\ (\pm0.015)$ & $5.78\times 10^{-3}\ (\pm7.37\times 10^{-5})$ & $0.940\ (\pm0.033)$\\
&$0.940$ & $1.183\ (\pm0.016)$ & $5.92\times 10^{-3}\ (\pm7.81\times 10^{-5})$ & $0.950\ (\pm0.030)$\\
&$0.950$ & $1.216\ (\pm0.017)$ & $6.08\times 10^{-3}\ (\pm8.45\times 10^{-5})$ & $0.980\ (\pm0.019)$\\
&$0.960$ & $1.260\ (\pm0.018)$ & $6.30\times 10^{-3}\ (\pm9.19\times 10^{-5})$ & $0.985\ (\pm0.017)$\\
&$0.970$ & $1.323\ (\pm0.021)$ & $6.62\times 10^{-3}\ (\pm1.04\times 10^{-4})$ & $0.990\ (\pm0.014)$\\
&$0.980$ & $1.428\ (\pm0.026)$ & $7.14\times 10^{-3}\ (\pm1.30\times 10^{-4})$ & $0.990\ (\pm0.014)$\\
&$0.990$ & $1.627\ (\pm0.046)$ & $8.13\times 10^{-3}\ (\pm2.28\times 10^{-4})$ & $0.980\ (\pm0.019)$\\
    \hline
    \bottomrule
    \end{tabular}}\caption{Tail probability estimation under different objective functions with the POT method. The target probability is $\mathbb{P}(q_{\text{LHS}}\leq X\leq q_{\text{LHS}+0.005})$. The coverage probabilities under the nominal confidence level are bold in the table.}
    \label{tb5_tpe}
\end{table}

\bibliographystyle{apalike}
\bibliography{RO}
\spacingset{1.9} 

\bigskip
\begin{center}
{\large\bf SUPPLEMENTARY MATERIAL}
\end{center}

\begin{description}

\item[Supplement A] Calibration Procedures.
\item[Supplement B] Sensitivity Analysis. 
\item[Supplement C] Proofs of Results. 
\end{description}
\newpage
\appendix
\setcounter{equation}{0}
\setcounter{page}{1}
\renewcommand{\theequation}{A.\arabic{equation}}
\renewcommand{\thelemma}{A.\arabic{lemma}}
\renewcommand{\thetheorem}{A.\arabic{theorem}}

\section{Supplement: Calibration Procedures}\label{sec:calibration}

The constructions of \eqref{ellipsoid} and \eqref{rectangle} are motivated by the two following weak convergence results.

\text{1. $\chi^2$ distribution.} Given $n$ sample points from $d$-dimensional random vectors $\mathbf{X}$ with positive variance-covariance matrix $\bm{\Sigma}$, according to weak convergence results, e.g., Corollary 2.1 in \citet{j.j.dikDISTRIBUTIONGENERALQUADRATIC1985}, we have
\begin{align}
n(\mathbb{E}[\bm{X}]-\hat{\bm{\mu}})^\top\hat{\bm{\Sigma}}_n^{-1}(\mathbb{E}[\bm{X}]-\hat{\bm{\mu}})
{\Rightarrow}\chi^2_d\label{chi2convergence}
\end{align}
where $\hat{\bm{\mu}}$ and $\hat{\bm{\Sigma}}_{n}$ are the sample mean and sample covariance matrix respectively and $\chi^2_{d}$ is a chi-squared distribution with $d$ degrees of freedom.

\text{2. Kolmogorov distribution.} Given $n$ i.i.d. realization ordered points $\{x_{1},\cdots,x_{n}\}$ from random variable $X$ with continuous cumulative probability distribution $F$ and the corresponding empirical distribution $F_n$, the Kolmogorov-Smirnov statistic $D_n\big(:=\sup_x|F_n(x)-F(x)|\big)$ converges to Kolmogorov distribution, e.g., \citet{noetherNoteKolmogorovStatistic1963}, i.e., 
\begin{align}
\sqrt{n}\max_{i=1,\cdots,n} \big(\max\big(\frac{i-n}{n}+\mathbb{E}[\mathbb{I}(X\geq x_i)],\frac{n+1-i}{n}-\mathbb{E}[\mathbb{I}(X\geq x_i)]\big)\big) \Rightarrow \sup_{t\in [0,1]}|B(t)-tB(1)|\label{kolconvergence}
\end{align}
where $B(t)$ is a standard Wiener process.

We discuss the calibration methods based on empirical observations to achieve the statistical guarantee results in Theorem \ref{proposition:sg1} and \ref{proposition:sg2}. To construct feasible regions that satisfy coverage of true distribution $P_{true}$ with high probability, one could calibrate the parameters $\bm{\Gamma}, \eta,\nu$ in a statistical perspective. Overall, $\mathbb{S}$ can be constructed based on the aforementioned weak convergence results. The estimation of $\eta$ and $\nu$ can be accessible via kernel density estimation and boostrapping. A Bonferroni correction is applied to guarantee simultaneous confidence level of the estimation of those parameters. We illustrate the procedure in both ellipsoidal and rectangular cases. Consider a sample realization $\{x_1,x_2,\cdots,x_n\}$ from random variable $X$ with cumulative distribution $F$. 

\paragraph{Ellipsoidal Constraint}Denote $\hat{\bm{\mu}}=\frac{1}{n}\sum_{i=1}^n \bm{g}(x_i)$, $\hat{\bm{\Sigma}}=\frac{1}{n-1}\sum_{i=1}^n\big(\bm{g}(x_i)-\hat{\bm{\mu}}\big)\big(\bm{g}(x_i)-\hat{\bm{\mu}}\big)^\top$ and assume the dimension of $\bm{g}$ is $d$. To calibrate the ellipsoidal region $\mathbb{S}_{E}$, we utilize the $\chi^2$ weak convergence result depicted in \eqref{chi2convergence}. In particular, we need to determine $z$ in 
\begin{align*}
n(\mathbb{E}[\bm{g}(X)]-\hat{\bm{\mu}})^\top\hat{\bm{\Sigma}}_n^{-1}(\mathbb{E}[\bm{g}(X)]-\hat{\bm{\mu}})\leq z, 
\end{align*}
to construct the required confidence region. 

To that end, value $z$ is chosen as the $\big(1-\frac{\alpha}{2}\big)^{\text{th}}$ or $\big(1-\frac{\alpha}{3}\big)^{\text{th}}$ quantile of chi-squared distribution with $d$ degrees of freedom under $\mathscr{P}^{1}_{a,\eta}$ or $\mathscr{P}^{2}_{a,\underline{\eta},\bar{\eta},\nu}$ respectively. For the former case, $\eta$ is estimated as the $1-\frac{\alpha}{2}$ percentile of the bootstrapped densities at $a$. For the latter case, $\underline{\eta}$ and $\bar{\eta}$ are chosen as the $\frac{\alpha}{6}$ percentile and $1-\frac{\alpha}{6}$ percentile of the bootstrapped densities at $a$ respectively and $-\hat{\nu}$ the $\frac{\alpha}{3}$ percentile of the bootstrapped coefficients $F_+^{(2)}(a)$. 

In the case of multiple threshold levels, we seek to obtain a result which is the optimal of all objective values ranging over different $a_i,i=1,...,m$ and cover the true value with probability $1-\alpha$. Under this scenario, value $z$ is chosen as $\big(1-\frac{\alpha}{m+1}\big)^{\text{th}}$ or $\big(1-\frac{\alpha}{2m+1}\big)^{\text{th}}$ quantile of chi-squared distribution with $d$ degrees of freedom under $\mathscr{P}^{1}_{a_i,\eta}$ or $\mathscr{P}^{2}_{a_i,\underline{\eta},\bar{\eta},\nu}$ respectively. Alternatively, we  may also choose $z$ via bootstrapping, depicted in Algorithm \ref{algo:z_ellipsoidal}. For each $a_i,i=1,...,m$, $\hat{\bm{\mu}},\hat{\bm{\Sigma}}$ are the empirical mean and covariance. 
$\eta$ is chosen as the $1-\frac{\alpha}{m+1}$ percentile of the bootstrapped densities at $a_i$ for $\mathscr{P}^{1}_{a_i,\eta}$. For $\mathscr{P}^{2}_{a_i,\underline{\eta},\bar{\eta},\nu}$, $\underline{\eta}$ and $\bar{\eta}$ are chosen as the $\frac{\alpha}{4m+2}$ percentile and $1-\frac{\alpha}{4m+2}$ percentile of the bootstrapped densities at $a_i$ respectively and $-\hat{\nu}$ the $\frac{\alpha}{2m+1}$ percentile of the bootstrapped coefficients $F_+^{(2)}(a_i)$.

\scalebox{0.8}{
\begin{minipage}{1.2\linewidth}
\begin{algorithm}[H]
  \caption{Bootstrap procedure for computing $z$.}
  \begin{algorithmic}[1]
\For{$b=1,\dots,500$}
 \State Randomly sample $n$ data points with replacement from $\{x_1,\cdots,x_n\}$.
 \State Compute $\hat{z}$ after each sampling:
\begin{align*}
\hat{z} = \max\limits_{a_i,i=1,...,m}\Big\{n\big(\hat{\hat{\bm{\mu}}}-\hat{\bm{\mu}}\Big)^T\hat{\bm{\Sigma}}^{-1}\Big(\hat{\hat{\bm{\mu}}}-\hat{\bm{\mu}}\big)
\Big\}
\end{align*}
where $\hat{\hat{\bm{\mu}}}$ is the empirical mean of $\{\bm{g}(x_i^b)\}_{i=1}^n$ and $\{x^b_1,\cdots,x^b_n\}$ is the data set we obtain after this resampling. 
\EndFor
\State Select the $(1-\frac{\alpha}{m+1})^{\text{th}}$ or $(1-\frac{\alpha}{2m+1})^{\text{th}}$ quantile of the $\hat{z}$'s obtained above under $\mathscr{P}^{1}_{a_i,\eta}$ or $\mathscr{P}^{2}_{a_i,\underline{\eta},\bar{\eta},\nu}$ respectively.
  \end{algorithmic}
  \label{algo:z_ellipsoidal}
\end{algorithm}
\end{minipage}}

\paragraph{Rectangular Constraint}To calibrate the rectangular region $\mathbb{S}_{R}$, we use the Kolmogorov-Smirnov weak convergence result shown in \eqref{kolconvergence}. In particular, we need to determine $z$ in 
\begin{align}
\sqrt{n}\max_{i=1,\cdots,n} \|F(x_i)-\widehat{F}(x_i)\|_\infty\leq z,
\end{align} 
where $\widehat{F}$ is the empirical distribution, to construct the confidence region. To this end, value $z$ is the $\big(1-\frac{\alpha}{2}\big)^{\text{th}}$ or $\big(1-\frac{\alpha}{3}\big)^{\text{th}}$ quantile of Kolmogorov distribution under $\mathscr{P}^{1}_{a,\eta}$ or $\mathscr{P}^{2}_{a,\underline{\eta},\hat{\eta},\nu}$ respectively. For the former case, $\eta$ is the $1-\frac{\alpha}{2}$ percentile of the bootstrapped densities $F^{(1)}(a)$. For the latter case, $\bar{\eta}$ and $\underline{\eta}$ are the upper bound and lower bound of $F^{(1)}(a)$ with joint probability $1-\frac{\alpha}{3}$ and $-\hat{\nu}$ is the lower bound of $F_+^{(2)}(a|x\geq a)$ with probability $1-\frac{\alpha}{3}$.

For the setting of choosing among a range of thresholds $a_i,i=1,...,m$. Value $z$ is chosen as $\big(1-\frac{\alpha}{m+1}\big)^{\text{th}}$ or $\big(1-\frac{\alpha}{2m+1}\big)^{\text{th}}$ quantile of Kolmogorov distribution divided by $\sqrt{n}$ under $\mathscr{P}^{1}_{a_i,\eta}$ or $\mathscr{P}^{2}_{a_i,\underline{\eta},\hat{\eta},\nu}$ respectively. Alternatively, we  may also choose $z$ via bootstrapping, depicted in Algorithm \ref{algo:z_ks}, where $\Delta=1-\frac{\alpha}{m+1}$ for $\mathscr{P}^{1}_{a_i,\eta}$ and $\Delta=1-\frac{\alpha}{2m+1}$ for $\mathscr{P}^{2}_{a_i,\underline{\eta},\bar{\eta},\nu}$. For the first case, $\eta$ is set as the $1-\frac{\alpha}{m+1}$ percentile of the bootstrapped densities $F^{(1)}(a_i)$. For the second case, $\bar{\eta}$ and $\underline{\eta}$ are the upper bound and lower bound of $F^{(1)}_+(a_i)$ with joint probability $1-\frac{\alpha}{2m+1}$ and $-\hat{\nu}$ is the lower bound of $F_+^{(2)}(a_i|x\geq a_i)$ with probability $1-\frac{\alpha}{2m+1}$.

\scalebox{0.8}{
\begin{minipage}{1.2\linewidth}
\begin{algorithm}[H]
  \caption{Bootstrap procedure for computing $z$.}
  \begin{algorithmic}[1]
\For{$b=1,\dots,500$}
 \State Randomly sample $n$ data points with replacement from $\{x_1,\cdots,x_n\}$.
 \State Compute $\hat{z}$ after each sampling:
{\small
\begin{align*}
\hat{z} = \max\limits_{a_i,i=1,...,m}\bigg\{\sqrt{n}\max\limits_{j=1,2,\cdots,n} \Big\{\max\Big\{\frac{j}{n}-\frac{\#\{k:a_i\leq x_k\leq x^b_j\}}{n},\frac{\#\{k:a_i\leq x_k\leq x^b_j\}}{n}-\frac{j-1}{n}\Big\}\Big\}\bigg\}
\end{align*}}
where $\{x^b_1,\cdots,x^b_n\}$ is the non-decreasing ordered dataset obtained after each sampling.
\EndFor
\State Select the $(\Delta)^{\text{th}}$ quantile of the $\hat{z}$'s obtained above.
\State $z$ is output as the selected $(\Delta)^{\text{th}}$ quantile in Step 5 divided by $\sqrt{n}$.
  \end{algorithmic}
  \label{algo:z_ks}
\end{algorithm}
\end{minipage}}

\section{Sensitivity Analysis}\label{sec:sensitivity}
On a high level, we give a perturbation analysis on the optimal value of the parameterized problem $\mathfrak{P}_{\bm{\mu}}$ in (\ref{compact_origin_opt}) with parameters $\bm{\mu}$. In particular, for a given perturbation direction $\bm{dr}$, we give a formula to express the perturbed optimal value of the problem as a linear function of $\bm{dr}$. This analysis shows how the DRO estimation changes with respect to the calibration accuracy, and hence helps us understand how robust the estimation is to the randomness in the data and the calibration procedure.

In program $\mathfrak{P}_{\bm{\mu}}$ with given functions $H(x)$ and $G_j(x), \forall j \in \{0\}\cup [d]$ for some positive integer $d$ where $[d]:=\{1,\ldots,d\}$, we assume \text{(1)}. $G_0(x)=1,\mu_0=1$, restricting the non-negative bounded measures to probability measures; \text{(2)}. the parameters $\mu_j\in \mathbb{R},\forall j\in [d]$. Note that the dual of $\mathfrak{P}_{\bm{\mu}}$, i.e., $\mathfrak{D}_{\bm{\mu}}$, is a linear semi-infinite programming, in which there are finite number of decision variables and infinite number of linear constraints. According to \citet{gobernaFarkasMinkowskiSystemsSemiinfinite1981}, the relation $\bm{a}^\top \bm{x}\geq \beta$ with the associated vector $(\bm{a}^\top,\beta)^\top$ is called a linear consequence relation of the constraints system in some program $\mathfrak{P}$ if every feasible point in $\mathbb{F}(\mathfrak{P})$ satisfies the relation. Program $\mathfrak{P}$ is then \textit{Farkas-Minkowski} (FM) if every linear consequence relation of the constraints system of $\mathfrak{P}$ is a linear consequence relation of a finite subsystem. Lastly, for any program $\mathfrak{P}$, $\mathbb{F}(\mathfrak{P})$, $\mathbb{F}^*(\mathfrak{P})$ and $v(\mathfrak{P})$ denote the feasible region, the optimal solution region and the optimal objective value of $\mathfrak{P}$ respectively.{\small
\begin{multicols}{2}
\noindent 
\begin{equation}
\label{compact_origin_opt}
\begin{aligned}\break
(\mathfrak{P}_{\bm{\mu}}):\max_{X\sim P\in \mathscr{M}^+(\mathbb{\Omega})} \ \  &\mathbb{E}_P[H(X)] \ \\
\text{ s.t. } \ \ &\mathbb{E}_P [\bm{G}(X)]=\bm{\mu}.
\end{aligned}
\end{equation}
\noindent
\begin{align*}
(\mathfrak{D}_{\bm{\mu}}):\min_{\bm{y}} \ \  &\bm{y}^\top\bm{\mu}\\
\text{ s.t. } \ \ &\bm{y}^\top \bm{G}(x)\geq H(x),\forall x\in \mathbb{\Omega}.
\end{align*}
\end{multicols}
\vspace{-0.5cm}
}
\noindent where $\mathscr{M}^+(\mathbb{\Omega})$ denotes the space of non-negative bounded measures on $\mathbb{\Omega}$. 

\begin{theorem}\label{thm:first_order_compactness}
Suppose $\mathfrak{P}_{\bm{\mu}}$ is feasible and $\mathfrak{D}_{\bm{\mu}}$ is a feasible FM system with $\bm{\mu}\in \textup{int}\big(\mathbb{M}(\mathfrak{D}_{\bm{\mu}})\big)$, then for any direction $\bm{dr}$ satisfying that the constraint system $\big\{\mathbb{E}_P [G_j(X)]+\mu_j\pi= dr_j,\forall j\in \{0\}\cup [d]\big\}$ for $(P,\pi)$ is non-empty, there exists $\epsilon>0$ such that $\forall \rho: 0\leq \rho<\epsilon$,
\begin{align}
v(\mathfrak{P}_{\bm{\mu}+\rho \bm{dr}})=v(\mathfrak{P}_{\bm{\mu}})+\rho\min\{\bm{dr}^\top \bm{y}|\bm{y}\in \mathbb{F}^*(\mathfrak{D}_{\bm{\mu}})\}.\label{eq:1st_expansion}
\vspace{-1cm}
\end{align}
\end{theorem}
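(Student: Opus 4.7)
The plan is to exploit LSIP duality to recast both $v(\mathfrak{P}_{\bm{\mu}})$ and $v(\mathfrak{P}_{\bm{\mu}+\rho\bm{dr}})$ as dual optimal values, which turns the sensitivity analysis into a parametric LP in the finite-dimensional variable $\bm{y}$ where only the objective (not the feasible set) depends on $\bm{\mu}$. First, under the hypotheses that $\mathfrak{P}_{\bm{\mu}}$ is feasible, $\mathfrak{D}_{\bm{\mu}}$ is an FM system, and $\bm{\mu}\in\mathrm{int}(\mathbb{M}(\mathfrak{D}_{\bm{\mu}}))$, classical LSIP strong-duality results give $v(\mathfrak{P}_{\bm{\mu}})=v(\mathfrak{D}_{\bm{\mu}})$ with $\mathbb{F}^{*}(\mathfrak{D}_{\bm{\mu}})$ non-empty and bounded in $\mathbb{R}^{d+1}$. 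The directional-feasibility condition on $\bm{dr}$ (existence of $(P,\pi)$ with $\mathbb{E}_{P}[G_j(X)]+\mu_j\pi=dr_j$ for every $j\in\{0\}\cup[d]$) lets me construct, from $P$ and any feasible measure $P_0$ of $\mathfrak{P}_{\bm{\mu}}$, a feasible measure for $\mathfrak{P}_{\bm{\mu}+\rho\bm{dr}}$ for all small $\rho\geq 0$, by rescaling $P_0$ using the scalar $\pi$ and adding a $\rho$-multiple of $P$. Combined with $\bm{\mu}$ being interior, this keeps $\bm{\mu}+\rho\bm{dr}\in\mathrm{int}(\mathbb{M})$ and preserves the FM property locally, so strong duality persists: $v(\mathfrak{P}_{\bm{\mu}+\rho\bm{dr}})=v(\mathfrak{D}_{\bm{\mu}+\rho\bm{dr}})$.

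Second, I use that the dual feasible region $\mathcal{C}:=\{\bm{y}\in\mathbb{R}^{d+1}:\bm{y}^{\top}\bm{G}(x)\geq H(x),\,\forall x\in\Omega\}$ depends only on $(H,\bm{G})$, so
$$v(\mathfrak{D}_{\bm{\mu}+\rho\bm{dr}})=\min_{\bm{y}\in\mathcal{C}}\,(\bm{\mu}+\rho\bm{dr})^{\top}\bm{y}.$$
The upper bound in \eqref{eq:1st_expansion} is immediate: plugging any $\bm{y}^{*}\in\mathbb{F}^{*}(\mathfrak{D}_{\bm{\mu}})\subset\mathcal{C}$ gives $(\bm{\mu}+\rho\bm{dr})^{\top}\bm{y}^{*}=v(\mathfrak{D}_{\bm{\mu}})+\rho\,\bm{dr}^{\top}\bm{y}^{*}$, and minimizing over $\mathbb{F}^{*}(\mathfrak{D}_{\bm{\mu}})$ yields the desired bound. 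For the matching lower bound on a full interval, the interior condition makes $\bm{\mu}^{\top}\bm{y}$ coercive on $\mathcal{C}$ (its sublevel sets are bounded), and the FM property renders $\mathcal{C}$ locally polyhedral around the bounded optimal face $\mathbb{F}^{*}(\mathfrak{D}_{\bm{\mu}})$; a Hoffman-type error bound then produces a constant $c>0$ with $\bm{\mu}^{\top}\bm{y}-v(\mathfrak{D}_{\bm{\mu}})\geq c\,d(\bm{y},\mathbb{F}^{*}(\mathfrak{D}_{\bm{\mu}}))$ for $\bm{y}\in\mathcal{C}$ in a compact neighborhood of that face. Combining this with the uniform boundedness of $|\bm{dr}^{\top}\bm{y}|$ on the same neighborhood, one can pick $\epsilon>0$ small enough (roughly $\epsilon<c/\|\bm{dr}\|$) so that for $\rho\in[0,\epsilon)$ the perturbation $\rho\,\bm{dr}^{\top}\bm{y}$ cannot offset the positive gap $\bm{\mu}^{\top}\bm{y}-v(\mathfrak{D}_{\bm{\mu}})$ outside $\mathbb{F}^{*}(\mathfrak{D}_{\bm{\mu}})$. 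Hence for such $\rho$ the minimum of $(\bm{\mu}+\rho\bm{dr})^{\top}\bm{y}$ is attained inside the optimal face and equals $v(\mathfrak{D}_{\bm{\mu}})+\rho\min\{\bm{dr}^{\top}\bm{y}:\bm{y}\in\mathbb{F}^{*}(\mathfrak{D}_{\bm{\mu}})\}$, matching the upper bound.

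The main obstacle is precisely this last step: upgrading the familiar first-order directional-derivative identity (which holds generically for the concave LSIP value function) to exact linearity of $\rho\mapsto v(\mathfrak{P}_{\bm{\mu}+\rho\bm{dr}})$ on a nondegenerate interval. In general LSIP value functions are only concave in the right-hand side; the FM condition is what lets one localize to a finite subsystem, so that only finitely many of the semi-infinite constraints are effective near the optimal face and the piecewise-linearity of the parametric \emph{finite} LP value function carries over locally, producing \eqref{eq:1st_expansion}.
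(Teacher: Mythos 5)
Your duality setup and the upper bound coincide with the paper's: the dual feasible set $\mathcal{C}=\mathbb{F}(\mathfrak{D}_{\bm{\mu}})$ is independent of $\bm{\mu}$, so weak duality at any $\bm{y}\in\mathbb{F}^*(\mathfrak{D}_{\bm{\mu}})$ gives $v(\mathfrak{P}_{\bm{\mu}+\rho\bm{dr}})\le v(\mathfrak{P}_{\bm{\mu}})+\rho\min\{\bm{dr}^\top\bm{y}:\bm{y}\in\mathbb{F}^*(\mathfrak{D}_{\bm{\mu}})\}$. The lower bound is where your argument breaks. The Farkas--Minkowski property concerns consequence relations (closedness of the characteristic cone); it does \emph{not} make $\mathcal{C}$ locally polyhedral, and it does not yield a Hoffman-type linear-growth bound $\bm{\mu}^\top\bm{y}-v(\mathfrak{D}_{\bm{\mu}})\ge c\,d\big(\bm{y},\mathbb{F}^*(\mathfrak{D}_{\bm{\mu}})\big)$ on $\mathcal{C}$. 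Take $\Omega=[-1,1]$, $G_0\equiv 1$, $G_1(x)=x$, $H(x)=-x^2/2$, $\bm{\mu}=(1,0)$: the constraint system is canonically closed, hence FM, and $\bm{\mu}$ lies in the interior of the moment cone, yet near the optimum $\mathcal{C}=\{(y_0,y_1):y_0\ge y_1^2/2\}$ is a smooth parabola, the growth of the objective $y_0$ off the singleton optimal face $\{(0,0)\}$ is quadratic, and for $\bm{dr}=(0,1)$ the perturbed minimizer $(\rho^2/2,-\rho)$ leaves the optimal face for every $\rho>0$, with $\min_{\mathcal{C}}(\bm{\mu}+\rho\bm{dr})^\top\bm{y}=-\rho^2/2$ strictly concave rather than affine in $\rho$. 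So the step ``for small $\rho$ the minimum is attained inside the optimal face'' fails, and with it your matching lower bound.

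A telling symptom is that your lower-bound argument invokes the directional hypothesis (solvability of $\mathbb{E}_P[G_j(X)]+\mu_j\pi=dr_j$ in $(P,\pi)$) only to certify feasibility of the perturbed primal; if the Hoffman/polyhedrality route worked, exact linearity would hold in every direction, which it does not. The paper instead derives the lower bound from that hypothesis: it introduces the auxiliary pair $\mathfrak{P}'$ (maximize $\mathbb{E}_P[H(X)]+Z^*\pi$ subject to the directional constraints) and $\mathfrak{D}'$ (minimize $\bm{dr}^\top\bm{y}$ over $\mathbb{F}^*(\mathfrak{D}_{\bm{\mu}})$), obtains $v(\mathfrak{P}')=\min\{\bm{dr}^\top\bm{y}:\bm{y}\in\mathbb{F}^*(\mathfrak{D}_{\bm{\mu}})\}$ by strong duality for this pair, and checks that $P^*+\rho(\bar P+\bar\pi P^*)$ is feasible for $\mathfrak{P}_{\bm{\mu}+\rho\bm{dr}}$ whenever $1+\rho\bar\pi\ge 0$, with objective value exactly $v(\mathfrak{P}_{\bm{\mu}})+\rho\,v(\mathfrak{P}')$. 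That explicit primal construction---together with attainment of the optimum of $\mathfrak{P}'$ by some $(\bar P,\bar\pi)$, which is the delicate point the whole equality turns on---is the ingredient your proposal is missing; the error-bound route cannot be salvaged under the stated assumptions.
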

The derivation mainly follows standard duality and sensitivity analysis for linear semi-infinite linear optimization \citep{miguela.gobernaLinearSemiinfiniteOptimization2000} seen in Theorem 2 of \citet{gobernaSensitivityAnalysisLinear2007}. There are several sufficient conditions for $\mathfrak{D}_{\bm{\mu}}$ to be an FM system. For instance, if $\mathbb{\Omega}$ is a compact set, function $H$ and vector functions $\bm{G}$ are continuous such that $\exists \bm{y},\ \bm{y}^\top \bm{G}(x)>H(x),\forall x\in\Omega$, then $\mathfrak{D}_{\bm{\mu}}$ is an FM system according to Lemma \ref{lemma:cc2fm}. Generally speaking, such a first-order expansion requires solving an auxiliary optimization problem to obtain the difference $v(\mathfrak{P}_{\bm{\mu}+\rho \bm{dr}})-v(\mathfrak{P}_{\bm{\mu}})$. However, if $\bm{dr}$ is a multiple of $\bm{\mu}$, i.e., $\rho\bm{dr}=c\bm{\mu}$ for some $c\neq 0$, then the constraint system $\big\{\mathbb{E}_P [G_j(X)]+\mu_j\pi= dr_j,\forall j\in \{0\}\cup [d]\big\}$ is non-empty for $(\mathcal{P,\pi})$ and therefore \eqref{eq:1st_expansion} is simplified as  
\begin{align*}
v(\mathfrak{P}_{\bm{\mu}+ \bm{dr}})=v(\mathfrak{P}_{(1+c)\bm{\mu}})&= v(\mathfrak{P}_{\bm{\mu}})+c\min\{\bm{\mu}^\top \bm{y}|\bm{y}\in \mathbb{F}^*(\mathfrak{D}_{\bm{\mu}})\}= (1+c)v(\mathfrak{P}_{\bm{\mu}}).
\end{align*}

\section{Supplement: Proofs of Results}
\subsection{Proof of Theorem \ref{proposition:sg1}}
\begin{proof}
For program $\mathfrak{P}(\mathscr{P}^{1}_{a,\eta},\bm{g},\mathbb{S})$, if $P_{true}\in\mathbb{F}(\mathfrak{P}(\mathscr{P}^{1}_{a,\eta},\bm{g},\mathbb{S}))$, we will have $\psi(P_{true})\leq v(\mathfrak{P}(\mathscr{P}^{1}_{a,\eta},\bm{g},\mathbb{S}))$ where $v(\mathfrak{P})$ is the optimal value of program $\mathfrak{P}$. Therefore, 
\begin{align*}
\mathbb{P}(v(\mathfrak{P}(\mathscr{P}^{1}_{a,\eta},\bm{g},\mathbb{S})) \geq \psi(P_{true}))\geq \mathbb{P}(P_{true}\in\mathbb{F}(\mathfrak{P}(\mathscr{P}^{1}_{a,\eta},\bm{g},\mathbb{S})))=1-\alpha.
\end{align*}
Similar arguments hold for $\mathfrak{P}(\mathscr{P}^{2}_{a,\underline{\eta},\bar{\eta},\nu},\bm{g},\mathbb{S})$. 
\end{proof}
\subsection{Proof of Theorem \ref{proposition:sg2}}
\begin{proof}

If $P_{true}\in\bigcap_{i=1,\ldots,m}\mathbb F(\mathfrak{P}(\mathscr{P}_i,\bm{g}_i,\mathbb{S}_i))$, we have 
\begin{align*}
\psi(P_{true})\leq v\big(\mathfrak{P}(\mathscr{P}_i,\bm{g}_i,\mathbb{S}_i)\big),i=1,...,m. 
\end{align*} 
Hence 
\begin{align*}
&\mathbb{P}\Big(\min\limits_{a_i,i=1,...,m} v\big(\mathfrak{P}(\mathscr{P}_i,\bm{g}_i,\mathbb{S}_i)\big) \geq \psi(P_{true})\Big)&\\
\geq &\mathbb{P}\Big(P_{true}\in\bigcap_{i=1,\ldots,m}\mathbb F(\mathfrak{P}(\mathscr{P}_i,\bm{g}_i,\mathbb{S}_i))\Big)\\
=&1-\alpha.
\end{align*}
Similar arguments hold for $P_{true}\in\mathscr P^2(\min_ia_i)$. 
\end{proof}

\subsection{Proof of Proposition \ref{conserv_1}}
\begin{proof}
By definition, we know that $Y>0$ with probability $1$, and that 
\begin{equation}
    F_Y(x)=\mathbb{P}\left( Y\leq x\right)=P\left(X\leq x_F-x^{-1}\right)=F\left(x_F-x^{-1}\right).
    \label{F_Y}
\end{equation}
The fact that $F_Y\in MDA(H_{-\xi})$ is proved in \citet{embrechtsModellingExtremalEvents1997}. Now we prove the remaining statements. By taking derivatives, we get that 
\begin{eqnarray}
&f_Y(x)=f\left(x_F-x^{-1}\right)x^{-2},\label{f_Y}\\
&f_Y'(x)=f'\left(x_F-x^{-1}\right)x^{-4}-2f\left(x_F-x^{-1}\right)x^{-3}.\label{f_Y'}
\end{eqnarray}
By Assumption \ref{smoothness and shape}, $f\left(x_F-x^{-1}\right)$ and $f'\left(x_F-x^{-1}\right)$ exist for $x>z_Y:=1/(x_F-z)>0$. Thus $F_Y$ is twice differentiable on $(z_Y,\infty)$. We also know that $f\left(x_F-x^{-1}\right)>0$ for $x>z_Y$. Then by \eqref{f_Y}, we get that $f_Y>0$ on $(z_Y,\infty)$. Moreover, $f_Y$ is decreasing on $(z_Y,\infty)$ since $f\left(x_F-x^{-1}\right)$ and $x^{-2}$ are both positive and decreasing for $x>z_Y$. Now we only need to prove that $f_Y$ is also convex on $(z_Y,\infty)$. Indeed, for any $z_Y<x_1<x_2$ and $0<\lambda<1$, we have that 
{\footnotesize

\begin{align*}
&f_Y\left(\lambda x_1+(1-\lambda)x_2\right)\\
=&f\left(x_F-\frac{1}{\lambda x_1+(1-\lambda)x_2}\right)\frac{1}{\left(\lambda x_1+(1-\lambda)x_2\right)^2}\\
\leq & f\left(x_F-\frac{\lambda}{x_1}-\frac{1-\lambda}{x_2}\right)\left[\frac{\lambda}{x_1^2}+\frac{1-\lambda}{x_2^2}\right]\\
\leq & \left[\lambda f\left(x_F-x_1^{-1}\right)+(1-\lambda) f\left(x_F-x_2^{-1}\right)\right]\left[\frac{\lambda}{x_1^2}+\frac{1-\lambda}{x_2^2}\right]\\
= & \lambda^2 f\left(x_F-x_1^{-1}\right)x_1^{-2}+(1-\lambda)^2 f\left(x_F-x_2^{-1}\right)x_2^{-2}+\lambda(1-\lambda)\left[f\left(x_F-x_1^{-1}\right)x_2^{-2}+f\left(x_F-x_2^{-1}\right)x_1^{-2}\right]\\
\leq & \lambda^2 f\left(x_F-x_1^{-1}\right)x_1^{-2}+(1-\lambda)^2 f\left(x_F-x_2^{-1}\right)x_2^{-2}+\lambda(1-\lambda)\left[f\left(x_F-x_1^{-1}\right)x_1^{-2}+f\left(x_F-x_2^{-1}\right)x_2^{-2}\right]\\
= & \lambda f_Y(x_1)+(1-\lambda) f_Y(x_2).
\end{align*}
}
Therefore, $F_Y$ also satisfies Assumption \ref{smoothness and shape}. 
\end{proof}
\subsection{Proof of Proposition \ref{conserv_2}}
\begin{proof}
Clearly, \eqref{F_Y}, \eqref{f_Y} and \eqref{f_Y'} still hold. Then we may follow the proof of Proposition \ref{conserv_1} to define $z_Y$ and prove that on $(z_Y,\infty)$, $F_Y$ is twice differentiable and $f_Y$ is positive, decreasing and convex. The remainder of this proof inspires from \citet{embrechtsModellingExtremalEvents1997}. Since $F$ satisfies Assumption \ref{smoothness and shape} and \ref{additional assumption}, it is known that $F$ is a von Mises function with auxiliary function $\bar{F}/f$. More specifically, $\bar{F}$ has the following representation:
\begin{align*}
    \bar{F}(x)=c\exp\left\{-\int_z^x \frac{f(t)}{\bar{F}(t)}\mathrm{d}t\right\},z<x<x_F
\end{align*}
where $c$ is a positive constant. Then we have that 
\begin{align*}
\bar{F}_Y(x)&=\bar{F}\left(x_F-x^{-1}\right)=c\exp\left\{-\int_z^{x_F-x^{-1}} \frac{f(t)}{\bar{F}(t)}\mathrm{d}t\right\}\\
&=c\exp\left\{-\int_{1/(x_F-z)}^x \frac{f\left(x_F-s^{-1}\right)s^{-2}}{\bar{F}\left(x_F-s^{-1}\right)}\mathrm{d}s\right\}
=c\exp\left\{-\int_{z_Y}^x \frac{f_Y(s)}{\bar{F}_Y(s)}\mathrm{d}s\right\}.
\end{align*}
By definition, $F_Y$ is also a von Mises function with auxiliary function $\bar{F}_Y/f_Y$, which implies that $F_Y\in MDA(\Lambda)$ and that \eqref{additional assumption for Y} holds. 
\end{proof}
\subsection{Proof of Theorem \ref{optimal value theorem}}
\begin{proof}
Define $h(x)=\mathbb{I}(x>b)$. Correspondingly, 
\begin{align*}
H(x)=\int_0^x\int_0^t h(s+a)\mathrm{d}s\mathrm{d}t=\frac{(x-b+a)^2}{2}\mathbb{I}(x>b-a).  
\end{align*}
Then Assumption 1 and Assumption 2 in \citet{lamTailAnalysisParametric2017} hold. Indeed, $h:\mathbb{R}\rightarrow\mathbb{R}^+$ is bounded and is nondecreasing in $[a, b)$ and nonincreasing in $(b,\infty)$. Also, $\lambda=\lim_{x\rightarrow\infty}H(x)/x^2=\frac{1}{2}$. Therefore, we can apply Theorem 4 in \citet{lamTailAnalysisParametric2017}. In particular, $z^*=\max_{x\in[0,\mu]} W(x)$ where 
\begin{align*}
W(x)=\begin{cases}
\nu\left(\frac{\sigma-\mu^2}{\sigma-2\mu x+x^2}H(x)+\frac{(\mu-x)^2}{\sigma-2\mu x+x^2}H(\frac{\sigma-\mu x}{\mu-x})\right) & \text{if } x\in[0,\mu);\\
\nu(H(\mu)+\lambda(\sigma-\mu^2)) & \text{if }x=\mu.
\end{cases}
\end{align*}

If $\mu\le b-a$, then it can be proved that $\arg\max W(x)=\mu$ and thus $z^*=W(\mu)=\frac{\nu}{2}(\sigma-\mu^2)$. If $\mu>b-a$, then it can be proved that $\arg\max W(x)=[b-a,\mu]$, and thus $z^*=\frac{\nu}{2}[\sigma-2(b-a)\mu+(b-a)^2]$.
\end{proof}
\subsection{Proof of Theorem \ref{conserv_4}}
\begin{proof}
We denote the optimal value of \eqref{quantile optimization problem} as $q_{opt}$ and then our goal is to show that $q^*=q_{opt}$. First, we prove that $q_{opt}\leq q^*$. Indeed, for any feasible function $\tilde{f}$, the $p$-quantile $q$ is the value that satisfies $\int_q^{\infty}\tilde{f}(x)\mathrm{d}x=1-p$. We see that $\tilde{f}$ is also feasible for \eqref{optimization problem}, and thus $z^*(a,q)\geq 1-p$. By the definition of $q^*$, we know that $q\leq q^*$. Hence, $q_{opt}\leq q^*$. Now we justify that $q^{opt}=q^*$. 

If $\beta-\eta^2/(2\nu)<1-p\leq \beta$, then $a\leq q^*<a+\mu$. Consider a feasible function $\tilde{f}$ that decreases on $[a,q^*]$ with derivative $-\nu$. Then the $p$-quantile for $\tilde{f}$ is exactly $q^*$. 

If $1-p=\beta-\eta^2/(2\nu)$, then $q^*=a+\mu$. Consider a feasible function $\tilde{f}$ that decreases on $[a,a+\mu-\varepsilon]$ with derivative $-\mu$ and then becomes extremely flat. Here, $\varepsilon$ is a small positive number. Then the $p$-quantile is between $a+\mu-\varepsilon$ and $a+\mu$. As $\varepsilon\rightarrow 0$, we get that $q_{opt}=q^*$.

If $1-p<\beta-\eta^2/(2\nu)$, then $q^*=\infty$. Still, we consider the same feasible function as in the above case. Now the quantile $q$ can be as large as we want, and hence $q_{opt}=\infty=q^*$.
\par 
In conclusion, $q^*$ defined in \eqref{q^*} is exactly the optimal value of \eqref{quantile optimization problem}.

Using \eqref{optimal value}, we can easily derive an explicit expression of $q^*$ since the non-constant part of $z^*$ is actually a quadratic function.  
\end{proof}
\subsection{Proof of Proposition \ref{conserv_5}}
\begin{proof}
It is known that $\bar{F}\in RV_{-1/\xi}$. We also have that $\overline{F}$ is absolutely continuous and that $(\overline{F})'=-f$ is monotone. Then we can apply the Proposition 0.7 in \citet{resnickExtremeValuesRegular1987} to get that $f\in RV_{-1/\xi-1}$ and that $\lim_{x\rightarrow\infty}-\frac{xf(x)}{\overline{F}(x)}=-\frac{1}{\xi}.$ Also, since $f$ is convex based on our assumption, $f$ is absolutely continuous and that $f'$ is monotone. We can then apply Proposition 0.7 again to get that $\lim_{x\rightarrow\infty}\frac{xf'(x)}{f(x)}=-\frac{1}{\xi}-1.$

In representation \eqref{Karamata}, we require that $\lim_{x\rightarrow\infty}u(x)/x=\xi$ which gives
\begin{align*}
&\lim_{x\rightarrow\infty}\frac{f(x)u(x)}{\bar{F}(x)}=\lim_{x\rightarrow\infty}\frac{xf(x)}{\bar{F}(x)}\frac{u(x)}{x}=1,\\
&\lim_{x\rightarrow\infty}-\frac{f'(x)u^2(x)}{(\xi+1)\bar{F}(x)}=\lim_{x\rightarrow\infty}-\frac{1}{\xi+1}\frac{xf'(x)}{f(x)}\frac{xf(x)}{\bar{F}(x)}\frac{u^2(x)}{x^2}=1.
\end{align*}
\end{proof}

\subsection{Proof of Theorem \ref{conserv_6}}
\begin{proof}
If $x>1/(\xi+1)$, then $\lim_{a\rightarrow\infty}(b-a)/\mu=(\xi+1)x>1$. Thus, for sufficiently large $a$, we must have that $b-a>\mu$. By Theorem \ref{optimal value theorem}, we get that $z^*=\frac{\nu}{2}(\sigma-\mu^2)$. Hence, 
\begin{align*}
    \lim_{a\rightarrow\infty}\frac{z^*(a,b)}{\bar{F}(a)}=\lim_{a\rightarrow\infty}\frac{z^*(a,b)}{\beta}=\lim_{a\rightarrow\infty}\left(1-\frac{\nu\mu^2}{2\beta}\right)=1-\frac{1}{2(\xi+1)}.
\end{align*}

If $x<1/(\xi+1)$, then similarly, for sufficiently large $a$, we have that $b-a<\mu$, and thus $z^*=\frac{\nu}{2}[\sigma-2(b-a)\mu+(b-a)^2]$. Hence, 
\begin{align*}
    \lim_{a\rightarrow\infty}\frac{z^*(a,b)}{\bar{F}(a)}=\lim_{a\rightarrow\infty}\frac{z^*(a,b)}{\beta}=\lim_{a\rightarrow\infty}\left(1-xu(a)\frac{\nu\mu}{\beta}+x^2u^2(a)\frac{\nu}{2\beta}\right)=1-x+\frac{\xi+1}{2}x^2.
\end{align*}

Clearly, if $x=1/(\xi+1)$, then the limit is the same as the case that $x>1/(\xi+1)$. Therefore, combining with $\lim_{a\rightarrow\infty}\bar{F}(b)/\bar{F}(a)= (1+\xi x)^{-1/\xi}$, we get \eqref{Frechet error}.
\end{proof}
\subsection{Proof of Proposition \ref{conserv_7}}
\begin{proof}
For the proof of representation \eqref{von Mises representation} see Example 3.3.23 in \citet{embrechtsModellingExtremalEvents1997}. \eqref{eta_Gumbel} follows directly from the expression of $u(x)$. \eqref{nu_Gumbel} follows from Assumption \ref{additional assumption}.
\end{proof}
\subsection{Proof of Theorem \ref{conserv_8}}
\begin{proof}
Similar to the proof of Theorem \ref{Frechet results}, using the results in Proposition \ref{conserv_7}, we get that 
\begin{align*}
    \lim_{a\rightarrow\infty}\frac{z^*(a,b)}{\bar{F}(a)}=\lim_{a\rightarrow\infty}\frac{z^*(a,b)}{\beta}
    =\begin{cases}
    \frac{1}{2} & \text{if }x\geq1;\\
    1-x+\frac{1}{2}x^2 & \text{if }x<1.
    \end{cases}
\end{align*}
Moreover, it is known that $\lim_{a\rightarrow\infty}\bar{F}(b)/\bar{F}(a)=e^{-x}$ \citep{embrechtsModellingExtremalEvents1997}, and thus we get \eqref{Gumbel error}.
\end{proof}
\subsection{Proof of Theorem \ref{conserv_9}}
\begin{proof}
We have that 
\begin{align*}
\lim_{a\rightarrow\infty}\frac{q^*}{q}=\lim_{a\rightarrow\infty}\frac{(q^*-a)/a+1}{(q-a)/a+1}.
\end{align*}

First we deal with $\lim_{a\rightarrow\infty}(q^*-a)/a$. We have shown that for sufficiently large $a$, 
\begin{align*}
q^*=a+\mu-\sqrt{\mu^2-\sigma+\frac{2(1-p)}{\nu}}.
\end{align*}

Then 
{
\footnotesize
\begin{align*}
\lim_{a\rightarrow\infty}\frac{q^*-a}{a}=\lim_{a\rightarrow\infty}\frac{\mu-\sqrt{\mu^2-\sigma+\frac{2x\beta}{\nu}}}{\mu}\frac{\mu}{u(a)/(\xi+1)}\frac{u(a)}{(\xi+1)a}=\frac{\xi}{\xi+1}\left(1-\sqrt{1-2(1-x)(\xi+1)}\right).
\end{align*}}

Next we deal with $\lim_{a\rightarrow\infty}(q-a)/a$. We define $U(t)=F^{-1}(1-t^{-1}),t>0$. Then we can write $q$ and $a$ as 
\begin{align*}
q=F^{-1}(p)=U\left(\frac{1}{1-p}\right),a=F^{-1}(1-\beta)=U\left(\frac{1}{\beta}\right).
\end{align*}

It is known that \citep{embrechtsModellingExtremalEvents1997}
\begin{align*}
\lim_{s\rightarrow\infty}\frac{U(st)-U(s)}{u(U(s))}=\frac{t^{\xi}-1}{\xi}.
\end{align*}

Note that $\frac{1}{1-p}/\frac{1}{\beta}=1/x$, so we set $s=1/\beta$, $t=1/x$ and get
\begin{align*}
\lim_{a\rightarrow\infty}\frac{q-a}{u(a)}=\frac{t^{\xi}-1}{\xi}=\frac{x^{-\xi}-1}{\xi}.
\end{align*}

Hence $\lim_{a\rightarrow\infty}(q-a)/a=x^{-\xi}-1$. By combining the two parts, we get \eqref{Frechet error q}.
\end{proof}
\subsection{Proof of Theorem \ref{conserv_10}}
\begin{proof}
For sufficiently large $a$, we know that $q^*\leq a+\mu$. On the other hand, since $p\geq 1-\beta$, we know that the true quantile $q\geq a$. Thus 
\begin{align*}
1\leq \lim_{a\rightarrow\infty}\frac{q^*}{q}\leq\lim_{a\rightarrow\infty}\frac{a+\mu}{a}=1+\lim_{a\rightarrow\infty}\frac{\mu}{u(a)}\frac{u(a)}{a}.
\end{align*}

By Proposition \ref{conserv_7}, we know that $\mu/u(a)\rightarrow 1$. It is also known that $u(a)\rightarrow a=0$. Therefore, we get \eqref{Gumbel error q}.
\end{proof}
\subsection{Proof of Theorem \ref{thm:mon}: Integration by parts method}
\begin{proof}
For the first item of the theorem, without loss of generality, we assume that $\bm{g} = \big(\mathbb{I}(x\geq a),g(x)\big)^\top$ and $\mathbb{S}=\{\bm{\Gamma}\}=\{(\beta,\Gamma)^\top\}$ for some scalars $\beta,\Gamma$ where $g(x):[a,\infty)\to \mathbb{R}$ is an integrable function over $[a,\infty)$ and later we show how the result can be generalized to any region $\mathbb{S}$. We rewrite the program $\mathfrak{P}(h,\bm{g},\{\bm{\Gamma}\},\mathscr{P}^{1}_{a,\eta})$ as
\begin{subequations}\label{eq:org_mon}
\begin{align}
\max_f \ &\int_{a}^{\infty} h(x)f(x)dx \label{eq:org_mon_a}\\
\text{s.t.} \ \ &\int_{a}^\infty f(x)dx =\beta, \label{eq:org_mon_b}\\
&\int_{a}^\infty g(x)f(x)dx =\Gamma, \label{eq:org_mon_c}\\
&f(a)\leq \eta, \label{eq:org_mon_d}\\
&f(x) \ \text{exists, non-increasing and right-continuous for }  x\geq a, \label{eq:org_mon_e}\\
&f(x)\geq 0 \ \text{for} \ x\geq a. \label{eq:org_mon_f}
\end{align}
\end{subequations}
From assumptions that $h(x)$ and $g(x)$ only take nonzero values over $[a,\infty)$, we can focus on the integration starting from $a$. We consider $f(x)$ as the right derivative of the cumulative distribution function $F$. Since the set of discontinuous points of a monotone function is at most countable which do not influence the integration values over $[a,\infty)$, we assume $f(x)$ right-continuous for $x\geq a$ for all those discontinuous points. 

Denote 
\begin{align*}
\tilde{H}(x)&=\int_a^x h(u) du,\quad \quad \tilde{G}(x)=\int_a^x g(u) du.
\end{align*}
As $\tilde{H}$ is continuous, $\tilde{H}(a)=0$ by definition, and $f$ has bounded variation because of~\eqref{eq:org_mon_d},\eqref{eq:org_mon_e} and \eqref{eq:org_mon_f}, we have, using integration by parts, that \eqref{eq:org_mon_a} is equal to
\begin{align*}
\int_{a}^\infty f(x)h(x)dx&=\int_{a}^\infty f(x)d\tilde{H}(x)\\
&=f(x)\tilde{H}(x)|_a^{\infty}-\int_{a}^\infty\tilde{H}(x)df(x)\\
&=-\int_{a}^\infty\tilde{H}(x)df(x)
\end{align*}
where the third equality follows from Lemma$~\ref{lm:ps}$ presented later with $\alpha = 0$ and that $\tilde{H}(x) = O(x)$ as $x\rightarrow \infty$ since $h$ is bounded. 

\eqref{eq:org_mon_b} can be rewritten as 
\begin{align*}
\int_{a}^\infty f(x)dx&=\int_{a}^\infty f(x)d(x-a)\\
&=f(x)(x-a)|_a^{\infty}-\int_{a}^\infty(x-a)df(x)\\
&=-\int_{a}^\infty(x-a)df(x)
\end{align*}
where the third equality follows from Lemma$~\ref{lm:ps}$ again with $\alpha = 0$.

For \eqref{eq:org_mon_c}, as $\tilde{G}$ is continuous and $\tilde{G}(a)=0$ based on definition, we can write 
\begin{align*}
\int_{a}^\infty f(x)g(x)dx&=\int_{a}^\infty f(x)d\tilde{G}(x)\\
&=f(x)\tilde{G}(x)|_a^{\infty}-\int_{a}^\infty\tilde{G}(x)df(x)\\
&=-\int_{a}^\infty\tilde{G}(x)df(x)
\end{align*}
where the third equality follows from Lemma$~\ref{lm:gf}$. 

Finally, since $f(x)\rightarrow 0$ as $x\rightarrow \infty$ by Lemma~\ref{lm:ps} with $\alpha = 0$, we can write \eqref{eq:org_mon_d} as
\begin{align*}
f(a) &= -\int_a^{\infty} df(x).
\end{align*}

Therefore,~\eqref{eq:org_mon} is equivalent to 
\begin{subequations}\label{eq:stg2_mon}
\begin{align}
\max_f \ &-\int_{a}^\infty\tilde{H}(x)df(x),\label{eq:stg2_mona}\\
\text{ s.t. } \ \ &-\int_{a}^\infty(x-a)df(x)=\beta,\label{eq:stg2_monb}\\
&-\int_{a}^\infty\tilde{G}(x)df(x) =\Gamma, \label{eq:stg2_monc}\\
&-\int_a^{\infty} df(x)\leq \eta, \label{eq:stg2_mond}\\
&f(x) \ \text{exists, non-increasing and right-continuous for } x\geq a, \label{eq:stg2_mone}\\
&f(x)\geq 0 \ \text{for} \ x\geq a, \label{eq:stg2_monf}\\
&xf(x) \rightarrow 0  \ \text{as} \ x\rightarrow \infty, \label{eq:stg2_mong}\\
&\tilde{G}(x)f(x) \rightarrow 0  \ \text{as} \  x\rightarrow \infty. \label{eq:stg2_monh}
\end{align}
\end{subequations}

The equivalence of~\eqref{eq:org_mon} and~\eqref{eq:stg2_mon} can be checked as follows: Denote the feasible region of~\eqref{eq:org_mon} as $\mathbb{F}_{1}$ and the feasible region of~\eqref{eq:stg2_mon} as $\mathbb{F}_{2}$. From the above discussion, it easily follows that $\mathbb{F}_{1}\subseteq \mathbb{F}_{2}$. For the other direction, we perform integration by parts for~\eqref{eq:stg2_mona},~\eqref{eq:stg2_monb},~\eqref{eq:stg2_monc},~\eqref{eq:stg2_mond} to obtain~\eqref{eq:org_mon_a},~\eqref{eq:org_mon_b},~\eqref{eq:org_mon_c},~\eqref{eq:org_mon_d} respectively. Hence $\mathbb{F}_{2}\subseteq \mathbb{F}_{1}$. It implies the equivalence of~\eqref{eq:org_mon} and~\eqref{eq:stg2_mon}. 

Finally, let $p(x) = -\frac{f(x)}{\eta}+1$. Then~\eqref{eq:stg2_mon} can be rewritten as 
\begin{subequations}\label{eq:stg3_mon}
\begin{align}
\max_f \ &\eta\int_{a}^\infty\tilde{H}(x)dp(x) \label{eq:stg3_mon_a}\\
\text{ s.t. } \ \ &\eta\int_{a}^\infty(x-a)dp(x)=\beta, \label{eq:stg3_mon_b}\\
&\eta\int_{a}^\infty\tilde{G}(x)dp(x) =\Gamma, \label{eq:stg3_mon_c}\\
&\eta\int_a^{\infty} dp(x)\leq \eta, \label{eq:stg3_mon_d}\\
&p(x) \ \text{exists, non-decreasing and right-continuous for }  x\geq a, \label{eq:stg3_mon_e}\\
& 0 \leq p(x)\leq 1 \ \text{for} \ x\geq a, \label{eq:stg3_mon_f}\\
& p(x) \rightarrow 1 \ \text{as} \ x\rightarrow \infty, \label{eq:stg3_mon_g}\\
& p(x) = 0 \ \text{for} \ x<a, \label{eq:stg3_mon_h}\\
& (1-p(x))x \rightarrow 0 \  \text{for} \ x\rightarrow \infty, \label{eq:stg3_mon_i}\\ 
&\tilde{G}(x)(1-p(x)) \rightarrow 0  \ \text{as} \  x\rightarrow \infty. \label{eq:stg3_mon_j}
\end{align}
\end{subequations}

Since $\tilde{H}(x)=(x-a)=\tilde{G}(x)=0$ at $x=a$, one can uniquely identify, up to measure zero, a non-decreasing, right-continuous $p$ such that $\lim_{x\rightarrow\infty}p(x)=1$ and $p(x) = 0 $ for $x<a$ with a probability measure supported on $[a,\infty)$. Finally, by Lemma~\ref{lm3}, constraints~\eqref{eq:stg3_mon_i} and~\eqref{eq:stg3_mon_j} can be derived from other constraints in this optimization problem. Constraint~\eqref{eq:stg3_mon_d} is included in~\eqref{eq:stg3_mon_f}. We have the equivalent problem as follows
\begin{align*}
\max_f \ &\eta\int_{-\infty}^\infty\tilde{H}(x)dp(x) \\
\text{ s.t. } \ \ &\eta\int_{-\infty}^\infty(x-a)dp(x)=\beta, \\
&\eta\int_{-\infty}^\infty\tilde{G}(x)dp(x) =\Gamma, \\
&p(x) \ \text{exists, non-decreasing and right-continuous for }  x\in \mathbb{R}, \\
& 0 \leq p(x)\leq 1 \ \text{for} \ x\in \mathbb{R}, \\
& p(x) \rightarrow 1 \ \text{as} \ x\rightarrow \infty,\\
& p(x) = 0 \ \text{for} \ x< a.
\end{align*}
This concludes the proof the first half of the theorem.

For the second half of the theorem, we first consider $\underline{\eta}=\bar{\eta}=\eta$ and rewrite the program $\mathfrak{P}\big(h,\bm{g},\{\bm{\Gamma}\},\mathscr{P}^{2}_{a,\eta,\eta,\nu}\big)$ as 
\begin{equation}\label{eq:org_convex}
\begin{aligned}
\max_f \ &\int_{a}^{\infty} h(x)f(x)dx\\
\text{s.t.} \ \ &\int_{a}^\infty f(x)dx =\beta,\\
&\int_{a}^\infty g(x)f(x)dx =\Gamma,\\
&f(a)=f(a+)=\eta,\\
&f_+'(a)\geq -\nu,\\
&f \ \text{convex for }  x\geq a,\\
&f(x)\geq 0 \ \text{for }  x\geq a.
\end{aligned}
\end{equation}

Based on \citet{lamTailAnalysisParametric2017}, the formulation~(\ref{eq:org_convex}) is equivalent to 
\begin{subequations}\label{eq:mid_convex}
\begin{align}
\max_f &\int_{a}^{\infty} h(x)f(x)dx \label{eq:mid_convex_a}\\
\text{ s.t. } \ \ &\int_{a}^\infty f(x)dx =\beta, \label{eq:mid_convex_b}\\
&\int_{a}^\infty g(x)f(x)dx =\Gamma,\label{eq:mid_convex_h}\\
&f(a)=\eta, \label{eq:mid_convex_c}\\
&f_+'(x)\ \text{exists and is non-decreasing and right-continuous for }  x\geq a , \label{eq:mid_convex_d}\\
&-\nu\leq f_+'(x) \leq 0 \ \text{for }  x\geq a,\label{eq:mid_convex_e}\\
&f_+'(x)\rightarrow 0 \ \text{a} \ x\rightarrow \infty,\label{eq:mid_convex_f}\\
&f(x)=\int_{a}^x f_+'(t)dt+\eta \ \text{for} \ x\geq a.\label{eq:mid_convex_g}
\end{align}
\end{subequations}
Here $f(a+)$ denotes the right limit at $a$, and $f(a)=f(a+)$ means that $f$ is right-continuous at $a$, implying a continuous extrapolation at $a$.

Denote 
\begin{align*}
\tilde{H}(x)&=\int_a^x h(u) du,&& H(x)=\int_a^x \tilde{H}(u) du,\\
\tilde{G}(x)&=\int_a^x g(u) du,&& G(x)=\int_a^x \tilde{G}(u) du.
\end{align*}

Consider the objective function \eqref{eq:mid_convex_a}. Since $\tilde{H}$ and $H$ are continuous, $f$ is absolutely continuous by \eqref{eq:mid_convex_g} and $f_{+}'$ has bounded variation because of \eqref{eq:mid_convex_e} and \eqref{eq:mid_convex_f}, we have, using integration by parts,
\begin{equation}
\label{fh-1}
\begin{aligned}
\int_{a}^\infty f(x)h(x)dx&=\int_{a}^\infty f(x)d\tilde{H}(x) \\
&=f(x)\tilde{H}(x)|_a^{\infty}-\int_{a}^\infty\tilde{H}(x)f_+'(x)dx\\
&=-\int_{a}^\infty\tilde{H}(x)f_+'(x)dx\\
&=-H(x)f_+'(x)|_a^{\infty}+\int_{a}^\infty H(x)d f_+'(x) \\
&=\int_{a}^\infty H(x)d f_+'(x)
\end{aligned}
\end{equation}
where the third equality follows from Lemma~\ref{lm:ps} with $\alpha = 0$ and that $\tilde{H}(x)= O(x)$ as $x\rightarrow \infty$ since $h$ is bounded. The fifth equality follows from Lemma~\ref{lm:ps} again with $\alpha = 0$ and that $H(x)= O(x^2)$ as $x\rightarrow \infty$.

For \eqref{eq:mid_convex_b}, we can write
\begin{align*}
\int_{a}^\infty f(x)dx&=\int_{a}^\infty f(x)d(x-a)\\
&=f(x)(x-a)|_a^{\infty}-\int_{a}^\infty(x-a)f_+'(x)dx\\
&=-\int_{a}^\infty(x-a)f_+'(x)dx\\
&=-\frac{(x-a)^2}{2}f_+'(x)|_a^{\infty}+\int_{a}^\infty \frac{(x-a)^2}{2}d f_+'(x)\\
&=\int_{a}^\infty \frac{(x-a)^2}{2}d f_+'(x)
\end{align*}
by merely viewing $h\equiv 1$ in \eqref{fh-1}.

For \eqref{eq:mid_convex_c}, note that $f(x)\rightarrow 0$ as $x\rightarrow \infty$ from Lemma~\ref{lm:ps} with $\alpha = 0$, we can use integration by parts again to write
\begin{align*}
f(a)&=-\int_{a}^\infty f_+'(x)dx=-\int_{a}^\infty f_+'(x)d (x-a)\\
&=-f_+'(x)(x-a)|_a^{\infty}+\int_{a}^\infty(x-a)df_+'(x)\\
&=\int_{a}^\infty(x-a)df_+'(x).
\end{align*}

For \eqref{eq:mid_convex_h}, since $\tilde{G}$ and $G(x)$ are continuous, we have, using integration by parts, 
\begin{align*}
\int_{a}^\infty f(x)g(x)dx&=\int_{a}^\infty f(x)d\tilde{G}(x)\\
&=f(x)\tilde{G}(x)|_a^{\infty}-\int_{a}^\infty\tilde{G}(x)f_+'(x)dx\\
&=-\int_{a}^\infty\tilde{G}(x)f_+'(x)dx\\
&=-G(x)f_+'(x)|_a^{\infty}+\int_{a}^\infty G(x)d f_+'(x)\\
&=\int_{a}^\infty G(x)d f_+'(x)
\end{align*}
where the third equality and fifth equality follow from Lemma~\ref{lm:gf}. Therefore, \eqref{eq:mid_convex} is equivalent to 
\begin{subequations}\label{eq:mid2_convex}
\begin{align}
\max_f &\int_{a}^\infty H(x)d f_+'(x) \label{eq:mid2_convex_a}\\
\text{ s.t. } \ \ &\int_{a}^\infty \frac{(x-a)^2}{2}d f_+'(x) =\beta, \label{eq:mid2_convex_b}\\
&\int_{a}^\infty(x-a)df_+'(x)=\eta, \label{eq:mid2_convex_c}\\
&\int_{a}^\infty G(x)d f_+'(x) =\Gamma, \label{eq:mid2_convex_h}\\
&f_+'(x)\ \text{exists and is non-decreasing and right-continuous for }  x\geq a , \label{eq:mid2_convex_d}\\
&-\nu\leq f_+'(x) \leq 0 \ \text{for }  x\geq a,\label{eq:mid2_convex_e}\\
&f_+'(x)\rightarrow 0 \ \text{as} \ x\rightarrow \infty,\label{eq:mid2_convex_f}\\
&f(x)=\int_{a}^x f_+'(t)dt+\eta \ \text{for} \ x\geq a,\label{eq:mid2_convex_g}\\
&xf(x),x^2f_+'(x)\rightarrow 0 \  \text{as} \ x\rightarrow \infty, \label{eq:mid2_convex_i}\\
&\tilde{G}(x)f(x),G(x)f_+'(x)\rightarrow 0 \  \text{as} \ x\rightarrow \infty \label{eq:mid2_convex_j}
\end{align}
\end{subequations}
and the constraint \eqref{eq:mid2_convex_g} states that $f$ can be recovered from $f(x)=\int_{a}^x f_+'(t)dt+\eta$. Note that this definition of $f$ has a right derivative coinciding with the obtained $f_+'(x)$.

The equivalence of~\eqref{eq:mid_convex} and~\eqref{eq:mid2_convex} can be checked as follows: 
Denote the feasible region of~\eqref{eq:mid_convex} as $\mathbb{F}_{1}$ and the feasible region of~\eqref{eq:mid2_convex} as $\mathbb{F}_{2}$. From the above discussion, it easily follows that $\mathbb{F}_{1}\subseteq \mathbb{F}_{2}$. For the other direction, we perform integration by parts for~\eqref{eq:mid2_convex_a},~\eqref{eq:mid2_convex_b},~\eqref{eq:mid2_convex_c},~\eqref{eq:mid2_convex_h} to obtain~\eqref{eq:mid_convex_a},~\eqref{eq:mid_convex_b},~\eqref{eq:mid_convex_c},~\eqref{eq:mid_convex_h} respectively. Hence $\mathbb{F}_{2}\subseteq \mathbb{F}_{1}$. It implies the equivalence of~\eqref{eq:mid_convex} and~\eqref{eq:mid2_convex}.

We now show that \eqref{eq:mid2_convex_i} and \eqref{eq:mid2_convex_j} are redundant. For $f_{+}'(x)$ satisfying \eqref{eq:mid2_convex_d}, \eqref{eq:mid2_convex_e} and \eqref{eq:mid2_convex_f}, we know from \eqref{eq:mid2_convex_b} that 
\begin{align*}
\int_{x}^\infty \frac{(t-a)^2}{2}d f_+'(t)\rightarrow 0 \ \text{as} \ x\rightarrow \infty.
\end{align*}
Note that with the non-decreasing property of $f_{+}'(x)$ via \eqref{eq:mid2_convex_d}, we have the following inequality
\begin{align*}
\int_{x}^\infty \frac{(t-a)^2}{2}d f_+'(t) \geq \frac{(x-a)^2}{2}\int_{x}^\infty d f_+'(t) \geq 0.
\end{align*}
Hence, with $f_+'(x)\rightarrow 0$ as $x\rightarrow \infty$ via \eqref{eq:mid2_convex_f}, we have
\begin{align}\label{x^2f'(x)}
\frac{(x-a)^2}{2}f_+'(x)\rightarrow 0  \ \text{as} \ x\rightarrow \infty.
\end{align} 

For \eqref{eq:mid2_convex_c}, we can write 
\begin{align*}
\eta&=\int_{a}^\infty(x-a)df_+'(x)=(x-a)f_+'(x)|_{a}^{\infty}-\int_{a}^\infty f_+'(x)dx=\int_{a}^\infty -f_+'(x)dx
\end{align*}
where the third equality follows from \eqref{x^2f'(x)}. It is easy to conclude that 
\begin{align}\label{f(infty)}
\int_{x}^\infty -f_+'(t)dt \rightarrow 0 \ \text{as} \ x\rightarrow \infty.
\end{align}

From \eqref{eq:mid2_convex_g} we can easily derive from \eqref{eq:mid2_convex_e} and \eqref{f(infty)} that
\begin{align}
f(x) &= \int_{a}^x f_+'(t)dt+\eta= \int_{a}^x f_+'(t)dt+\int_{a}^\infty -f_+'(x)dx= \int_{x}^\infty -f_+'(t)dt\geq 0\label{def_f}\\
\implies f(x)&\rightarrow 0 \ \text{as} \ x\rightarrow \infty.\notag
\end{align}

Then from \eqref{eq:mid2_convex_b}, we can write
\begin{align*}
\beta&=\int_{a}^\infty \frac{(x-a)^2}{2}d f_+'(x)\\
&=f_+'(x)\frac{(x-a)^2}{2}|_{a}^{\infty}-\int_{a}^\infty f_+'(x)(x-a) dx\\
&= \int_{a}^\infty -f_+'(x)(x-a) dx
\end{align*}
where the third equality follows from \eqref{x^2f'(x)}. Then it is easy to conclude that 
\begin{align*}
\int_{x}^\infty-f_+'(t)(t-a) dt \rightarrow 0 \ \text{as} \ x\rightarrow \infty.
\end{align*}
Note that with the non-positive property of $f_{+}'(x)$ via \eqref{eq:mid2_convex_e}, we have the following inequality
\begin{align*}
\int_{x}^\infty-f_+'(t)(t-a) dt \geq (x-a)\int_{x}^\infty-f_+'(t) dt \geq 0.
\end{align*}
Hence, with equation of $f(x)$ in \eqref{def_f}, we have
\begin{align*}
(x-a)f(x)\rightarrow 0  \ \text{as} \ x\rightarrow \infty
\end{align*}
which concludes the redundancy of constraint \eqref{eq:mid2_convex_i}.

Now we show the redundancy of constraint \eqref{eq:mid2_convex_j}. We first consider $g(x)$ is a non-negative function. Then $G(x)$ and $\tilde{G}(x)$ are non-decreasing non-negative continuous functions. 

From \eqref{eq:mid2_convex_h}, we have 
\begin{align*}
\int_{x}^\infty G(t)d f_+'(t) \rightarrow 0 \ \text{as} \ x\rightarrow \infty.
\end{align*}
Note that with the non-decreasing property of $f_{+}'(x)$ via \eqref{eq:mid2_convex_d} and $G(x)$, we have the following inequality
\begin{align*}
\int_{x}^\infty G(t)d f_+'(t) \geq G(x)\int_{x}^\infty d f_+'(t) \geq 0.
\end{align*}
Hence, with $f_+'(x)\rightarrow 0$ as $x\rightarrow \infty$ via \eqref{eq:mid2_convex_f}, we have
\begin{align}\label{G(x)f'(x)}
G(x)f_+'(x)\rightarrow 0  \ \text{as} \ x\rightarrow \infty.
\end{align} 

For \eqref{eq:mid2_convex_h}, we can write
\begin{align*}
\Gamma &= \int_{a}^\infty G(x)d f_+'(x) \\
&=f_+'(x)G(x)|_a^{\infty}-\int_{a}^\infty f_+'(x)\tilde{G}(x)dx\\
&=\int_{a}^\infty -f_+'(x)\tilde{G}(x)dx
\end{align*}
where the third equality follows from \eqref{G(x)f'(x)}. Then it leads to 
\begin{align*}
\int_{x}^\infty-f_+'(t)\tilde{G}(t) dt \rightarrow 0 \ \text{as} \ x\rightarrow \infty.
\end{align*}
Note that with the non-positive property of $f_{+}'(x)$ via \eqref{eq:mid2_convex_e} and the non-decreasing property of $\tilde{G}(x)$, we have the following inequality
\begin{align*}
\int_{x}^\infty-f_+'(t)\tilde{G}(t) dt \geq \tilde{G}(x)\int_{x}^\infty-f_+'(t) dt \geq 0.
\end{align*}
Hence, with equation of $f(x)$ in \eqref{def_f}, we have
\begin{align*}
\tilde{G}(x)f(x)\rightarrow 0  \ \text{as} \ x\rightarrow \infty.
\end{align*}

Now we consider the case when $g(x)$ is a bounded-below function for $x\geq a$ and the value of $g(x)$ can be negative. We consider $\tilde{g}(x)=g(x)+|\min_{x\geq a} g(x)|$. Clearly, $\tilde{g}(x)$ and $|\min_{x\geq a} g(x)|$ are non-negative functions, which implies we can use the results above. The results for $g(x)$ hence follow by linearity of integration and sum law of limits. A detailed exposition is as follows: 

Given 
\begin{align*}
\tilde{G}(x) & = \int_a^x \tilde{g}(t)dt + \int_a^x -|\min_{x\geq a} g(x)|dt\\
&=\int_a^x \tilde{g}(t)dt  -(x-a)|\min_{x\geq a} g(x)|,\\
G(x) & = \int_a^x\int_a^v \tilde{g}(u)dudv -\frac{(x-a)^2}{2}|\min_{x\geq a} g(x)|,
\end{align*}
we have 
\begin{align*}
&\lim_{x\rightarrow\infty} \int_a^x \tilde{g}(t)dt f(x) \rightarrow 0,\ \lim_{x\rightarrow\infty} -(x-a)|\min_{x\geq a} g(x)| f(x) \rightarrow 0\\
\implies& \  \lim_{x\rightarrow\infty} \tilde{G}(x) f(x) \rightarrow 0,\\
& \lim_{x\rightarrow\infty} \int_a^x\int_a^v \tilde{g}(u)dudv f_+'(x) \rightarrow 0,\ \lim_{x\rightarrow\infty} -\frac{(x-a)^2}{2}|\min_{x\geq a} g(x)| f_+'(x) \rightarrow 0\\
\implies& \  \lim_{x\rightarrow\infty} G(x) f(x) \rightarrow 0,
\end{align*}
which concludes the redundancy of constraint \eqref{eq:mid2_convex_j}.

Finally, let $p(x)=\frac{f_+'(x)}{\nu}+1$. Then \eqref{eq:mid2_convex} is equivalent to 

\begin{equation}\label{eq:mid3_convex}
\begin{aligned}
\max_p &\int_{a}^\infty \nu H(x)d p(x)\\
\text{ s.t. } \ \ &\int_{a}^\infty \nu\frac{(x-a)^2}{2}d p(x) =\beta,\\
&\int_{a}^\infty\nu(x-a)dp(x)=\eta,\\
&p(x)\ \text{exists and is non-decreasing and right-continuous for }  x\geq a, \\
& 0 \leq p(x) \leq 1 \ \text{for }  x\geq a,\\
&  p(x) \rightarrow 1 \ \text{for }  x\rightarrow \infty, \\
&\int_{a}^\infty \nu G(x) dp(x) =\Gamma
\end{aligned}
\end{equation}
or equivalently
\begin{equation}\label{eq:mid4_convex}
\begin{aligned}
\max_p &\int_{-\infty}^\infty \nu H(x)d p(x) \\
\text{ s.t. } \ \ &\int_{-\infty}^\infty \nu\frac{(x-a)^2}{2}d p(x) =\beta, \\
&\int_{-\infty}^\infty\nu(x-a)dp(x)=\eta,\\
&p(x)\ \text{exists and is non-decreasing and right-continuous for }  x\in \mathbb{R},\\
& 0 \leq p(x) \leq 1 \ \text{for }  x\in \mathbb{R},\\
&  p(x) \rightarrow 1 \ \text{for }  x\rightarrow \infty, \\
& p(x) = 0 \ \text{for } x< a, \\ 
&\int_{-\infty}^\infty \nu G(x) dp(x) =\Gamma.
\end{aligned}
\end{equation}

Since $H(x)=(x-a)^2=(x-a)=G(x)=0$ at $x=a$, one can uniquely identify, up to measure zero, a non-decreasing, right-continuous $p$ such that $\lim_{x\rightarrow\infty}p(x)=1$ and $p(x) = 0 $ for $x<a$ with a probability measure supported on $[a,\infty)$. Hence \eqref{eq:mid4_convex} is equivalent to \eqref{eq:mid3_convex}. 

When $\underline{\eta}\neq\bar{\eta}$, one can replace the equality constraint $\int_{-\infty}^\infty\nu(x-a)dp(x)=\eta$ in \eqref{eq:mid4_convex} by $\underline{\eta}\leq\int_{-\infty}^\infty\nu(x-a)dp(x)\leq \bar{\eta}$ which forms a rectangular constraint. Note that the above derivation holds true for any choice of $\beta$ and $\Gamma$ so that the result still holds when replacing $\{\bm{\Gamma}\}=\{(\beta,\Gamma)^\top\}$ with general $\mathbb{S}$. This concludes the result.
\end{proof}
\subsection{Proof of Theorem \ref{thm:mon}: Choquet Method}
\begin{proof}
Without loss of generality, we assume that $\bm{g} = \big(\mathbb{I}(x\geq a),g(x)\big)^\top$ and $\mathbb{S}=\{\bm{\Gamma}\}=\{(\beta,\Gamma)^\top\}$ for some scalars $\beta,\Gamma$ where $g(x):[a,\infty)\to \mathbb{R}$ is an integrable function over $[a,\infty)$ and then we show how the result can be generalized to any region $\mathbb{S}$. Notice that for any feasible solution $f$ in program (\ref{eq:org_mon}), we can utilize Theorem \ref{Choquet_mono} and obtain 
\begin{align*}
\int_a^\infty h(x)f(x)dx&=\beta \int_a^\infty\int_{0+}^\infty h(x)\frac{\mathbb{I}(a\leq x<a+z)}{z}dQ(z)dx\\
&=\beta \int_{0+}^\infty\int_a^\infty h(x)\frac{\mathbb{I}(a\leq x<a+z)}{z}dxdQ(z)\\
&=\beta \int_{0+}^\infty\frac{\int_a^{a+z} h(x)dx}{z}dQ(z)=\beta \int_{0+}^\infty\frac{\tilde{H}(z+a)}{z}dQ(z),\\
\int_a^\infty g(x)f(x)dx&=\beta \int_{0+}^\infty\frac{\tilde{G}(z+a)}{z}dQ(z),\\
\int_a^\infty f(x)dx & = \beta\int_a^\infty\int_{0+}^\infty \frac{\mathbb{I}(a\leq x<a+z)}{z}dQ(z)dx\\
&=\beta\int_{0+}^\infty\int_a^\infty \frac{\mathbb{I}(a\leq x<a+z)}{z}dxdQ(z)\\
&=\beta \int_{0+}^\infty dQ(z)=\beta,\\
f(a) &= \int_{0+}^\infty \frac{\mathbb{I}(a\leq x<a+z)}{z}dQ(z)=\int_{0+}^\infty \frac{1}{z}dQ(z).
\end{align*}

Therefore, program (\ref{eq:org_mon}) is equivalent to the following program, 
\begin{align*}
\max_{Z\sim Q} \ &\beta\mathbb{E}_Q\Big[\frac{\tilde{H}(Z+a)}{Z}\Big] \\
\text{ s.t. } \ \ &\beta\mathbb{E}_Q [1] =\beta,\\
&\mathbb{E}_Q\Big[\frac{\tilde{G}(Z+a)}{Z}\Big] =\frac{\Gamma}{\beta}, \\
&\mathbb{E}_Q\big[\frac{1}{Z}\big]\leq \frac{\eta}{\beta}, \\
&Q\in \mathscr{M}^+(0,\infty)
\end{align*}
where $\mathscr{M}^+(0,\infty)$ is the space of non-negative bounded measures on $(0, \infty)$.

Since $\mathbb{E}_Q[\frac{1}{Z}]<\infty$, we can define a distribution function $\tilde{Q}\in \mathscr{M}^+(0,\infty)$ absolutely continuous with respect to $Q$ via $\frac{d\tilde{Q}}{dQ}(z)=\frac{\beta}{\eta}\frac{1}{z}$. We convert the decision variable from $Q$ to $\tilde{Q}$. Furthermore, the feasible set can be further restricted to $\mathscr{P}[0,\infty)$. That is because the functions $\tilde{H}(z+a),\tilde{G}(z+a),z$ are by construction equal to $0$ at $z=0$. Hence we can always add an arbitrary mass at $0$ to reach the upper bound of $\mathbb{E}_{\tilde{Q}}[1]$. This in turn deduces that upon proper normalization of the measure we can impose the constraint that $\mathbb{E}_{\tilde{Q}}[1]=1$. Finally, we let $x=z+a$ and obtain that the the following program is equivalent to program (\ref{eq:org_mon}):
\begin{align*}
\max_{X\sim\tilde{Q}} \ \  &\eta\mathbb{E}_{\tilde{Q}}[\tilde{H}(X)] \\
\text{ s.t. } \ \ &\eta\mathbb{E}_{\tilde{Q}} [X-a] =\beta,\\
&\eta\mathbb{E}_{\tilde{Q}}[\tilde{G}(X)] =\Gamma,\\
&{\tilde{Q}}\in \mathscr{P}[a,\infty).
\end{align*}

For the second part of the theorem, we first consider $\underline{\eta}=\bar{\eta}=\eta$. For any feasible solution $f$ in program (\ref{eq:org_convex}), Lemma 1 of \citet{lamTailAnalysisParametric2017} implies that $f$ is non-increasing. Then based on Lemma 5.1 of \citet{popescuSemidefiniteProgrammingApproach2005}, $f(x),\forall x\geq a$ can be written as a generalized mixture of right $a$-triangular density, i.e.,
\begin{align*}
f(x)=\beta \int_0^\infty \frac{(a+z-x)^+}{z^2/2}dQ(z)
\end{align*}
where $Q$ is a probability measure on $[0,\infty)$. Since $f$ exists everywhere for $x\geq a$, we have $Q(z=0)=0$.

Then since $\tilde{H}(x),\tilde{G}(x),(a+z-x)$ are all absolutely continuous, we have, using integration by parts, 
\begin{align*}
\int_{a}^{\infty} h(x)f(x)dx &= \beta\int_{a}^{\infty} h(x)\int_0^\infty \frac{(a+z-x)^+}{z^2/2}dQ(z)dx \\
&=\beta \int_0^\infty \frac{\int_{a}^{\infty}h(x)(a+z-x)^+dx}{z^2/2}dQ(z)\\
&=\beta \int_0^\infty \frac{\int_{a}^{a+z}h(x)(a+z-x)dx}{z^2/2}dQ(z)\\
&=\beta \int_0^\infty \frac{\tilde{H}(x)(a+z-x)|_{x=a}^{x=a+z}+\int_{a}^{a+z}\tilde{H}(x)dx}{z^2/2}dQ(z)\\
&=\beta \int_0^\infty \frac{\int_{a}^{a+z}\tilde{H}(x)dx}{z^2/2}dQ(z)=\beta \int_{0+}^\infty \frac{H(a+z)}{z^2/2}dQ(z),\\
\int_{a}^{\infty} g(x)f(x)dx &= \beta\int_{0+}^\infty \frac{G(a+z)}{z^2/2}dQ(z),\\
\int_{a}^{\infty} f(x)dx &= \beta\int_{a}^{\infty} \int_0^\infty \frac{(a+z-x)^+}{z^2/2}dQ(z)dx =\beta \int_0^\infty \frac{\int_{a}^{\infty}(a+z-x)^+dx}{z^2/2}dQ(z)\\
&=\beta \int_0^\infty \frac{\int_{a}^{a+z}x(a+z-x)dx}{z^2/2}dQ(z)\\
&=\beta \int_0^\infty \frac{(x-a)(a+z-x)|_{x=a}^{x=a+z}+\int_{a}^{a+z}(x-a)dx}{z^2/2}dQ(z)\\
&=\beta \int_0^\infty \frac{\int_{a}^{a+z}(x-a)dx}{z^2/2}dQ(z)=\beta \int_{0+}^\infty \frac{z^2/2}{z^2/2}dQ(z)=\beta,\\
f(a)&=\beta \int_{0+}^\infty \frac{z}{z^2/2}dQ(z),\\
-f_+'(a)&=\lim\limits_{\delta_n \downarrow 0}\frac{-f(a+\delta_n)+f(a)}{\delta_n}=\beta\lim\limits_{\delta_n \downarrow 0}  \int_0^\infty \frac{z-(z-\delta_n)^+}{\delta_n z^2/2}dQ(z)\\
&=\beta  \lim\limits_{\delta_n \downarrow 0}\int_{0+}^\infty \frac{z-(z-\delta_n)^+}{\delta_n z^2/2}dQ(z)=\beta\int_{0+}^\infty\frac{1}{z^2/2}dQ(z).
\end{align*}

Since $0\leq \frac{z-(z-\delta_n)^+}{\delta_nz^2/2}\leq \frac{z-(z-\delta_{n+1})^+}{\delta_{n+1}z^2/2}$, the exchange of limit and integration is followed by monotone convergence theorem. Therefore, program (\ref{eq:org_convex}) is equivalent to the following program, 
\begin{align*}
\max_{Z\sim Q} \ &\beta\mathbb{E}_Q\Big[\frac{H(Z+a)}{Z^2/2}\Big] \\
\text{ s.t. } \ \ &\beta\mathbb{E}_Q [1] =\beta,\\
&\beta\mathbb{E}_Q\Big[\frac{G(Z+a)}{Z^2/2}\Big] =\Gamma, \\
&\beta\mathbb{E}_Q\big[\frac{Z}{Z^2/2}\big]=\eta, \\
&\beta\mathbb{E}_Q[\frac{1}{Z^2/2}]\leq \nu,\\
&Q\in \mathscr{M}^+(0,\infty).
\end{align*}

Since $\mathbb{E}_Q[\frac{1}{Z^2/2}]<\infty$, we can define a distribution function $\tilde{Q}\in \mathscr{M}^+(0,\infty)$ absolutely continuous with respect to $Q$ via $\frac{d\tilde{Q}}{dQ}(z)=\frac{\beta}{\nu}\frac{1}{z^2/2}$. We convert the decision variable from $Q$ to $\tilde{Q}$. Furthermore, the feasible set can be further restricted to $\mathscr{P}[0,\infty)$. That is because the functions $H(z+a),G(z+a),z,z^2/2$ are by construction equal to $0$ at $z=0$. Hence we can always add an arbitrary mass at $0$ to reach the upper bound of $\mathbb{E}_{\tilde{Q}}[1]$. This in turn deduces that upon proper normalization of the measure we can impose the constraint that $\mathbb{E}_{\tilde{Q}}[1]=1$. Finally, we let $x=z+a$ and obtain that the following program is equivalent to program (\ref{eq:org_convex}):
\begin{align*}
\max_{X\sim \tilde{Q}} \ \  &\nu\mathbb{E}_{\tilde{Q}}[H(X)] \\
\text{ s.t. } \ \ &\nu\mathbb{E}_{\tilde{Q}} \Big[\frac{(X-a)^2}{2}\Big] =\beta,\\
&\nu\mathbb{E}_{\tilde{Q}}[G(X)] =\Gamma,\\
&\nu\mathbb{E}_{\tilde{Q}}[X-a]=\eta,\\
&{\tilde{Q}}\in \mathscr{P}[a,\infty).
\end{align*}

When $\underline{\eta}\neq\bar{\eta}$, one can replace the equality constraint  $\nu\mathbb{E}_{\tilde{Q}}[X-a]=\eta$ by $\underline{\eta} \leq \nu\mathbb{E}_{\tilde{Q}}[X-a]\leq \bar{\eta}$ which gives a rectangular constraint. Note that the above derivation holds true for any choice of $\beta$ and $\Gamma$ so that the result still holds when replacing $\{\bm{\Gamma}\}=\{(\beta,\Gamma)^\top\}$ with general $\mathbb{S}$. It concludes the proof.
\end{proof}
\subsection{Proof of Theorem \ref{p2d}}
\begin{proof}
Without loss of generality, we consider $r=1$. For positive $r$ other than $1$, one could replace $\bm{\Sigma}$ by $r\bm{\Sigma}$ and the rest of the derivation is the same. We can consider $\mathbb{S}$ to be pure hyper-ellipsoid or pure hyper-rectangle separately and the final result is a combination of both. For the hyper-ellipsoid scenario, we rewrite $\mathfrak{M}_a\big( H,\bm{G},\mathbb{S}_{E}(\bm{\mu},\bm{\Sigma},1)\big)$, with given constant values $\bm{\mu},\bm{\Sigma},\bm{\bar{\mu}},\bm{\underline{\mu}},a$ as 
\begin{equation}
\label{eq:trans_one_primal_2}
\begin{aligned}
\max_{P} \ \ \  & \mathbb{E}_P[H(X)]\\
\text{ s.t. } \ \ 
&\begin{bmatrix}
\bm{\Sigma}^{-1/2}\big(\mathbb{E}_P[\bm{G}(X)]-\bm{\mu}\big)\\
1
\end{bmatrix}\in \mathbb{K},\\
&P\in \mathscr{P}[a,\infty)
\end{aligned}
\end{equation}
where $\bm{\Sigma}^{-1/2}$ is the lower-triangular square-root matrix of $\bm{\Sigma}^{-1}$ obtained by Cholesky decomposition, and $\mathbb{K}$ denotes the second order cone $\{(x_1,\dots,x_d,x_{d+1})^\top \in \mathbb{R}^{d+1}:x_{d+1}\geq \sqrt{x_1^2+\dots+x_d^2}\}$.

The dual problem is derived as follows: 
The first step is to build a Lagrangian for this program \eqref{eq:trans_one_primal_2}. Notice that with variables $\lambda\geq0,\kappa, P\in \mathscr{M}^+[a,\infty)$, the following holds
\begin{align*}
& \mathbb{E}_P[H(X)]+\lambda\big(1-\|\bm{\Sigma}^{-1/2}( \mathbb{E}_P[\bm{G}(X)]-\bm{\mu})\|_2\big)+\kappa(1-\mathbb{E}_P[1])\\
=& \mathbb{E}_P[H(X)]+\lambda\big(1-\max_{\|\bm{u}\|_2\leq 1}\bm{u}^\top \bm{\Sigma}^{-1/2}( \mathbb{E}_P[\bm{G}(X)]-\bm{\mu})\big)+\kappa(1-\mathbb{E}_P[1])\\
=& \mathbb{E}_P[H(X)]+\lambda-\max_{\|\bm{u}\|_2\leq \lambda}\bm{u}^\top\bm{\Sigma}^{-1/2}( \mathbb{E}_P[\bm{G}(X)]-\bm{\mu})+\kappa(1-\mathbb{E}_P[1])\\
=&\min_{\|\bm{u}\|_2\leq \lambda} \lambda+\kappa+\bm{u}^\top\bm{\Sigma}^{-1/2}\bm{\mu}+\mathbb{E}_P[ H(X)-\bm{u}^\top \bm{\Sigma}^{-1/2}\bm{G}(X)-\kappa]
\end{align*}
where $\|\cdot\|_{2}$ denotes the Euclidean norm in $\mathbb{R}^{p}$.

We can write a Lagrangian as follows:
\begin{align*}
L(P,\lambda,\bm{u},\kappa) = \lambda+\kappa+\bm{u}^\top\bm{\Sigma}^{-1/2}\bm{\mu}+\mathbb{E}_P[ H(X)-\bm{u}^\top \bm{\Sigma}^{-1/2}\bm{G}(X)-\kappa].
\end{align*}

One can check that $\mathfrak{P}_a\big( H,\bm{G},\mathbb{S}_{E}(\bm{\mu},\bm{\Sigma})\big)$ can be derived from 
\begin{align*}
\max\limits_{P\in \mathscr{M}^+[a,\infty)}\min\limits_{\|\bm{u}\|_2\leq \lambda,\kappa}L(P,\lambda,\bm{u},\kappa)
\end{align*}
and the corresponding dual problem can be derived from
\begin{align*}
\min\limits_{\kappa,\|\bm{u}\|_2\leq \lambda}\max\limits_{P\in \mathscr{M}^+[a,\infty)}L(P,\lambda,\bm{u},\kappa).
\end{align*}

We then write the dual problem as follows: 
\begin{align*}
\min_{\kappa, \bm{u},\lambda} \ \ \  &\lambda+\kappa+\bm{u}^\top\bm{\Sigma}^{-1/2}\bm{\mu} \\
\text{ s.t. } \ \  &- H(x)+\bm{u}^\top \bm{\Sigma}^{-1/2}\bm{G}(x)+\kappa\geq 0, x\in [a,\infty),\\
&\begin{bmatrix}
\bm{u}\\
\lambda
\end{bmatrix}\in \mathbb{K}.
\end{align*}

Then for the hyper-rectangle scenario, we rewrite $\mathfrak{M}_a\big( H, \bm{G},\mathbb{S}_{R}(\underline{\bm{\mu}},\bar{\bm{\mu}})\big)$ as 
\begin{align*}
\max_{X\sim P} \ \ \  &\mathbb{E}_P[H(X)]\\
\text{ s.t. } \ \ &\mathbb{E}_P[\bm{G}(X)] \leq \bar{\bm{\mu}},\\
&-\mathbb{E}_P[\bm{G}(X)]\leq -\underline{\bm{\mu}},\\
&P\in \mathscr{P}[a,\infty).
\end{align*}

Consider the following Lagrangian, with variables $P,\bm{\lambda_1},\bm{\lambda_2},\kappa$
\begin{align*}
L(P,\bm{\lambda_1},\bm{\lambda_2},\kappa)&=\mathbb{E}_P[H(X)]+\bm{\lambda_1}^\top(\bar{\bm{\mu}}-\mathbb{E}_P[\bm{G}(X)])+\bm{\lambda_2}^\top(-\underline{\bm{\mu}}+\mathbb{E}_P[\bm{G}(X)])+\kappa(1-\mathbb{E}_P[1])\\
&=\bm{\lambda_1}^\top \bar{\bm{\mu}}-\bm{\lambda_2}^\top \underline{\bm{\mu}}+\kappa+\mathbb{E}_P[-\bm{\lambda_1}^\top \bm{G}(X)+\bm{\lambda_2}^\top \bm{G}(X)-\kappa+ H(X)].
\end{align*}

Therefore, we have the dual problem as follows: 
\begin{align*}
\min_{\bm{\lambda_1}\geq0,\bm{\lambda_2}\geq0,\kappa} \ \ \  &\bm{\lambda_1}^\top \bar{\bm{\mu}}-\bm{\lambda_2}^\top \underline{\bm{\mu}}+\kappa\\
\text{ s.t. } \ \  &- H(x)+(\bm{\lambda_1} -\bm{\lambda_2})^\top \bm{G}(x)+\kappa\geq 0,x\in [a,\infty).
\end{align*}

It concludes the result.
\end{proof}
\subsection{Proof of Corollary \ref{coro_4cases}}
\begin{proof}
Using Theorem \ref{thm:mon}, we can easily get moment-constrained transformations as follows: 

$\mathfrak{P}(h,\bm{g},\mathbb{S}_{E}(\bm{\mu},\bm{\Sigma},r),\mathscr{P}^{1}_{a,\eta})$ $\to$ $\mathfrak{M}_a(\eta\tilde{H},\eta\tilde{\bm{G}},\mathbb{S}_{E}(\bm{\mu},\bm{\Sigma},r)),$ 

$\mathfrak{P}(h,\bm{g},\mathbb{S}_{R}(\underline{\bm{\mu}},\bar{\bm{\mu}}),\mathscr{P}^{1}_{a,\eta})$ $\to$ $\mathfrak{M}_a(\eta\tilde{H},\eta\tilde{\bm{G}},\mathbb{S}_{R}(\underline{\bm{\mu}},\bar{\bm{\mu}})),$

$\mathfrak{P}(h,\bm{g},\mathbb{S}_{E}(\bm{\mu},\bm{\Sigma},r),\mathscr{P}^{2}_{a,\underline{\eta},\bar{\eta},\nu})$ $\to$ $\mathfrak{M}_a(\nu H,\nu(x-a,\bm{G}^\top)^\top,\mathbb{S}_R(\underline{\eta},\bar{\eta})\times \mathbb{S}_{E}(\bm{\mu},\bm{\Sigma},r)),$

$\mathfrak{P}(h,\bm{g},\mathbb{S}_{R}(\underline{\bm{\mu}},\bar{\bm{\mu}}),\mathscr{P}^{2}_{a,\underline{\eta},\bar{\eta},\nu})$ $\to$ $\mathfrak{M  }_a(\nu H,\nu(x-a,\bm{G}^\top)^\top,\mathbb{S}_R(\underline{\eta},\bar{\eta})\times \mathbb{S}_{R}(\underline{\bm{\mu}},\bar{\bm{\mu}})).$

Then using Theorem \ref{p2d}, we obtain the results in the corollaries. 
\end{proof}
\subsection{Proof of Theorem \ref{d2sdp}}
\begin{proof}
Applying Proposition 3.1 from \citet{bertsimasOptimalInequalitiesProbability2005}, the constraints \eqref{polying} are equivalent to that there exists two positive semi-definite matrices $\bm{V}=[v_{ij}]_{i,j=0,\cdots,k}$ and $\bm{W}=[w_{ij}]_{i,j=0,\cdots,k}$ such that 
\begin{equation}\label{mix_F_d_sdp}
\begin{aligned}
\bm{V},\bm{W}&\succeq 0,\\
\sum\limits_{i,j:i+j=2l-1} v_{ij} &=0, \ \ l=1,\cdots,k,\\
\sum\limits_{i,j:i+j=2l} v_{ij}&=\sum\limits_{r=l}^k \binom{r}{l}y_{1r} a^{r-l},\ \ l=0,\cdots,k,\\
\sum\limits_{i,j:i+j=2l-1} w_{ij} &=0, \ \ l=1,\cdots,k,\\
\sum\limits_{i,j:i+j=2l} w_{ij}&=\sum\limits_{m=0}^l \sum\limits_{r=m}^{k+m-l} \binom{r}{m}\binom{k-r}{l-m}y_{2r} b^{r-m}c^{m}, \ \ l=0,\cdots,k.
\end{aligned}
\end{equation}
Replace \eqref{mix_F_d} with \eqref{mix_F_d_sdp} and obtain the result. 
\end{proof}


\subsection{Proof of Theorem \ref{thm:first_order_compactness}}
\begin{proof}
The proof is inspired from Theorem 2 of \citet{gobernaSensitivityAnalysisLinear2007}. For the rest of the poof, for any set $\mathbb{S}$, we denote $\text{int}(\mathbb{S})$ the interior of $\mathbb{S}$, $|\mathbb{S}|$ the cardinality, $\text{cone}(\mathbb{S}):=\{\sum\limits_{i=1}^m \lambda_i\bm{x}_i:\bm{x}_i\in \mathbb{S}, \lambda_i\in \mathbb{R}^+, m\in \mathbb{N}\}$ the conical hull. The set of all natural number is $\mathbb{N}$. $\mathbb{F}^*(\mathfrak{P})$ denotes the optimal region for program $\mathfrak{P}$.  We first review some definitions of semi-infinite linear programming and relevant results that are used later. Further information are found in \citet{gobernaFarkasMinkowskiSystemsSemiinfinite1981, miguela.gobernaLinearSemiinfiniteOptimization2000}.
 \comment{For any program $\mathfrak{P}$, we call the relation $\bm{a}^\top \bm{x}\geq \beta$ with the associated vector $(\bm{a}^\top,\beta)^\top$ a linear consequence relation of the constraints system in $\mathfrak{P}$ if every feasible point in $\mathbb{F}(\mathfrak{P})$ satisfies the relation.According to Theorem 2.1 of \citet{gobernaFarkasMinkowskiSystemsSemiinfinite1981}, a relation $\bm{a}^\top \bm{y}\geq b$ is a linear consequence relation of feasible linear constraint system represented as $\{\bm{a}(x)^\top \bm{y}\geq b(x),x\in \mathbb{S}\}$ if and only if $(\bm{a}^\top,b)^\top$ lies in the closure of the convex cone generated by $\Big\{\big(\bm{a}^\top(x),c(x)\big)^\top;c(x)\leq b(x), x\in \mathbb{S}\Big\}$.  
We say program $\mathfrak{P}$ \textit{Farkas-Minkowski} (FM) if every linear consequence relation of the constraints system of $\mathfrak{P}$ is a linear consequence relation of a finite subsystem.} According to Theorem 3.1 of \citet{gobernaFarkasMinkowskiSystemsSemiinfinite1981}, $\mathfrak{D}_{\mu}$ is FM if and only if the \textit{characteristic cone} $K_c(\mathfrak{D}_{\mu})$ is closed, 
\begin{align*}
K_c(\mathfrak{D}_{\mu}):=\text{cone}\Big(\big(G_0(x),G_1(x),\ldots,G_n(x),H(x)\big)^\top ,x\in \mathbb{\Omega}; (\bm{0}_{n+1}, -1)\Big). 
\end{align*}


The linear constraint system $\{\bm{a}(x)^\top \bm{y}\geq b(x),x\in \mathbb{S}\}$ is \textit{canonically closed} if \textbf{(1)} $\exists \bm{y}^0, \bm{a}(x)^\top \bm{y}^0> b(x),\forall x\in \mathbb{S}$; \textbf{(2)} $\exists \alpha(x):\mathbb{\Omega}\to \mathbb{R}^{++},$ the set $\{\alpha(x)\cdot (\bm{a}(x)^\top,b(x))^\top, x\in \mathbb{\Omega}\}$ is compact.
$\mathfrak{D}_{\mu}$ is FM if the constraint system is \textit{canonically closed} (Corollay 3.1.1 of \citet{gobernaFarkasMinkowskiSystemsSemiinfinite1981}). It is because \textit{canonically closed} implies that $$\text{cone}\Big(\big(G_0(x),G_1(x),\ldots,G_n(x),H(x)\big)^\top ,x\in \mathbb{\Omega}\Big)$$ is closed, which is a sufficient condition for the closedness of $K_c(\mathfrak{D}_{\mu})$. Then we have 
\begin{lemma}\label{lemma:cc2fm}
Program $\mathfrak{P}$ is FM if the constraint system is \textit{canonically closed}. 
\end{lemma}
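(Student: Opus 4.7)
The plan is to invoke Theorem 3.1 of \citet{gobernaFarkasMinkowskiSystemsSemiinfinite1981} already recalled in the excerpt: a linear semi-infinite program $\mathfrak{P}$ with constraint system $\{\bm{a}(x)^\top\bm{y}\geq b(x):x\in\Omega\}$ is FM if and only if its characteristic cone $K_c(\mathfrak{P})=\mathrm{cone}\big(\{(\bm{a}(x)^\top,b(x))^\top:x\in\Omega\}\cup\{(\bm{0},-1)^\top\}\big)$ is closed. So it suffices to verify closedness of this cone under the two canonical-closedness hypotheses.

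First I would reduce to the cone generated by a compact set. Condition (2) of canonical closedness supplies $\alpha:\Omega\to\mathbb{R}^{++}$ such that $C:=\{\alpha(x)(\bm{a}(x)^\top,b(x))^\top:x\in\Omega\}$ is compact, and since multiplication by positive scalars leaves the conical hull unchanged, $\mathrm{cone}(\{(\bm{a}(x)^\top,b(x))^\top:x\in\Omega\})=\mathrm{cone}(C)$. Adjoining $(\bm{0},-1)^\top$ then gives $K_c(\mathfrak{P})=\mathrm{cone}(C\cup\{(\bm{0},-1)^\top\})$, a cone generated by a compact subset of $\mathbb{R}^{n+2}$.

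The key step is to use the Slater condition from (1) to separate the origin from this compact generator. Choosing $\bm{y}^0$ with $\bm{a}(x)^\top\bm{y}^0>b(x)$ for every $x\in\Omega$, and writing $\bm{w}:=(\bm{y}^0,-1)^\top$, we get for every $x\in\Omega$
\begin{equation*}
\bm{w}^\top\big(\alpha(x)(\bm{a}(x)^\top,b(x))^\top\big)=\alpha(x)\big(\bm{a}(x)^\top\bm{y}^0-b(x)\big)>0,
\end{equation*}
and also $\bm{w}^\top(\bm{0},-1)^\top=1>0$. Thus $C\cup\{(\bm{0},-1)^\top\}$ lies in the open half-space $\{\bm{z}:\bm{w}^\top\bm{z}>0\}$, which by compactness is actually bounded away from the hyperplane $\{\bm{w}^\top\bm{z}=0\}$. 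In particular, $\bm{0}\notin\mathrm{conv}(C\cup\{(\bm{0},-1)^\top\})$.

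From here I would invoke the standard fact that the conical hull of a compact set whose convex hull does not contain the origin is closed: any sequence $\bm{z}_k=\lambda_k\bm{c}_k$ with $\bm{c}_k$ in the compact generator and $\lambda_k\geq0$ converging to some $\bm{z}$ admits, by compactness, a subsequence with $\bm{c}_k\to\bm{c}$; the separation $\bm{w}^\top\bm{c}\geq\delta>0$ then bounds $\lambda_k=\bm{w}^\top\bm{z}_k/\bm{w}^\top\bm{c}_k$ and yields a convergent subsequence $\lambda_k\to\lambda\geq0$, so $\bm{z}=\lambda\bm{c}\in\mathrm{cone}(C\cup\{(\bm{0},-1)^\top\})$. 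This gives closedness of $K_c(\mathfrak{P})$ and thus the FM property. The only nontrivial obstacle is this separation-plus-compactness argument, but it is routine once Slater is combined with the normalization $\alpha$; everything else is bookkeeping of the characteristic cone.
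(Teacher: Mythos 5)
Your argument is correct and reaches the conclusion by the same skeleton the paper uses---namely, Theorem 3.1 of \citet{gobernaFarkasMinkowskiSystemsSemiinfinite1981} reduces the FM property to closedness of the characteristic cone---but where the paper disposes of that closedness in one sentence by citing Corollary 3.1.1 of the same reference (with the gloss that canonical closedness makes the cone of the constraint vectors closed, which suffices for $K_c$), you supply a complete, self-contained separation argument. Your treatment is in fact a bit cleaner on one point: by checking that the Slater vector $\bm{w}=((\bm{y}^0)^\top,-1)^\top$ is also strictly positive against the extra generator $(\bm{0}^\top,-1)^\top$, you handle the full characteristic cone in one stroke rather than arguing separately that adjoining that ray preserves closedness (which is not automatic for sums of closed cones in general). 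The only imprecision is in the final closedness step: a generic element of $\mathrm{cone}(C\cup\{(\bm{0}^\top,-1)^\top\})$ is a finite nonnegative combination of generators, not a single scalar multiple $\lambda_k\bm{c}_k$ of one generator, so the compactness-plus-separation sequence argument should be run over the convex hull of the generator set, writing each cone element as $\lambda\bm{d}$ with $\bm{d}$ in that convex hull; since the convex hull of a compact set in $\mathbb{R}^{n+2}$ is again compact (Carath\'eodory) and inherits the bound $\bm{w}^\top\bm{d}\geq\delta>0$, this is a one-line repair and does not affect the validity of the proof.
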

Here we discuss the optimality and duality theory between $\mathfrak{P}_{\bm{\mu}}$ and $\mathfrak{D}_{\bm{\mu}}$. Denote \textit{first moment cone} of $\mathfrak{D}_{\bm{\mu}}$ as $\mathbb{M}(\mathfrak{D}_{\bm{\mu}}):=\text{cone}\Big(\big(G_0(x),G_1(x),\ldots,G_n(x)\big)^\top ,x\in \mathbb{\Omega}\Big)$. Then for $\mathbb{F}(\mathfrak{D}_{\bm{\mu}})\neq \emptyset$, $\mathbb{F}(\mathfrak{D}_{\bm{\mu}})$ is bounded if and only if $\mathbb{M}(\mathfrak{D}_{\bm{\mu}})=\mathbb{R}^{n+1}$ (Theorem 9.1 of \citet{miguela.gobernaLinearSemiinfiniteOptimization2000}) and $\mathbb{F}^*(\mathfrak{D}_{\bm{\mu}})$ is a nonempty bounded set if and only if $\bm{\mu}\in \text{int}(\mathbb{M}(\mathfrak{D}_{\bm{\mu}}))$ (Corollary 9.3.1 of \citet{miguela.gobernaLinearSemiinfiniteOptimization2000}). According to Corolary 4.1.1 of \citet{gobernaFarkasMinkowskiSystemsSemiinfinite1981}, given any problem $\mathfrak{P}$ and the corresponding dual problem $\mathfrak{D}$, if the problem $\mathfrak{D}$ is a feasible FM, then the following statements are true: \textup{\textbf{(1)}} $v(\mathfrak{D}) = -\infty$ if and only if $\mathbb{F}(\mathfrak{P})=\emptyset$; \textup{\textbf{(2)}} $v(\mathfrak{D}) >-\infty$ if and only if $v(\mathfrak{D})=v(\mathfrak{P})$. 

We denote $Z^* = v(\mathfrak{P}_{\bm{\mu}})$. Assume $\bm{dr}\neq \mathbf{0}$ and $\rho>0$ as $\bm{dr}=\bm{0}$ or $\rho=0$ are trivial cases. 
Since $\bm{\mu}\in \textup{int}\big(\mathbb{M}(\mathfrak{D}_{\bm{\mu}})\big)$, $\mathbb{F}^*(\mathfrak{D}_{\bm{\mu}})$ is bounded. Combined with the fact that $\mathfrak{D}_{\bm{\mu}}$ is an FM system, we obtain $\mathfrak{P}_{\bm{\mu}}$ and $\mathfrak{D}_{\bm{\mu}}$ are solvable and strong duality holds (see the second remark after Theorem 2 of \citet{gobernaSensitivityAnalysisLinear2007}).

In addition, 
{\small
\begin{align}
    v(\mathfrak{P}_{\bm{\mu}+\rho \bm{dr}}) &\leq \min\{\sum\limits_{j\in \{0\}\cup [d]}y_j(\mu_j+\rho \bm{dr}_j)|\bm{y}\in \mathbb{F}(\mathfrak{D}_{\bm{\mu}})\}\leq \min\{\sum\limits_{j\in \{0\}\cup [d]}y_j(\mu_j+\rho \bm{dr}_j)|\bm{y}\in \mathbb{F}^*(\mathfrak{D}_{\bm{\mu}})\}\notag \\    
    &= v(\mathfrak{D}_{\bm{\mu}})+\rho\min\{\bm{dr}^\top \bm{y}|\bm{y}\in \mathbb{F}^*(\mathfrak{D}_{\bm{\mu}})\}=v(\mathfrak{P}_{\mu})+\rho\min\{\bm{dr}^\top \bm{y}|\bm{y}\in \mathbb{F}^*(\mathfrak{D}_{\bm{\mu}})\}\label{proof:sensitivity:ub}
\end{align}}
where the first inequality is based on weak duality and the last equality follows from the strong duality between $\mathfrak{P}_{\bm{\mu}}$ and $\mathfrak{D}_{\bm{\mu}}$. 
Let us consider an auxiliary problem $\mathfrak{P}'$ and its corresponding dual problem $\mathfrak{D}'$ as follows. 
\newpage
{\footnotesize
\vspace{-.5in}
\begin{multicols}{2}
\begin{align*}
(\mathfrak{P}'):\max{P,\pi} \ \  &\mathbb{E}_P[H(X)]+Z^*\pi \\
\text{s.t. } \ \ &\mathbb{E}_P [G_j(X)]+\mu_j\pi= \bm{dr}_j,\forall j\in \{0\}\cup [d], \\
&P\in \mathscr{M}^+(\mathbb{\Omega}).
\end{align*}
\break
\begin{align*}
(\mathfrak{D}'):\min_{\bm{y}} \ \  &\bm{dr}^\top \bm{y} \ \\
\text{s.t. } \ \ &\sum\limits_{\{0\}\cup [d]}y_jG_j(x)\geq H(x), \ \ \forall x\in \mathbb{\Omega}, \\
&\sum\limits_{\{0\}\cup [d]} y_j\mu_j= Z^*.
\end{align*}
\end{multicols}}

Since $\mathbb{F}^*(\mathfrak{D}_{\bm{\mu}})$ is non-empty and bounded which is a feasible region for $\mathfrak{D}'$, we have $\mathbb{M}(\mathfrak{D}') = \mathbb{R}^{n+1}$, implying that every $\bm{dr}\in \textup{int}\big(\mathbb{M}(\mathfrak{D}')\big)$ and thereby strong duality holds. 
Moreover, $\mathfrak{P}'$ is solvable since the probability measure as the decision variables has compact support. The above two arguments implies that $\bm{dr}^\top \bm{y}$ can achieve its minimum value on the feasible region $\mathbb{F}^*(\mathfrak{D}_{\bm{\mu}})$. 
Let $(\bar{P},\bar{\pi})$ be an optimal solution of $\mathfrak{P}'$ and $P^*$ be an optimal solution of $\mathfrak{P}_{\bm{\mu}}$. We argue that for any sufficiently small $\rho>0$, $P^*+\rho(\bar{P}+\bar{\pi}P^*)$ is feasible for $\mathfrak{P}_{\bm{\mu}+\rho \bm{dr}}$. If $\bar{\pi}\geq 0$ , $\rho$ can be any positive number. If $\bar{\pi}< 0$, $\rho\leq -\frac{1}{\bar{\pi}}$. To see that, (1) $P^*+\rho(\bar{P}+\bar{\pi}P^*)\in \mathcal{M}^+(\mathbb{\Omega})$ and (2). $\mathbb{E}_{P^*+\rho(\bar{P}+\bar{\pi}P^*)} [G_j(X)] = \mathbb{E}_{P^*} [G_j(X)] + \mathbb{E}_{\rho(\bar{P}+\bar{\pi}P^*)} [G_j(X)] = \mathbb{E}_{P^*} [G_j(X)] + \rho\mathbb{E}_{\bar{P}} [G_j(X)]+\rho\bar{\pi}\mathbb{E}_{P^*} [G_j(X)]=\mu_j+\rho(\bm{dr}_j-\mu_j\bar{\pi}+\bar{\pi}\mu_j)=\mu_j+\rho \bm{dr}_j, j\in \{0\}\cup [d]$. 

With strong duality of $\mathfrak{P}'$ and $\mathfrak{D}'$, we have 
\begin{align}
    v(\mathfrak{P}_{\bm{\mu}+\rho \bm{dr}})&\geq \mathbb{E}_{P^*+\rho(\bar{P}+\bar{\pi}P^*)}[H(X)] \notag\\
    &= \mathbb{E}_{P^*}[H(X)]+\rho\mathbb{E}_{\bar{P}}[H(X)]+\rho\bar{\pi}\mathbb{E}_{P^*}[H(X)]\notag\\
    &=v(\mathfrak{P}_{\bm{\mu}})+\rho\mathbb{E}_{\bar{P}}[H(X)]+\rho\bar{\pi}Z^*=v(\mathfrak{P}_{\bm{\mu}})+v(\mathfrak{P}')\notag\\
    &=v(\mathfrak{P}_{\bm{\mu}})+\rho \min\{\bm{dr}^\top \bm{y}|\bm{y}\in \mathbb{F}^*(\mathfrak{D_{\bm{\mu}}})\}.\label{proof:sensitivity:lb}
\end{align}

Combining (\ref{proof:sensitivity:ub}) and (\ref{proof:sensitivity:lb}) we have 
\begin{align*}
&v(\mathfrak{P}_{\mu})+\rho\min\{\bm{dr}^\top \bm{y}|\bm{y}\in \mathbb{F}^*(\mathfrak{D}_{\bm{\mu}})\} \geq v(\mathfrak{P}_{\bm{\mu}+\rho \bm{dr}})\geq v(\mathfrak{P}_{\bm{\mu}})+\rho \min\{\bm{dr}^\top \bm{y}|\bm{y}\in \mathbb{F}^*(\mathfrak{D_{\bm{\mu}}}).\}
\end{align*}
We then have 
\begin{align*}
v(\mathfrak{P}_{\bm{\mu}+\rho \bm{dr}}) = v(\mathfrak{P}_{\mu})+\rho\min\{\bm{dr}^\top \bm{y}|\bm{y}\in \mathbb{F}^*(\mathfrak{D}_{\bm{\mu}}),\}
\end{align*}
which concludes the proof.
\end{proof}

\subsection{Some Useful Theorems and Lemmas}
\begin{lemma}
\label{lm:ps}
Given $\alpha\in\mathbb{R}$, if \textup{(i)} $\int_{a}^\infty x^{\alpha}f(x)dx=\Gamma$, \textup{(ii)} $f(x)$ is non-increasing for $x\geq a$, \textup{(iii)} $f(x)\geq 0 ,\forall x\geq a$, then we have $x^{\alpha+1}f(x)\rightarrow 0$ as $x\rightarrow \infty$.

Besides conditions above, if we have $f'_{+}(x)$ is non-decreasing, non-positive for $x\geq a $ and $f(x)=\int_a^x f_+'(t)dt +\eta$, then we have $x^{\alpha+2}f'_{+}(x)\rightarrow 0$ as $x\rightarrow\infty$.
\end{lemma}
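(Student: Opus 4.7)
The plan is to prove the two parts in sequence, with Part 2 bootstrapping from Part 1 applied to the auxiliary function $-f'_+$.

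For Part 1, I would use a Chebyshev-type tail estimate exploiting the monotonicity of $f$. For sufficiently large $x$ (so that $x/2 \geq \max(a,1)$, putting the interval in the interior of the domain), the fact that $f$ is non-increasing gives $f(t) \geq f(x)$ for $t \in [x/2, x]$, and hence
\begin{equation*}
\int_{x/2}^{x} t^{\alpha} f(t)\, dt \;\geq\; f(x)\int_{x/2}^{x} t^{\alpha}\, dt \;=\; c_\alpha\, x^{\alpha+1} f(x),
\end{equation*}
where direct computation shows the integral $\int_{x/2}^{x} t^{\alpha} dt$ equals $x^{\alpha+1}(1 - 2^{-(\alpha+1)})/(\alpha+1)$ if $\alpha \neq -1$ and $\ln 2$ if $\alpha = -1$; in every case this is $c_\alpha x^{\alpha+1}$ with a strictly positive constant $c_\alpha$. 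Because the full integral $\int_a^\infty t^\alpha f(t)\,dt = \Gamma$ converges, the tail $\int_{x/2}^x t^\alpha f(t)\,dt$ vanishes as $x \to \infty$, forcing $x^{\alpha+1} f(x) \to 0$.

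For Part 2, I set $g := -f'_+$, which under the added hypotheses is non-negative and non-increasing. The strategy is to apply Part 1 to $g$ with the shifted exponent $\alpha+1$, which yields $x^{\alpha+2} g(x) \to 0$ and hence the claim. The remaining task is therefore to verify the integrability $\int_a^\infty t^{\alpha+1} g(t)\,dt < \infty$. For $\alpha \neq -1$, integration by parts on $\int_a^R t^\alpha f(t)\,dt$ with $u=f$ and $dv = t^\alpha dt$ gives
\begin{equation*}
\int_a^R t^\alpha f(t)\, dt \;=\; \frac{R^{\alpha+1} f(R) - a^{\alpha+1} f(a)}{\alpha+1} + \frac{1}{\alpha+1}\int_a^R t^{\alpha+1} g(t)\, dt,
\end{equation*}
and letting $R \to \infty$ the boundary term $R^{\alpha+1}f(R)$ vanishes by Part 1, leaving a finite expression for $\int_a^\infty t^{\alpha+1} g(t)\,dt$. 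For the exceptional case $\alpha = -1$, I would avoid the logarithmic antiderivative entirely and observe instead that $\int_a^\infty g(t)\,dt = f(a) - \lim_{t\to\infty}f(t) = f(a)$, where the last equality uses that Part 1 with $\alpha=-1$ gives $f(x)\to 0$.

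The main obstacle is the $\alpha = -1$ case in Part 2: the naive integration-by-parts fails because the antiderivative of $t^{-1}$ grows like $\ln t$, and Part 1 only yields $f(R) \to 0$ (not $(\ln R) f(R) \to 0$), so the boundary term cannot be controlled directly. The workaround above sidesteps this by using the simpler identity $\int g = f(a) - f(\infty)$, which is exactly the quantity needed to invoke Part 1 on $g$ at exponent $0$. A minor additional care is needed to track signs when $\alpha + 1 < 0$, but since $g \geq 0$ and the sign of $\alpha + 1$ appears symmetrically on both sides of the integration-by-parts identity, the conclusion is unchanged.
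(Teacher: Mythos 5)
Your proof is correct, and it takes a genuinely different route from the paper's. The paper proves the first claim by introducing the auxiliary function $T(x)=x^{\alpha+1}f(x)-(\alpha+1)\int_a^x t^{\alpha}f(t)\,dt$, showing it is non-increasing and bounded below (hence convergent), deducing that $x^{\alpha+1}f(x)$ has a finite nonnegative limit, and then ruling out a positive limit because $x^{\alpha}f(x)\geq\epsilon/x$ would force the integral to diverge; the second claim is handled by the analogous function $P(x)$ built from $-x^{\alpha+2}f_+'(x)$, after an integration by parts establishes finiteness of $-\int_x^\infty t^{\alpha+1}f_+'(t)\,dt$. Your argument replaces the first step by the one-line Chebyshev-type tail bound $\int_{x/2}^{x}t^{\alpha}f(t)\,dt\geq c_\alpha x^{\alpha+1}f(x)$, which handles every $\alpha\in\mathbb{R}$ uniformly (the paper dispatches $\alpha\leq0$ rather tersely and runs the monotone-function argument only for $\alpha>0$), and it turns the second claim into a literal corollary of the first applied to $g=-f_+'$ at exponent $\alpha+1$, with the same integration by parts now used only to verify the integrability hypothesis. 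Your explicit treatment of the exceptional case $\alpha=-1$ via $\int_a^\infty g=f(a)-\lim_{t\to\infty}f(t)$ is a point the paper never isolates, and the sign bookkeeping for $\alpha+1<0$ is handled correctly since the factor $1/(\alpha+1)$ multiplies a convergent quantity on both sides. Overall your version is shorter and structurally cleaner; the paper's version avoids invoking the first part as a black box inside the second but pays for it with two parallel monotonicity arguments.
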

\begin{proof}[Proof of Lemma \ref{lm:ps}]
For $\alpha \leq 0$, $x^{\alpha}$ becomes bounded function, the results easily follow.
Now we assume $\alpha>0$. Consider the function 
\begin{align*}
T(x)&=x^{\alpha+1}f(x)-(\alpha+1)\int_a^x t^\alpha f(t)dt.
\end{align*}

For any $(a\vee 0)\leq x_1\leq x_2$,
\begin{align*}
T(x_2)-T(x_1)&=x_2^{\alpha+1}f(x_2)-x_1^{\alpha+1}f(x_1)-\int_{x_1}^{x_2}f(t)dt^{\alpha+1}\\
&\leq x_2^{\alpha+1}f(x_2)-x_1^{\alpha+1}f(x_1)-f(x_2)(x_2^{\alpha+1}-x_1^{\alpha+1})\\
&=-x_1^{\alpha+1}f(x_1)+x_1^{\alpha+1}f(x_2)\\
&=x_1^{\alpha+1}(f(x_2)-f(x_1))\leq 0.
\end{align*}

From above, we have $T(x)$ is non-increasing for $x\geq a\vee 0$. Since $x^{\alpha+1}f(x)\geq 0$ and $0\leq \int_{a}^x t^{\alpha}f(t)dt\leq \Gamma$ for $x\geq a\vee 0$, we have that $T(x)$ is bounded from below and thereby converges to a limit. Given $\int_{a}^x t^{\alpha}f(t)dt\rightarrow \Gamma $ as $x\rightarrow \infty$, we must have $x^{\alpha+1}f(x)$ go to some finite non-negative value. If $x^{\alpha+1}f(x)\geq \epsilon >0$ for all large enough $x$. This means $x^{\alpha}f(x)\geq \frac{\epsilon}{x}$ for all large enough $x$, and hence $\int_{a}^{\infty} x^{\alpha}f(x) dx =\infty$. It contradicts with condition (ii) above. Therefore, $x^{\alpha+1}f(x)$ must converge to $0$.

To prove the second part, consider a function 
\begin{align*}
P(x)=-x^{\alpha+2}f_+'(x)+(\alpha+2)[-\int_x^\infty t^{\alpha+1}f_+'(t)dt].
\end{align*} 

Note that because $f$ is absolutely continuous and $\lim_{x\rightarrow \infty}x^{\alpha+1}f(x)\rightarrow 0$ as we have just proved, integration by parts yields that 
\begin{align*}
-\int_x^\infty t^{\alpha+1}f_+'(t)dt&=x^{\alpha+1}f(x)+(\alpha+1)\int_x^{\infty}t^\alpha f(t)dt <\infty,\forall x\geq (a\vee 0).
\end{align*}

For any $(a\vee0) \leq x_1\leq x_2$,
\begin{align*}
P(x_2)-P(x_1)&=x_1^{\alpha+2}f_+'(x_1)-x_2^{\alpha+2}f_+'(x_2)+\int_{x_1}^{x_2}f_+'(t)dt^{\alpha+2}\\
&\leq x_1^{\alpha+2}f_+'(x_1)-x_2^{\alpha+2}f_+'(x_2)+f_+'(x_2)(x_2^{\alpha+2}-x_1^{\alpha+2})\\
&=x_1^{\alpha+2}(f_+'(x_1)-f_+'(x_2))\leq 0
\end{align*}

From above, we have $P(x)$ is non-increasing for $x\geq (a\vee 0)$. Since $-x^{\alpha+2}f_+'(x)\geq 0$ for $x\geq (a\vee 0)$ and $-\int_{x}^\infty t^{\alpha}f_+'(t)dt$ is non-negative, we have that $P(x)$ is bounded from below and converges to a limit. Because $\lim_{x\rightarrow \infty}x^{\alpha+1}f(x)\rightarrow 0$ and $\lim_{x\rightarrow \infty}\int_{x}^{\infty}f(t)t^{\alpha}dt=0$, we have $\lim_{x\rightarrow \infty}(-\int_x^\infty x^{\alpha+1}f_+'(t)dt) =0$. It implies $-x^{\alpha+2}f_+'(x)$ must converge to some finite non-negative value as $x\rightarrow \infty$.

If $-x^{\alpha+2}f_+'(x)\geq \epsilon >0$ for all large enough $x$. This means $-x^{\alpha+1}f_+'(x)\geq \frac{\epsilon}{x}$ for all large enough $x$, and hence $-\int_x^\infty t^{\alpha+1}f_+'(t)dt=\infty$ for $x\geq a$. It contradicts with the finiteness of the integration of $-t^{\alpha+1}f_+'(t)$ above. Therefore, $-x^{\alpha+1}f_+'(x)$ must converge to $0$.
\end{proof}
\begin{lemma}
\label{lm:gf}
Given function $g(x)$ bounded below for $x\geq a$, if \textup{(i)} $\int_{a}^\infty g(x)f(x)dx=\Gamma$,
\textup{(ii)} $f(x)$ is non-increasing for $x\geq a$, \textup{(iii)} $f(x)\geq 0 ,\forall x\geq a$ is a density function, then we have $\tilde{G}(x)f(x)\rightarrow 0$ as $x\rightarrow \infty$,where $\tilde{G}(x)=\int_a^x g(u) du$.

Besides conditions above, if we have $f'_{+}(x)$ is non-decreasing, non-positive for $x\geq a $, and $f(x)=\int_a^x f_+'(t)dt +\eta$, then we have $G(x)f'_{+}(x)\rightarrow 0$ as $x\rightarrow \infty$, where $G(x)=\int_a^x \tilde{G}(u) du$.
\end{lemma}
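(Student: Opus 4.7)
The key idea is to reduce the bounded-below case to the non-negative case by writing $g(x)=\tilde g(x)+c$ with $c:=\inf_{x\geq a}g(x)$ so that $\tilde g\geq 0$, which yields $\tilde G(x)=A(x)+c(x-a)$ where $A(x):=\int_a^x \tilde g(u)\,du$, and analogously $G(x)=B(x)+c(x-a)^2/2$ with $B(x):=\int_a^x A(u)\,du$. Since $f$ is a density, $\int_a^\infty \tilde g(u)f(u)\,du = \Gamma - c\int_a^\infty f(u)\,du < \infty$, so $\tilde g f$ is integrable; this integrability is what will drive the tail estimates. The second terms $c(x-a)f(x)$ and $c(x-a)^2 f'_+(x)/2$ vanish in the limit directly by Lemma \ref{lm:ps} applied with $\alpha=0$, using that $\int_a^\infty f<\infty$.

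For the main (non-negative) part of the first claim, I would prove $A(x)f(x)\to 0$ by a standard split-and-tail argument. Observing that $f$ is non-increasing and integrable forces $f(x)\to 0$ (any positive limit would violate integrability), I split $A(x)f(x) = A(M)f(x) + \int_M^x \tilde g(u)\,du\cdot f(x)$. The first summand goes to $0$ because $A(M)$ is fixed and $f(x)\to 0$. For the second, monotonicity of $f$ gives $f(x)\leq f(u)$ for $u\in[M,x]$, so it is bounded by $\int_M^\infty \tilde g(u)f(u)\,du$, which can be made arbitrarily small by choosing $M$ large. Combining with the Lemma \ref{lm:ps} bound on $c(x-a)f(x)$ yields $\tilde G(x)f(x)\to 0$.

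For the second claim, I would first use integration by parts on $\int_a^\infty \tilde g(u)f(u)\,du$. The boundary term $A(x)f(x)$ vanishes at infinity \emph{by the first part just proved}, and at $a$ since $A(a)=0$; this produces the key finite integral $\int_a^\infty A(u)(-f'_+(u))\,du = \int_a^\infty \tilde g(u)f(u)\,du<\infty$. Noting that $-f'_+\geq 0$ is non-increasing and converges to $0$ at infinity (otherwise $f(x)=\int_a^x f'_+(t)\,dt+\eta\to-\infty$, contradicting $f\geq 0$), I repeat the split-and-tail scheme on $B(x)(-f'_+(x)) = B(M)(-f'_+(x)) + \int_M^x A(u)\,du\cdot(-f'_+(x))$, bounding the second summand by $\int_M^\infty A(u)(-f'_+(u))\,du$. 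Combining this with $(x-a)^2 f'_+(x)\to 0$ from Lemma \ref{lm:ps} gives $G(x)f'_+(x)\to 0$.

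\textbf{Main obstacle.} Conceptually the proof is routine once the reduction to $\tilde g\geq 0$ is made, but the delicate logistical point is the ordering: the integration-by-parts step that produces $\int_a^\infty A(u)(-f'_+(u))\,du<\infty$ must invoke $A(x)f(x)\to 0$, so the second claim depends on the first. Care is also needed to justify $f(x)\to 0$ and $-f'_+(x)\to 0$ from the monotonicity and integrability hypotheses before applying them in the splitting argument.
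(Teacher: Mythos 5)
Your proposal is correct, and the reduction to a non-negative integrand (subtracting the infimum of $g$ and absorbing the linear and quadratic correction terms via Lemma \ref{lm:ps} with $\alpha=0$) matches the paper's own decomposition. However, your core argument for the key step is genuinely different. The paper introduces the auxiliary function $T(x)=\tilde G(x)f(x)-\int_a^x g(t)f(t)\,dt$, shows it is non-increasing and bounded below, deduces that $\tilde G(x)f(x)$ converges, and then rules out a positive limit by a case analysis on whether $\tilde G$ stays bounded, using the divergence of $\int \epsilon\, d\ln\tilde G$ to reach a contradiction; the second claim is handled analogously with $P(x)=-G(x)f_+'(x)-\int_x^\infty\tilde G\,df_+'$. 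You instead use a direct split-and-tail estimate: $A(x)f(x)\le A(M)f(x)+\int_M^\infty \tilde g\,f$, exploiting monotonicity of $f$ and integrability of $\tilde g f$, and then repeat the scheme for $B(x)(-f_+'(x))$ after extracting the finite integral $\int_a^\infty A\,(-f_+')\,d u$ by integration by parts (whose boundary term is controlled by the first claim). Your route is more elementary and avoids both the convergence-of-$T$ step and the logarithmic contradiction, at the cost of having to order the two claims carefully (the second genuinely depends on the first, a dependence you correctly flag); the paper's route is more uniform in that the same auxiliary-function template is reused verbatim in Lemma \ref{lm:ps} and both halves of this lemma. Both proofs need, and both supply, the preliminary facts $f(x)\to 0$ and $-f_+'(x)\to 0$, so I see no gap in your argument.
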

\begin{proof}[Proof of Lemma \ref{lm:gf}]
We first assume function $g(x)$ is a non-negative function for $x\geq a$.

Consider the function
\begin{align*}
T(x)&=\tilde{G}(x)f(x)-\int_a^x g(t)f(t)dt.
\end{align*}

For any $a\leq x_1\leq x_2$,
\begin{align*}
T(x_2)-T(x_1)&=\tilde{G}(x_2)f(x_2)-\tilde{G}(x_1)f(x_1)-\int_{x_1}^{x_2}g(t)f(t)dt\\
&=\tilde{G}(x_2)f(x_2)-\tilde{G}(x_1)f(x_1)-\int_{x_1}^{x_2}f(t)d\tilde{G}(t)\\
&\leq \tilde{G}(x_2)f(x_2)-\tilde{G}(x_1)f(x_1)-f(x_2)(\tilde{G}(x_2)-\tilde{G}(x_1))\\
&=-\tilde{G}(x_1)f(x_1)+f(x_2)\tilde{G}(x_1)\\
&=\tilde{G}(x_1)(f(x_2)-f(x_1))\leq 0.
\end{align*}

From above, we have $T(x)$ is non-increasing for $x\geq a$. Since $\tilde{G}(x)f(x)\geq 0$ and $0\leq \int_{a}^x g(t)f(t)dt\leq \Gamma$ for $x\geq a$, we have $T(x)$ bounded from below and converge to a limit. Because $\int_{a}^x g(t)f(t)dt \rightarrow \Gamma$ as $x\rightarrow \infty$, $\tilde{G}(x)f(x)$ must go to some finite non-negative value as $x\rightarrow \infty$. We now show that $\tilde{G}(x)f(x)\rightarrow 0$ as $x\rightarrow \infty$.

First of all, since $\tilde{G}(x)$ is non-negative and non-decreasing function, we must have that $\tilde{G}(x)$ either blows up or converges to some positive finite value. We first consider the case when $\tilde{G}(x)$ converges to some positive finite value as $x\rightarrow \infty$. Since $f(x)$ goes to zeros as $x\rightarrow \infty$ (indeed, $f(x)\geq 0,f(x)$ is non-increasing for $x\geq a$ and $f(x)$ is a density function), we have $\tilde{G}(x)f(x) \rightarrow 0$ as $x\rightarrow \infty$.

Now we consider the second case when $\tilde{G}(x)$ blows up as $x\rightarrow \infty$. We prove by contradiction. Since $\tilde{G}(x)f(x)$ is non-negative for $x\geq a$, we suppose that $\tilde{G}(x)f(x)$ will converge to a positive value as $x\rightarrow \infty$. It implies that there exist $ \epsilon>0$ and $x_{\epsilon}>a$ such that $\tilde{G}(x)f(x)\geq \epsilon, \ \forall x\geq x_{\epsilon}$

Hence we have
\begin{align*}
\int_{x_{\epsilon}}^{\infty} f(x)g(x)dx&\geq \int_{x_{\epsilon}}^{\infty} \frac{\epsilon}{\tilde{G}(x)}g(x)dx\\
&= \int_{x_{\epsilon}}^{\infty} \frac{\epsilon}{\tilde{G}(x)}d\tilde{G}(x)\\
&=\int_{x_{\epsilon}}^{\infty} \epsilon \ d\ln{\tilde{G}(x)}\\
&=\epsilon \ln{\tilde{G}(x)}|_{\infty}-\epsilon\ln{\tilde{G}(x_{\epsilon})}=\infty
\end{align*}

However, we have 
\begin{align*}
\int_{x_{\epsilon}}^{\infty} f(x)g(x)dx\leq \int_{a}^{\infty} f(x)g(x)dx=\Gamma<\infty.
\end{align*}
We get a contradiction. Hence we must have $\tilde{G}(x)f(x) \rightarrow 0$ as $x\rightarrow \infty$.

To prove the second part, we consider the function 
\begin{align*}
P(x)&=-G(x)f_+'(x)+[-\int_x^\infty \tilde{G}(t)f_+'(t)dt].
\end{align*}

Note that because $f$ is absolutely continuous and $\lim_{x\rightarrow \infty}\tilde{G}(x)f(x)\rightarrow 0$ as we have just proved, integration by parts yields that
\begin{align*}
-\int_x^\infty \tilde{G}(t)f_+'(t)dt&=\tilde{G}(x)f(x)+\int_x^{\infty}f(t)g(t)dt <\infty.
\end{align*}

For any $a\leq x_1\leq x_2$, we have 
\begin{align*}
P(x_2)-P(x_1)&=-G(x_2)f_+'(x_2)+G(x_1)f_+'(x_1)+[\int_{x_1}^{x_2} \tilde{G}(t)f_+'(t)dt]\\
&=-G(x_2)f_+'(x_2)+G(x_1)f_+'(x_1)+f_+'(x_2)(G(x_2)-G(x_1))\\
&=G(x_1)(f_+'(x_1)-f_+'(x_2))\leq 0.
\end{align*}

From above, we have $P(x)$ is non-increasing for $x\geq a$. Since $-G(x)f_+'(x)\geq 0$ and $-\int_{x}^{\infty} \tilde{G}(t)f_+'(t)dt\geq 0$ for $x\geq a$, we have that $P(x)$ is bounded from below and converges to a limit. Because $\lim_{x\rightarrow \infty}\tilde{G}(x)f(x)$ and $\lim_{x\rightarrow \infty}\int_{x}^{\infty}f(t)g(t)dt$ are both equal to zero, we have $\lim_{x\rightarrow \infty}(-\int_x^\infty \tilde{G}(t)f_+'(t)dt) =0$. It implies that $-G(x)f_+'(x)$ must go to some finite non-negative value as $x\rightarrow \infty$. We now show that $-G(x)f_+'(x)\rightarrow 0$ as $x\rightarrow \infty$.

First of all, since $G(x)$ is non-negative and non-decreasing function, we must have that $G(x)$ either blows up or converges to some positive finite value. We first consider the case when $G(x)$ converges to some positive finite value as $x\rightarrow \infty$. Suppose that $f_+'(x)\not\rightarrow 0$. Then by non-positiveness and non-decreasing properties of $f_+'(x)$, we have $f_+'(x)\rightarrow c<0$ as $x\rightarrow \infty$. However, $f(x)=\int_a^x f_+'(t)dt +\eta \rightarrow -\infty$ as $x\rightarrow \infty$, violating the non-negative condition of $f(x)$. We then have $f_+'(x)$ go to zeros as $x\rightarrow \infty$, thereby implying $-G(x)f_+'(x) \rightarrow 0$ as $x\rightarrow \infty$.

Now we consider the second case when $G(x)$ blows up as $x\rightarrow \infty$. We prove by contradiction, since $-G(x)f_+'(x)$ is non-negative for $x\geq 0$, we suppose that $-G(x)f_+'(x)$ will converge to a positive value as $x\rightarrow \infty$. It implies that there exist $ \epsilon>0$ and $x_{\epsilon}>a$ such that $-G(x)f_+'(x)\geq \epsilon, \ \forall x\geq x_{\epsilon}$

Hence we have 
\begin{align*}
\int_{x_{\epsilon}}^{\infty} -f_+'(x)\tilde{G}(x)dx&\geq \int_{x_{\epsilon}}^{\infty} \frac{\epsilon}{G(x)}\tilde{G}(x)dx\\
&= \int_{x_{\epsilon}}^{\infty} \frac{\epsilon}{G(x)}dG(x)\\
&=\int_{x_{\epsilon}}^{\infty} \epsilon \ d\ln{G(x)}\\
&=\epsilon \ln{G(x)}|_{\infty}-\epsilon\ln{G(x_{\epsilon})}=\infty.
\end{align*}

However, we have 
\begin{align*}
\int_{x_{\epsilon}}^{\infty} -f_+'(x)\tilde{G}(x)dx=\tilde{G}(x_{\epsilon})f(x_{\epsilon})+\int_{x_{\epsilon}}^{\infty}f(t)g(t)dt <\infty.
\end{align*}
We get a contradiction. Hence we must have $G(x)f_+'(x) \rightarrow 0$ as $x\rightarrow \infty$.

Now we consider the case when $g(x)$ is a bounded-below function for $x\geq a$ and the value of $g(x)$ can be negative. We consider $\tilde{g}(x)=g(x)+|\min_{x\geq a} g(x)|$. Clearly, $\tilde{g}(x)$ and $|\min_{x\geq a} g(x)|$ is non-negative function, which implies we can use the results just above. The results for $g(x)$ hence follow by linearity of integration and sum law of limits.
\end{proof}
\begin{lemma}
\label{lm3}
If $\mathbb{E}[G(X)] = \beta$ where $\beta$ is some finite constant and $G(x):\mathbb{R}\rightarrow \mathbb{R}$ is a non-decreasing function and bounded from below for $x\geq a$ with some constant $a$. For any $P\in \mathscr{P}[a,\infty)$, we have $\mathbb{E}_P[\mathbb{I}(X\geq x)]G(x)\rightarrow 0$ as $x\rightarrow \infty$.
\end{lemma}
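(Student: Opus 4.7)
The plan is to reduce to the non-negative case by a shift, then exploit monotonicity to sandwich $G(x)\mathbb{E}_P[\mathbb{I}(X\geq x)]$ between $0$ and a tail integral that vanishes.

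First, since $G$ is bounded from below on $[a,\infty)$, choose $M\geq 0$ with $G(y)\geq -M$ for all $y\geq a$, and set $\tilde G(y):=G(y)+M$. Then $\tilde G$ is non-decreasing, non-negative on $[a,\infty)$, and $\mathbb{E}_P[\tilde G(X)]=\beta+M<\infty$, so $\tilde G(X)$ is $P$-integrable. Observe that $P\in\mathscr{P}[a,\infty)$ gives $P(X<\infty)=1$, hence $P(X\geq x)\to 0$ as $x\to\infty$ by continuity of probability from above.

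Second, using the monotonicity of $\tilde G$, for every $x\geq a$ one has $\tilde G(y)\geq \tilde G(x)$ whenever $y\geq x$, so
\begin{equation*}
0\leq \tilde G(x)\,P(X\geq x)\;=\;\tilde G(x)\,\mathbb{E}_P[\mathbb{I}(X\geq x)]\;\leq\;\mathbb{E}_P[\tilde G(X)\mathbb{I}(X\geq x)].
\end{equation*}
Since $\mathbb{I}(X\geq x)\to 0$ $P$-a.s.\ as $x\to\infty$ and $|\tilde G(X)\mathbb{I}(X\geq x)|\leq \tilde G(X)\in L^1(P)$, the dominated convergence theorem yields $\mathbb{E}_P[\tilde G(X)\mathbb{I}(X\geq x)]\to 0$. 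Consequently $\tilde G(x)P(X\geq x)\to 0$.

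Third, undo the shift: $G(x)\,P(X\geq x)=\tilde G(x)\,P(X\geq x)-M\,P(X\geq x)$, and both terms on the right vanish as $x\to\infty$, giving the claim. The only mild subtlety to watch for is the sign of $G$ (handled by the shift) and the finiteness of $P$ at infinity (which is immediate from $P$ being a probability measure on $[a,\infty)$); neither should pose a real obstacle.
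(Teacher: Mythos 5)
Your proof is correct and follows essentially the same route as the paper's: shift $G$ by its lower bound to reduce to the non-negative case, use monotonicity to bound $G(x)P(X\geq x)$ by the tail expectation $\mathbb{E}_P[G(X)\mathbb{I}(X\geq x)]$, and let that vanish by integrability. Your version is slightly more explicit (invoking dominated convergence and checking integrability of the shifted function), but there is no substantive difference in approach.
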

\begin{proof}[Proof of Lemma \ref{lm3}]
We first assume that $G(x)$ is a non-negative function. Based on $\mathbb{E}[G(X)] = \beta$, it is easy to obtain that $\mathbb{E}[G(X)\mathbb{I}(X\geq x)] \rightarrow 0$ as $x\rightarrow \infty$. Then $\mathbb{E}[\mathbb{I}(X\geq x)]G(x) \leq \mathbb{E}[G(X)\mathbb{I}(X\geq x)] \rightarrow 0$ as $x\rightarrow \infty$. The first inequality follows from the non-decreasing property of $G(x)$. And from the non-negative property of $G(x)$, we have $\mathbb{E}[\mathbb{I}(X\geq x)]G(x)\geq 0$. It implies that $\mathbb{E}[\mathbb{I}(X\geq x)]G(x)\rightarrow 0$ as $x\rightarrow \infty$.

Now we consider the case when $G(x)$ is a bounded-below function and the value of $G(x)$ can be negative. We consider $\tilde{G}(x)=G(x)+|\min_{x\geq a} G(x)|$. Here, the minimum value is taken with respect to the support of the probability space. Clearly, $\tilde{G}(x)$ and $|\min_{x\geq a} G(x)|$ are non-negative functions, which implies we can use the results above. The result for $G(x)$ hence follows by sum law of limits. Given $\lim_{x\rightarrow \infty} \mathbb{E}[\mathbb{I}(X\geq x)][G(x)+|\min_{x\geq a} G(x)|] = 0$ and $\lim_{x\rightarrow \infty} \mathbb{E}[\mathbb{I}(X\geq x)][-|\min_{x\geq a} G(x)|] = 0$, we have $\lim_{x\rightarrow \infty} \mathbb{E}[\mathbb{I}(X\geq x)] G(x)=0$.
\end{proof}

\begin{theorem}\label{Choquet_mono}
The following two statements are equivalent: 

\textup{(1).} A function $f$ is finite, non-increasing and right-continuous for $x\geq a$ and $\int_a^\infty f(x)dx=\beta$.

\textup{(2).}
$f(x),x\geq a$ is a generalized mixture of the indicator functions $\frac{\mathbb{I}(a\leq x< a+z)}{z},z>\infty$, i.e.,
  \begin{align*}
  f(x)=\beta \int_{0+}^{\infty} \frac{\mathbb{I}(a\leq x< a+z)}{z}dQ(z)
  \end{align*}
  where $Q$ is a probability measure on $(0,\infty)$.
\end{theorem}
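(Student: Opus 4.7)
The plan is to prove the two directions separately, treating $(1) \Rightarrow (2)$ as the substantive one. Under (1), I would first observe that monotonicity together with $\int_a^\infty f\, dx = \beta < \infty$ forces $f \ge 0$ and $f(x) \to 0$ as $x \to \infty$: a non-increasing function with a negative value stays below that value and cannot be integrable, and a non-increasing non-negative integrable function must tend to zero. I would also note that $f(a) < \infty$ by the finiteness hypothesis in (1).

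For $(2) \Rightarrow (1)$, I would verify each property directly from the representation. Non-increasingness in $x$ is inherited from the kernel $z^{-1}\mathbb{I}(a \le x < a+z)$, which is non-increasing in $x$ for every fixed $z > 0$. Right-continuity follows from monotone convergence, since as $x' \downarrow x$ the sets $\{z : z > x' - a\}$ increase to $\{z : z > x - a\}$. The integral identity $\int_a^\infty f\, dx = \beta$ drops out of Fubini after evaluating $\int_a^{a+z} z^{-1}\, dx = 1$ for each $z > 0$.

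For $(1) \Rightarrow (2)$, the key step is an explicit construction. I would let $\mu$ be the finite positive Lebesgue--Stieltjes measure on $(0,\infty)$ associated to the right-continuous non-increasing map $y \mapsto -f(a+y)$, so that $\mu((c,d]) = f(a+c) - f(a+d)$ for $0 < c < d$, and then define $dQ(z) = z\, d\mu(z)/\beta$. Using $f(\infty) = 0$ one has $\mu((y,\infty)) = f(a+y)$ for all $y > 0$, and Tonelli gives
$$\int_{0+}^\infty z\, d\mu(z) = \int_0^\infty \mu((s,\infty))\, ds = \int_0^\infty f(a+s)\, ds = \beta,$$
so $Q$ is a probability measure on $(0,\infty)$. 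The representation then follows by cancelling the $z$ in $dQ$ against the $z^{-1}$ in the kernel: for any $x \ge a$,
$$\beta \int_{0+}^\infty z^{-1}\mathbb{I}(a \le x < a+z)\, dQ(z) = \int_{(x-a,\infty)} d\mu(z) = \mu((x-a,\infty)) = f(x).$$

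The main subtlety to watch is the singular endpoint at $0$, where $z^{-1}$ blows up; the weighting by $z$ in the definition of $Q$ is chosen precisely to absorb this singularity, and the finiteness of $f(a)$ in (1) is exactly what guarantees that $\mu$ places finite total mass $f(a)$ on $(0,\infty)$, so that $Q$ is a bona fide probability measure rather than an infinite one. This is a first-order Choquet decomposition, parallel to the second-order (convex) decomposition used for $\mathscr{P}^2_{a,\underline{\eta},\bar{\eta},\nu}$ via Lemma 5.1 of \citet{popescuSemidefiniteProgrammingApproach2005}; the whole argument reduces to elementary measure-theoretic bookkeeping once the $\mu$-to-$Q$ construction is written down.
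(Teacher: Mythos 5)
Your proposal is correct, and for the substantive direction $(1)\Rightarrow(2)$ it takes a genuinely different route from the paper. The paper normalizes $f/\beta$ into a density of a distribution $\tilde F$ that is unimodal at $a$, invokes the Khinchine-type representation theorem (Theorem 1.2 of Dharmadhikari and Joag-Dev) to write $\tilde F$ as a mixture of uniform distributions $W_{a,z}$, and then recovers the density representation by taking right derivatives of that mixture, with a monotone-convergence argument to justify exchanging the difference-quotient limit with the mixing integral. You instead construct the mixing measure explicitly: $\mu$ is the Lebesgue--Stieltjes measure of $y\mapsto -f(a+y)$, $dQ=z\,d\mu/\beta$, and the identity $f(x)=\mu((x-a,\infty))$ together with Tonelli does all the work. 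This is essentially an inlined, self-contained proof of the cited representation theorem at the density level; it buys you an elementary argument with no external citation and no differentiation-under-the-integral step, at the cost of having to check the measure-theoretic bookkeeping (positivity and finiteness of $\mu$, which you correctly tie to $f\geq 0$, $f(\infty)=0$, and $f(a)<\infty$) yourself. Your converse direction matches the paper's, except that your monotone-convergence argument for right-continuity is slightly cleaner than the paper's dominated-convergence argument, which needs $\int z^{-1}\,dQ<\infty$ as a domination bound. One small point you share with the paper: neither proof really addresses that statement $(2)$ as written does not by itself force $f(a)=\beta\int_{0+}^\infty z^{-1}dQ(z)$ to be finite, so the ``finite'' clause in $(1)$ is only recovered under the implicit assumption that $z^{-1}$ is $Q$-integrable; this is a defect of the theorem statement rather than of your argument.
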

\begin{proof}[Proof of Theorem \ref{Choquet_mono}]
$(1)\implies (2)$:

  The probability distribution function defined as $F(x):=1-\beta+\int_a^x f(x)dx,\forall x\geq a$ is absolute continuous for $x\geq a$. We consider another probability distribution $\tilde{F}$ where $\tilde{F}(x)=\frac{F(x)+\beta-1}{\beta},x\geq a$ and $F(x)=0,x<a$. Then $\tilde{F}$ is still absolutely continuous on $x\geq a$ and density $\tilde{f}(x)$ is zero for $x<a$ and $\frac{f(x)}{\beta}$ for $x\geq a$. Then $\tilde{F}$ is unimodal at point $a$. Moreover, $\tilde{F}(x) = \int_a^x \tilde{f}(s)ds$. The right derivative of $\tilde{F}(x)$ exists for $x\geq a$. We also notice that $\tilde{f}(x+\delta)\leq \frac{\int_x^{x+\delta}\tilde{f}(s)ds}{\delta}\leq \tilde{f}(x)$. Since $f$ is right continuous for $x\geq a$, we have $\tilde{f}(x) = \lim\limits_{\delta\downarrow 0}\tilde{f}(x+\delta) \leq \lim\limits_{\delta\downarrow 0}\frac{\int_x^{x+\delta}\tilde{f}(s)ds}{\delta}\leq \tilde{f}(x).$ Therefore, 
\begin{align*}
  \tilde{F}_+'(x)=\lim\limits_{\delta\downarrow 0}\frac{\tilde{F}(x+\delta)-\tilde{F}(x)}{\delta}=\lim\limits_{\delta\downarrow 0}\frac{\int_x^{x+\delta}\tilde{f}(s)ds}{\delta}=\tilde{f}(x).
\end{align*}

  Based on Theorem 1.2 of \citet{dharmadhikariUnimodalityConvexityApplications1988}, $\tilde{F}$ is a generalized mixture of the distribution functions $W_{a,z}$ where $W_{a,z}$ denote the uniform distribution function on $(a,a+z),z>0$. The reason that we do not consider $z\leq 0$ is that $\tilde{F}(x)=0$ for $x<a$ and $\tilde{F}$ is absolutely continuous on $x\geq a$. In particular, we have 
  \begin{align*}
  \tilde{F}(x) = \int_{0+}^{\infty} W_{a,z}(x)dQ(z)
  \end{align*}
  where $Q$ is a probability measure on $(0,\infty)$.  

  We then have
  \begin{align}
  \tilde{f}(x)&=\tilde{F}_+'(x)=\lim_{\delta_n\downarrow 0}\frac{\tilde{F}(x+\delta_n)-\tilde{F}(x)}{\delta_n}=\lim_{\delta_n\downarrow 0}\int_{0+}^\infty\frac{W_{a,z}(x+\delta_n)-W_{a,z}(x)}{\delta_n}dQ(z)\label{exchange_sum_limit_11}\\
  &=\lim_{\delta_n\downarrow 0}\int_{0+}^{(x-a)_-}\frac{W_{a,z}(x+\delta_n)-W_{a,z}(x)}{\delta_n}dQ(z)\notag\\
  &+\lim_{\delta_n\downarrow 0}\frac{W_{a,z}(x+\delta_n)-W_{a,z}(x)}{\delta_n}Q(x-a)+\lim_{\delta_n\downarrow 0}\int_{(x-a)_+}^{\infty}\frac{W_{a,z}(x+\delta_n)-W_{a,z}(x)}{\delta_n}dQ(z)\label{exchange_sum_limit_12}\\
  &=\lim_{\delta_n\downarrow 0}\frac{W_{a,x-a}(x+\delta_n)-W_{a,x-a}(x)}{\delta_n}Q(z=x-a)+\int_{(x-a)_+
  }^{\infty}\frac{1}{z}dQ(z)\label{mono_convergence_key_step_1},
  \end{align}
  and $\lim\limits_{\delta_n\downarrow 0} \frac{W_{a,x-a}(x+\delta_n)-W_{a,x-a}(x)}{\delta_n}=0$. 
  Hence we have $\tilde{f}(x)=\int_{(x-a)_+}^{\infty}\frac{1}{z}dQ(z)$. 

  The exchange of limit and summation follows from observing that the limit in the equality (\ref{exchange_sum_limit_11}) and the first two terms of (\ref{exchange_sum_limit_12}) exist. Since $0\leq \frac{W_{a,z}(x+\delta_n)-W_{a,z}(x)}{\delta_n}\leq \frac{W_{a,z}(x+\delta_{n+1})-W_{a,z}(x)}{\delta_{n+1}}$, the exchange of limit and integration in the equality (\ref{mono_convergence_key_step_1}) follows from monotone convergence theorem. 

  It then concludes with 
  \begin{align*}
 & \int_{0+}^{\infty} \frac{\mathbb{I}(a\leq x<a+z)}{z}dQ(z)=\int_{(x-a)_+}^{\infty} \frac{1}{z}dQ(z)=\tilde{f}(x)\\
  \implies &\quad f(x) = \beta\int_{0+}^{\infty} \frac{\mathbb{I}(a\leq x<a+z)}{z}dQ(z)
  \end{align*}

  $(2)\implies (1):$ 

  Since 
  \begin{align*}
  \int_a^\infty f(x)dx &= \int_a^\infty\beta\int_{0+}^{\infty} \frac{\mathbb{I}(a\leq x<a+z)}{z}dQ(z)dx = \beta\int_{0+}^{\infty}\int_a^\infty \frac{\mathbb{I}(a\leq x<a+z)}{z}dxdQ(z)\\
  &=\beta\int_{0+}^{\infty}1dxdQ(z)=\beta.
  \end{align*}

  For any $x_2\geq x_1\geq a$, we have $\forall z>0,\mathbb{I}(a\leq x_1<a+z)\geq \mathbb{I}(a\leq x_2<a+z)$. It is easy to obtain that 
  \begin{align*}
  f(x_1)&= \beta\int_{0+}^{\infty} \frac{\mathbb{I}(a\leq x_1<a+z)}{z}dQ(z)\geq \beta\int_{0+}^{\infty} \frac{\mathbb{I}(a\leq x_2<a+z)}{z}dQ(z)=f(x_2).
  \end{align*}

  Finally, $f(x)$ is right continuous for $x\geq a$ because $\forall x\geq a$,
  \begin{align*}
  \ \lim\limits_{\delta \downarrow 0} f(x+\delta)=&\lim\limits_{\delta \downarrow 0}  \beta\int_{0+}^{\infty} \frac{\mathbb{I}(a\leq x+\delta<a+z)}{z}dQ(z)\\
  &=\beta\int_{0+}^{\infty} \lim\limits_{\delta \downarrow 0}\frac{\mathbb{I}(a\leq x+\delta <a+z)}{z}dQ(z)\\
  &=\beta\int_{0+}^{\infty} \frac{\mathbb{I}(a\leq x<a+z)}{z}dQ(z)=f(x).
\end{align*}

Since $\mathbb{I}(a\leq x<a+z)$ is a right continuous function and $\big|\frac{\mathbb{I}(a\leq x<a+z)}{z}\big|$ is bounded by $Q$-integrable function $\frac{1}{z}$. The integrability of $\frac{1}{z}$ is from the fact that $f(a)=\beta \int_{0+}^{\infty} \frac{1}{z}dQ(z)<\infty$. The exchange of integration and limit follows from dominated convergence theorem.
\end{proof}
\end{document}